\title{Latroids and code invariants}
\author{}
\author{Elisa Gorla and Flavio Salizzoni}
\date{}
\theoremstyle{definition}
\newtheorem{theorem}{Theorem}[section]
\newtheorem{proposition}[theorem]{Proposition}
\newtheorem{lemma}[theorem]{Lemma}
\newtheorem{definition}[theorem]{Definition}
\newtheorem{example}[theorem]{Example}
\newtheorem{remark}[theorem]{Remark}
\newtheorem{corollary}[theorem]{Corollary}
\newcommand{\N}{\mathbb N}
\newcommand{\Z}{\mathbb Z}
\newcommand{\F}{\mathbb F}
\newcommand{\K}{\mathbb K}
\newcommand{\Cc}{\mathcal C}
\newcommand{\Dd}{\mathcal D}
\newcommand{\Ll}{\mathcal L}
\newcommand{\wt}{\mathrm{wt}}
\newcommand{\maxwt}{\mathrm{maxwt}}
\newcommand{\supp}{\mathrm{supp}}
\DeclareMathOperator{\minwt}{minwt}
\DeclareMathOperator{\hgt}{ht}
\providecommand{\keywords}[1]
{
  \textbf{\textit{Key words: }} #1
}
\providecommand{\subclas}[1]
{
  \textbf{\textit{Mathematics Subject Classification: }} #1
}
\begin{document}

\maketitle
\begin{abstract}
Latroids were introduced by Vertigan, who associated a latroid to a linear block code and showed that its Tutte polynomial determines the weight enumerator of the code. The original definition of a latroid is in terms of its rank function. For a complemented lattice, we establish cryptomorphic definitions in terms of independent elements, bases, circuits, and flats.
We then associate a latroid to a code over a ring or a field endowed with a general support function and show that the generalized weights of the code can be recovered from the associated latroid. This provides a uniform framework for studying generalized weights and other combinatorial invariants of linear block codes, linear codes over a ring, rank-metric, and sum-rank metric codes. 
\end{abstract}
\tableofcontents

\,

\noindent\keywords{Latroids, Tutte polynomial, weight enumerator, support, linear codes}

\noindent\subclas{94B05, 05B35, 06C20}

\section*{Introduction}

An effective way to study linear codes is to establish connections between them and various algebraic or combinatorial objects that partially capture their structure and their basic properties. Understanding which invariants can be determined from these objects is nowadays a question of great interest within the coding theory community.

In this respect, the most studied case is certainly that of linear block codes. Indeed, since they are finite dimensional vector spaces over finite fields, it is common knowledge that we can associate a matroid to them \cite[Chapter 1]{oxleybook}. In~\cite{greene1976weight} Green showed how the weight enumerator of a linear block code is determined by the Tutte polynomial of the associated matroid. More recently, in~\cite{jurrius2009extended} Jurrius and Pellikaan proved that the Tutte polynomial of the matroid carries the same information as the generalized weight enumerator or the extended weight enumerator, i.e., either one of them determines the others. Britz independently proved a similar result in~\cite{britz2010code}. Starting from the circuits of the matroid, we can also associate a monomial ideal. In~\cite{johnsen2013hamming}, Johnsen and Verdure proved that the generalized Hamming weights of a linear block code are determined by the $\mathbb{N}$-graded Betti numbers of the associated ideal. This relation was further studied in~\cite{garcia2022free,ghorpade2020pure,gorla2022generalized,johnsen2016generalization,johnsen2014stanley,johnsen2021mobius}, among others. 

In~\cite{jurrius2009qmatroid}, Jurrius and Pellikaan showed how to associate a $q$-matroid to a vector rank-metric code. This topic was further investigated in~\cite{gorla2020rank}, where the authors introduced the $q$-analog of a polymatroid and showed how to associate a $q$-polymatroid to a rank-metric code. Among other results, they proved that the generalized rank-weights are determined by the associated $q$-polymatroid. In~\cite{johnsen2023weight}, Johnsen, Pratihar, and Verdure were able to express the generalized rank-weights of a vector rank-metric code in terms of the Betti numbers of a monomial ideal constructed from the associated $q$-matroid. Recently, in~\cite{panja2023some}, Panja, Pratihar, and Randrianarisoa introduced sum-matroids and associated one to each sum-rank metric code.

In~\cite{gorla2022generalized}, Gorla and Ravagnani studied linear codes over rings endowed with a support function. They extended the definition of generalized weights to this setting and proved that generalized weights can be recovered from a monomial ideal associated to the code via the support. However, as we observe in this paper, this ideal often does not contain enough information to recover the weight enumerator of the code. In the case of Hamming support, in~\cite{vertigan2004latroids} Vertigan proved that the weight enumerator of a linear code over a ring is determined by the Tutte polynomial of the associated latroid.  In this paper, we show that one can associate a latroid to a linear code whose support function takes values in a lattice. This includes, e.g., the case of linear block codes, supported on sets, that of rank-metric codes, supported on vector spaces, and that of sum-rank metric codes, supported on products of vector spaces. In addition, notice that the $(\Ll,r)$-polymatroids introduced by Alfarano and Byrne in~\cite{alfarano2023critical} are a special case of $\Z$-latroids. Similar ideas were also developed independently in \cite{pratihar2024lattice}. In general, latroids allow us to treat many families of codes with a unified approach, as they recover matroids, $q$-matroids,  $q$-polymatroids, and sum-matroids as special cases. Because of this, they are a powerful tool for streamlining our approach to the study of linear codes.

The paper is organized as follows.
In Section~\ref{section:prel}, we recall some basic definition on lattices, and the general theory of supports for linear codes over rings from~\cite{gorla2022generalized}. We recall the definition of generalized weights in this context and show that they are invariant under code equivalence. We also give a new, natural definition of generalized weights  for linear codes over rings. As a side remark, we provide a classification of code equivalences. In Section~\ref{section:latroid}, we recall the definition of a latroid and study the basic properties of latroids. In particular, we focus on finite complemented modular lattices, and show that in this situation one can define a latroid using independent sets, bases, circuits, or flats, as in the matroid case. Section~\ref{section:linear} is devoted to linear codes over rings and their associated latroid. In particular, we prove that the weight enumerator of a linear code endowed with the chain support can be determined from the Tutte-Whitney rank generating function of the associated latroid. In the last part of this section, we make some remarks on the associated monomial ideal. Finally, in the last section we discuss how to obtain the $q$-matroid, $q$-polymatroid, or sum-matroid associated to a code from the latroid associated to the same code.

\section{Notation and preliminaries}\label{section:prel}

Throughout the paper, we work over a finite field $\F_q$ of cardinality $q$ or over a finite principal ideal ring $R$.
We denote by $\mathcal{P}(X)$ the power set of a set $X$.

\subsection{Lattices}

A partially ordered set $(\mathcal{L},\leq)$ is called a {\bf lattice} if every pair of elements $L_1,L_2\in\mathcal{L}$ has a least upper bound ({\bf join}), denoted by $L_1\vee L_2$, and a greatest lower bound ({\bf meet}), denoted by $L_1\vee L_2$. A lattice $\mathcal{L}$ is {\bf bounded} if it has a maximum $1_{\Ll}$ and a minimum $0_{\Ll}$. A nonzero element $L\in\Ll$ is called an {\bf atom} if it is minimal. If $L_1\leq L_2\in\Ll$, we say that $L_2$ {\bf dominates} $L_1$. If $L_1\neq L_2$ and there is no $L\in\Ll$ such that $L_1< L< L_2$, we say that $L_2$ {\bf covers} $L_1$.
We denote by $[L_1,L_2]$ the {\bf interval} between $L_1$ and $L_2$, that is, the sublattice $\{L\in\Ll:L_1\leq L\leq L_2\}$ of $\Ll$. 

We are mainly interested in modular lattices.

\begin{definition}
A lattice $\mathcal{L}$ is called {\bf modular} if for every $L_1,L_2,L_3\in\mathcal{L}$ with $L_1\leq L_3$ we have 
$$L_1\vee(L_2\wedge L_3)= (L_1\vee L_2)\wedge L_3.$$
\end{definition}

The set of normal subgroups of a group, the set of subspaces of a vector space, and the set of submodules of a module are examples of modular lattices. When the operations of join and meet distribute over each other, the lattice is called {\bf distributive}. Every distributive lattice is modular. A typical example of a distributive lattice is a power set with union and intersection as join and meet.

\begin{definition}
A {\bf complemented lattice} $\Ll$ is a bounded lattice in which for every $L_1\in\Ll$ there exists $L_2\in\Ll$ such that 
$$L_1\wedge L_2=0_{\Ll}\text{ and }L_1\vee L_2=1_{\Ll}.$$
A lattice $\mathcal{L}$ is called {\bf relatively complemented} if every interval in $\Ll$ is complemented, i.e., for every $L_1,L_2,L_3\in\mathcal{L}$ such that $L_1\leq L_2\leq L_3$  there exists $\bar L\in\mathcal{L}$ such that 
$$L_2\wedge \bar L=L_1\text{ and }L_2\vee \bar L=L_3.$$
\end{definition}

An example of a complemented lattice is the set of vector subspaces of $K^n$ ordered by inclusion, where $K$ is a field. If $R$ is a finite principal ideal ring that is not isomorphic to a product of fields, e.g. if $R$ is a finite chain ring that is not a field, then the lattice of ideals of $R$ ordered by inclusion and the lattice of submodules of $R^n$ ordered by inclusion are not complemented.

A complemented lattice is relatively complemented if it is modular.

\begin{definition}
A bounded lattice $\mathcal{L}$ is {\bf graded} if there exists a function $\hgt:\mathcal{L}\rightarrow \Z$ such that $\hgt(0_{\mathcal{L}})=0$ and $\hgt(L)+1=\hgt(M)$ for all $L,M\in\mathcal{L}$ such that $M$ covers $L$. This function is unique and is called the {\bf height function} of $\mathcal{L}$.
\end{definition}

Recall that a finite lattice is modular if and only if it is graded and its height function is {\bf modular}, i.e., it satisfies $\hgt(L)+\hgt(M)=\hgt(L\vee M)+\hgt(L\wedge M)$ for every $L,M\in\Ll$.  The following classical result gives a complete classification of finite complemented modular lattices. It may help to understand in what generality the results of Subsection~\ref{section:crypto} apply.

\begin{theorem}[{\cite[Theorem 7.56]{cameron2008introduction}}]
    Let $\Ll$ be a finite complemented modular lattice. Then, $\Ll$ is the direct product of a finite number of lattices of the form:
    \begin{enumerate}
        \item $\Ll'=\{0,1\}$,
        \item a proper line,
        \item a proper projective plane,
        \item a subspace lattice of a finite dimensional vector space over a finite field.
    \end{enumerate}
\end{theorem}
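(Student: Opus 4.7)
The plan is to reduce the classification to the case of directly indecomposable lattices and then identify each indecomposable factor by its height. First I would invoke the standard decomposition theorem for modular lattices: the \emph{center} of a finite complemented modular lattice $\Ll$, consisting of the neutral complemented elements, forms a finite Boolean algebra, and $\Ll$ factors as the direct product $\prod_a [0_\Ll,a]$ over the atoms $a$ of this Boolean algebra. Each factor is again complemented and modular (both properties are inherited on intervals of modular lattices) and is directly indecomposable. This reduces the problem to classifying the directly indecomposable finite complemented modular lattices.

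Next I would stratify the indecomposable cases by the height $r=\hgt(1_\Ll)$. For $r=1$ the only possibility is $\{0_\Ll,1_\Ll\}$, giving case~(1). For $r=2$ the lattice consists of $0_\Ll$, $1_\Ll$, and a set of atoms, each of which is also a coatom; any two distinct atoms are complements, and indecomposability forbids having exactly two atoms (which would factor as $\{0,1\}^2$). Such a lattice is a proper line, yielding case~(2).

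For $r\geq 3$ I would invoke the coordinatization of projective geometries. The atoms together with the rank-$2$ elements of an indecomposable complemented modular lattice of height $r$ form a combinatorial projective space of projective dimension $r-1$, and $\Ll$ is isomorphic to its lattice of flats. By the Veblen--Young theorem, every projective space of dimension at least $3$ is Desarguesian and hence arises from a vector space over a division ring; since $\Ll$ is finite, Wedderburn's little theorem forces this division ring to be a finite field, producing case~(4). In dimension $2$ one obtains either a Desarguesian plane (again case~(4), with $r=3$) or a proper projective plane (case~(3)).

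The main obstacle is the step $r\geq 3$: one must extract, purely from the lattice axioms, the Veblen--Young incidence axioms (two points lie on a unique line, and the Pasch/Veblen condition relating two intersecting lines) for the atom/line geometry of $\Ll$, so that the coordinatization theorem applies. This is where the interplay between the modular identity and the complementation hypothesis is most delicate, since it is exactly what yields the projective exchange property and guarantees that the join of two atoms meets any other rank-$2$ element in an atom when the heights force incidence. All remaining steps are formal consequences of the modular identity and the center-based direct decomposition.
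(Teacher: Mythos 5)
The paper does not prove this statement; it is quoted without proof as Theorem~7.56 of Cameron's textbook, so there is no in-paper argument to compare yours against. Your outline is nevertheless the standard route and is essentially correct. Reducing to directly indecomposable factors via the center (the Boolean algebra of neutral complemented elements) is the classical first step; stratifying by height then cleanly separates the chain $\{0,1\}$, the proper lines, and the projective geometries. You correctly distinguish projective dimension~$2$ (where non-Desarguesian planes must be kept as a separate case) from projective dimension~$\geq 3$ (where Veblen--Young plus Wedderburn's little theorem collapse everything to subspace lattices over finite fields). Two small remarks. First, you should also account for the degenerate factor of height~$0$, i.e.\ the one-element lattice, which is the empty direct product and is silently absorbed in the statement. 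Second, your honest flagging of the obstacle is exactly the non-formal content of the theorem: showing that the atoms and rank-$2$ elements of an indecomposable complemented modular lattice satisfy the Veblen--Young incidence axioms (in particular the Pasch axiom) is where modularity and complementation actually do work, and any complete proof must carry that step out; as written it is a black box rather than an argument. That said, as a roadmap your proposal matches the proof one would find in Cameron's book or in standard references on modular lattices and projective geometry.
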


\begin{definition}
A finite lattice $\mathcal{L}$ is {\bf atomistic} if every $L\in\Ll\setminus \{0_{\Ll}\}$ is a join of atoms. It is {\bf semimodular} if, for $L_1,L_2\in\Ll$ such that $L_1$ and $L_2$ cover $L_1\wedge L_2$, $L_1\vee L_2$ covers $L_1$ and $L_2$. A finite lattice is {\bf geometric} if it is atomistic and semimodular.
\end{definition}

We conclude this subsection with the definition of an ordered abelian group.

\begin{definition}
    An {\bf ordered abelian group} is a triple $(A,+,\leq)$, where $(A,+)$ is an abelian group and $\leq$ is a partial order on $A$ such that for all $a_1,a_2,a_3\in A$ $a_1\leq a_2$ implies $a_1+a_3\leq a_2+a_3$. In particular, we have the following:
    \begin{enumerate}
        \item $a_1\leq a_2$ if and only if $0\leq a_2-a_1$,
        \item if $a_1,a_2\geq 0$, then $a_1+a_2\geq 0$.
    \end{enumerate}
\end{definition}

\begin{example}
    We are mainly interested in the ordered abelian group\index{ordered group} $(\Z^u,+,\leq)$ with $u\in\N$, where $(a_1,\dots, a_u)\leq(a_1',\dots a_u')$ if and only if $a_i\leq a_i'$ for $i\in[u]$ in the usual order on $\Z$. 
\end{example}

\subsection{Supports of $R$-linear codes}

Even though many definitions and results that we will discuss throughout the paper hold for infinite commutative rings, for the sake of simplicity, we restrict ourselves to finite rings. In the sequel, we let $R$ be a finite unitary commutative ring. For a finitely generated $R$-module $M$, we denote by $\lambda(M)$ its length and by $\mu(M)$ the least cardinality of a (minimal) system of generators of $M$. By convention, we have $\mu(0)=0$. 

\begin{definition}
An {\bf $R$-linear code} $\Cc$ is an $R$-submodule of $R^n$. 
\end{definition}

Gorla and Ravagnani formulated a general theory of supports over rings in~\cite{gorla2022generalized}. Here we recall what is needed for our purposes.

\begin{definition}\label{definition:ringsupport}
    A {\bf support\index{support!ring}} on $R^n$ is a function $\supp:R^n\rightarrow \Z^u$ such that:
    \begin{enumerate}
        \item $\supp(v)=0$ if and only if $v=0$.
        \item $\supp(rv)\leq\supp (v)$ for all $r\in R$ and $v\in R^n.$
        \item $\supp(v+w)\leq \supp(v)\vee\supp(w)$ for all $v,w\in R^n$ 
    \end{enumerate}
    A support is called {\bf modular\index{support!modular}} if it satisfies the following property:
    \begin{enumerate}
        \item[4.] If $v,w\in R^n$ and $i\in[u]$ satisfy $\supp (v)_i\leq \supp(w)_i$, then there exists $r\in R$ such that $\supp(v+rw)_i<\supp(v)_i$.
    \end{enumerate}
\end{definition}

Notice that the Hamming support is modular. An example of a support on $\F_q^n$ that is not modular for $n\geq 2$ is given by the function $\tau:\F_q^n\rightarrow \Z$ that maps the zero vector to $0$ and every other vector to $1$.

A support $\supp:R^n\rightarrow \Z^u$ naturally induces a function $\supp:\mathcal{P}(R^n)\rightarrow\Z^u$, defined as ${\supp}(X)=\bigvee_{x\in X}\supp(x)$ for $X\in \mathcal{P}(R^n)$. Due to the properties of the support, if $X=\langle x_1,\ldots,x_\ell\rangle$, then ${\supp}(X)=\supp(x_1)\vee\ldots\vee\supp(x_\ell)$

In coding theory, the notion of support is closely related to that of weight. The Hamming support, for example, gives rise to the Hamming weight on $\F_q^n$. 

\begin{definition}\label{definition:weightring}
The {\bf weight\index{weight!ring}} of $v\in R^n$ with respect to $\supp$ is the $1$-norm of the support of $v$, i.e. $\wt(v)=\lvert\supp(v)\rvert$. The weight of an $R$-linear code $\Cc$ is defined as $\wt(\Cc)=\lvert \supp(\Cc)\rvert$. The {\bf minimum} and the {\bf maximum weight} of a code $0\neq\Cc\subseteq R^n$ are, respectively,
$$\minwt(\Cc)=\min\left\{\wt(v):v\in\Cc\setminus 0\right\}\text{ and }\maxwt(\Cc)=\max\left\{\wt(v):v\in\Cc\right\}.$$
\end{definition} 

One can easily see that the weight defined above is an invariant weight function, but it is not always homogeneous. We refer to~\cite[Section 2]{greferath2013characteristics} for the relevant definitions.

Notice that there exist weights of interest to the coding theory community, whose corresponding ``support" does not satisfy the condition of Definition~\ref{definition:ringsupport}. For example, the support $\supp_{\mathrm{L}}:\Z_4\rightarrow \Z$, associated to the Lee weight $\wt_{\mathrm{L}}:\Z_4\rightarrow \Z$, is given by $\supp_{\mathrm{L}}(0)=0$, $\supp_{\mathrm{L}}(1)=\supp_{\mathrm{L}}(3)=1$, and $\supp_{\mathrm{L}}(2)=2$. This is not a support according to Definition~\ref{definition:ringsupport}. In fact, it does not satisfy the second condition, since $\supp_{\mathrm{L}}(2)=\supp_{\mathrm{L}}(2\cdot 1)>\supp_{\mathrm{L}}(1)$. 

Recall that if $R$ is a finite ring, then there exist $R_1,\dots,R_{\ell}$ finite local rings such that $R\cong R_1\times\dots\times R_{\ell}$, see~\cite[Theorem 8.7]{atiyah2018introduction}. In particular, if $R$ is a principal ideal ring, then $R_1,\dots,R_{\ell}$ are also principal ideal rings. 
Abusing the notation, from now on we will write $R=R_1\times\dots\times R_n$. Similarly, we will write $R^n=R_1^n\times\dots\times R_{\ell}^n$ and $\Cc=\Cc_1\times\dots\times\Cc_{\ell}$, respectively, instead of $R^n\cong R_1^n\times\dots\times R_{\ell}^n$ and $\Cc\cong\Cc_1\times\dots\times\Cc_{\ell}$.
A finite local commutative principal ideal ring is often called a {\bf finite chain ring\index{finite chain ring}}. If $R$ is a finite chain ring, then any element $r\in R$ is of the form $r=\upsilon\alpha^k$, where $\upsilon$ is an invertible element and $\alpha$ is a generator of the maximal ideal of $R$. 
The next result allows us to reduce the study of supports of rings to that of supports of local rings.

\begin{proposition}[{\cite[Theorem 2.23]{gorla2022generalized}}]\label{proposition:supportsplits}
Let $\supp:R^n\rightarrow \Z^u$ be a modular support. Up to a permutation of the coordinates of $\Z^u$ we have that $\supp=\supp_1\times\dots\times\supp_{\ell}$, where $\supp_i:R^{n}_i\rightarrow \Z^{u_i}$ for $i\in[\ell]$ and $u_i\in\N$ with $u_1+\dots+u_{\ell}=u$. Moreover, $\supp_i$ is a modular support for all $i\in[\ell]$.
\end{proposition}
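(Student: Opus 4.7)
The plan is to exploit the complete orthogonal system of central idempotents $e_1,\dots,e_\ell\in R$ associated with the decomposition $R=R_1\times\cdots\times R_\ell$. Every $v\in R^n$ admits a unique expression $v=e_1v+\cdots+e_\ell v$, and under the identification $R_i^n = e_i R^n$ we regard each $e_i v$ as the $i$-th component of $v$. The first step is to show that
\[
\supp(v)=\bigvee_{i=1}^\ell \supp(e_i v):
\]
an inductive use of axiom (3) yields $\leq$, while axiom (2) applied with scalar $e_i$ gives $\supp(e_i v)\leq \supp(v)$ for each $i$, hence the reverse inequality.

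The heart of the argument is to prove that the sets
\[
I_i:=\{\,j\in[u]:\ \supp(w)_j\neq 0 \text{ for some } w\in e_i R^n\,\}
\]
are pairwise disjoint. Suppose for contradiction that $j\in I_{i_1}\cap I_{i_2}$ with $i_1\neq i_2$, and pick $w_1\in e_{i_1}R^n$, $w_2\in e_{i_2}R^n$ with $\supp(w_1)_j,\supp(w_2)_j>0$. These are nonnegative integers, so without loss of generality $\supp(w_1)_j\leq \supp(w_2)_j$; modularity (axiom (4)) then supplies $r\in R$ with $\supp(w_1+rw_2)_j<\supp(w_1)_j$. On the other hand $e_{i_1}(w_1+rw_2)=w_1$, since $e_{i_1}w_1=w_1$ and $e_{i_1}rw_2=r\,e_{i_1}e_{i_2}w_2=0$. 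Axiom (2) with scalar $e_{i_1}$ therefore yields $\supp(w_1)\leq \supp(w_1+rw_2)$, contradicting the strict inequality in coordinate $j$. This disjointness step is the main obstacle and is precisely where the modular hypothesis is indispensable; without it one has no means of forcing a coordinate of the support to decrease.

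With disjointness established, after a permutation of $[u]$ (absorbing any coordinate on which $\supp$ vanishes identically into one of the $I_i$ arbitrarily) we identify $\Z^u$ with $\Z^{u_1}\times\cdots\times \Z^{u_\ell}$ where $u_i=|I_i|$, so that each $\supp(e_iv)$ lives in the $i$-th factor. Defining $\supp_i:R_i^n\to\Z^{u_i}$ as the induced map, the first step becomes $\supp(v)=(\supp_1(v_1),\dots,\supp_\ell(v_\ell))$. Axioms (1)--(3) for each $\supp_i$ follow by restriction from those of $\supp$. For modularity, given $v,w\in R_i^n$ and a coordinate $j$ with $\supp_i(v)_j\leq \supp_i(w)_j$, apply axiom (4) of $\supp$ (viewing $v,w$ in $R^n$) to get $r\in R$ witnessing the conclusion, and replace $r$ by $e_i r\in R_i$: the equality $(e_i r)w=rw$ holds because $w=e_i w$, so the witness lies in $R_i$ as required.
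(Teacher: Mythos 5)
Your proof is correct. Note that the paper itself does not prove this proposition; it cites it from \cite[Theorem 2.23]{gorla2022generalized}, so there is no in-paper argument to compare against. On its own merits your argument is sound and well organized: the join identity $\supp(v)=\bigvee_i\supp(e_iv)$ follows from axioms (2) and (3) exactly as you describe; the disjointness of the coordinate sets $I_i$ is the crux, and your contradiction is clean --- after modularity produces $r$ with $\supp(w_1+rw_2)_j<\supp(w_1)_j$, the identity $e_{i_1}(w_1+rw_2)=w_1$ (using $e_{i_1}w_1=w_1$ and $e_{i_1}e_{i_2}=0$) together with axiom (2) forces $\supp(w_1)\leq\supp(w_1+rw_2)$, a contradiction in coordinate $j$; and modularity of each $\supp_i$ is correctly inherited by replacing the witness $r$ with $e_ir\in R_i$, which acts identically on $w\in R_i^n$. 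One minor point worth a sentence in a final write-up: you should observe that $\supp(v)\geq 0$ for all $v$ (this follows from axiom (2) with $r=0$), so the coordinates outside $\bigcup_i I_i$ are genuinely the ones on which $\supp$ is identically zero, and axiom (1) for each $\supp_i$ then follows from the fact that the join identity concentrates the nonzero coordinates of $\supp(e_iv)$ inside $I_i$.
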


Let $\supp_1:R^{n_1}\rightarrow \Z^{u_1}$ and $\supp_2:R^{n_2}\rightarrow \Z^{u_2}$ be supports. It is easy to see that the product $\supp_1\times\supp_2$ is a  support from $R^{n_1+n_2}$ to $\Z^{u_1+u_2}$. Moreover, the product is modular whenever the factors are. A support is called standard if it can be decomposed as a product of supports, each defined on a single copy of $R$.

\begin{definition}
A support $\supp:R^n\rightarrow \Z^u$ is {\bf standard\index{support!standard}} if for each $i\in[n]$ there exist $u_i\in\N$ and a support $\supp_i:R\rightarrow \Z^{u_i}$ such that $\supp(r_1,\dots,r_n)=(\supp_1(r_1),\dots,\supp_n(r_n))$, up to permuting the coordinates of $\Z^u$.    
\end{definition}

Notice that, for a standard support $\supp$, one has
$$\supp(r_1,\dots,r_n)=\supp(r_1,0,\dots,0)\vee \dots\vee\supp(0,\dots,0,r_n).$$
We will be interested in a specific standard modular support for finite chain rings introduced in~\cite[Example 26]{ravagnani2018duality} and defined as follows.

\begin{definition}\label{definition:chainsupport}
Let $R$ be a finite chain ring with maximal ideal $(\alpha)$. Let $k$ be the smallest positive integer such that $\alpha^k=0$. Let $\overline{\supp}:R\rightarrow \Z$ be the support function given by $$\overline{\supp}(r)=\min\left\{0\leq i\leq k: r\in\left(\alpha^{k-i}\right)\right\}=\lambda(r),$$
for every $r\in R$, where $\lambda(r)$ is the length of the ideal generated by $r$. The support $\supp=\overline{\supp}\times\dots\times\overline{\supp}:R^n\rightarrow\Z^n$ is called the {\bf chain support\index{support!chain}} on $R^n$.
\end{definition}

\subsection{Generalized weights}

Let $\Cc\subseteq R^n$ be an $R$-linear code. Since $R=R_1\times\dots\times R_{\ell}$ with $R_i$ finite local ring for all $i\in[\ell]$, we have that $\Cc=\Cc_1\times\dots\times\Cc_{\ell}$,  where $\Cc_i\subseteq R_i^n$ is the projection $\pi_i(\Cc)$ of $\Cc$ on the $i$-th factor of $R^n=R_1^n\times\dots\times R_{\ell}^n$ for all $i\in[\ell]$. Recall that $\mu(\Cc)$ denotes the least cardinality of a system of generators of a code $\Cc$. For a code $\Cc=\Cc_1\times\dots\times\Cc_{\ell}\subseteq R^n$, we set $M(\Cc):=\mu(\Cc_1)+\dots+\mu(\Cc_{\ell})$. We now have all the necessary elements to state the definition of generalized weights of an $R$-linear code with respect to a support $\supp$, as was given in~\cite{gorla2022generalized}.

\begin{definition}\label{definition:genweiring}
    For $r\in [M(\Cc)]$, the {\bf $r$-th generalized weight\index{generalized weights!rings}} of $\Cc$ is given by
    $$d_r(\Cc)=\min\left\{\wt(\Dd):\Dd\in S_j(\Cc)\text{ for some }j\geq r\right\},$$
    where $S_j(\Cc)=\{\Dd\subseteq\Cc: \Dd\text{ is a subcode with a minimal system of generators of cardinality }j\}$.
\end{definition}

Notice that the previous definition is well-posed, since $S_j(\Cc)\neq\emptyset$ for $j\in[M(\Cc)]$ as proved in~\cite[Theorem 1.8]{gorla2022generalized}. When $R$ is a finite field, the cardinality of a minimal system of generators coincides with the dimension of the subcode that they generate. Therefore, Definition~\ref{definition:genweiring} extends the classical definition of generalized weights of linear block codes. When $R$ is a ring, however, a code may have minimal systems of generators of different cardinalities, see e.g.~\cite[Example 1.6]{gorla2022generalized}. The next proposition collects some basic properties of generalized weights.

\begin{proposition}[{\cite[Lemma 2.12]{gorla2022generalized}}]\label{proposition:basicpropgwr}
    Let $D\subseteq C\subseteq R^n$ be two $R$-linear codes. Then
    \begin{enumerate}
        \item $d_1(\Cc)=\minwt(\Cc)$,
        \item $d_r(\Dd)\geq d_r(\Cc)$ for $r\in[\min\{M(\Dd),M(\Cc)\}]$,
        \item $d_{r+1}(\Cc)\geq d_{r}(\Cc)$ for $r\in[M(\Cc)-1]$,
        \item $d_r(\Cc)=\min\{\lvert\supp(\Dd)\rvert:D\subseteq\Cc\text{ and }M(\Dd)\geq r\}$ for $r\in [M(\Cc)]$.
    \end{enumerate}
\end{proposition}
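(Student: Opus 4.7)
The plan is to treat items~(1)--(3) as direct consequences of the definition of $d_r$ and the axioms of a support, and to identify~(4) as the substantive step, whose proof relies on a structural property of minimal generating sets over $R=R_1\times\cdots\times R_\ell$.

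For~(1), any nonzero $v\in\Cc$ generates a cyclic subcode $\langle v\rangle$, and $\{v\}$ is a minimal generating set of size $1$, so $\langle v\rangle\in S_1(\Cc)$. By axiom~(2) of a support we have $\supp(rv)\leq\supp(v)$ for every $r\in R$, hence $\supp(\langle v\rangle)=\supp(v)$ and $\wt(\langle v\rangle)=\wt(v)$; minimizing over $v\in\Cc\setminus 0$ yields $d_1(\Cc)\leq\minwt(\Cc)$. Conversely, any $\Dd\in S_j(\Cc)$ with $j\geq 1$ is nonzero, so contains some nonzero $v$ with $\supp(v)\leq\supp(\Dd)$, giving $\minwt(\Cc)\leq d_1(\Cc)$.

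Items~(2) and~(3) follow from monotonicity of the minimum: for~(2), subcodes of $\Dd$ are also subcodes of $\Cc$, so $S_j(\Dd)\subseteq S_j(\Cc)$ for each $j$ and $d_r(\Dd)$ is a minimum over a subset of the set defining $d_r(\Cc)$; for~(3), the condition $j\geq r+1$ is stricter than $j\geq r$, which shrinks the indexing set.

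For~(4), the task reduces to showing the equality of the sets
$$T_1=\{\Dd\subseteq\Cc \,:\, \Dd\in S_j(\Cc)\text{ for some }j\geq r\}\quad \text{and}\quad T_2=\{\Dd\subseteq\Cc \,:\, M(\Dd)\geq r\},$$
since the quantity $\wt(\cdot)=|\supp(\cdot)|$ being minimized is the same in both expressions. For $T_2\subseteq T_1$, one builds a minimal generating set of $\Dd=\Dd_1\times\cdots\times\Dd_\ell$ of size $M(\Dd)$ by taking minimal generators of each $\Dd_i$ in $R_i^n$ and embedding them in $R^n$ by padding with zeros in the other factors; minimality holds because removing any element would leave the corresponding $\Dd_i$ generated by $\mu(\Dd_i)-1$ elements, contradicting Nakayama's lemma in the local ring $R_i$. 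For $T_1\subseteq T_2$ one needs the dual cardinality bound: every minimal generating set of $\Dd$ has size at most $M(\Dd)$. This is the main obstacle; I would prove it by observing that, in a minimal generating set $\{v_1,\ldots,v_s\}$ of $\Dd$, each $v_k$ must be essential for at least one factor $\Dd_{i_k}$ (otherwise $\{v_j\}_{j\neq k}$ would still generate $\Dd$), and then using Nakayama in $R_{i_k}$ to conclude that for each $i$ at most $\mu(\Dd_i)$ indices $k$ can be assigned $i_k=i$, hence $s\leq\sum_i\mu(\Dd_i)=M(\Dd)$. With both inclusions in hand, the minima of $\wt$ over $T_1$ and $T_2$ coincide, which is~(4).
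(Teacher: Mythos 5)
This proposition is stated in the paper as a citation to Lemma~2.12 of Gorla--Ravagnani, so the paper itself contains no proof to compare against; what follows is therefore an assessment of your reconstruction on its own terms. Items~(1)--(3) are handled correctly and exactly as one would expect: for~(1) the two inequalities via cyclic subcodes and via any nonzero element of a witnessing subcode are both sound, and~(2)--(3) are formal monotonicity observations about the sets being minimized over. For item~(4) you have correctly isolated the substantive content, namely that a submodule $\Dd$ of $\Cc$ admits a minimal generating set of cardinality $j$ for some $j\geq r$ if and only if $M(\Dd)\geq r$, which amounts to showing that the maximum cardinality of a minimal generating set of $\Dd$ equals $M(\Dd)=\sum_i\mu(\Dd_i)$. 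Your lower bound (padding minimal generating sets of each factor $\Dd_i$ with zeros, and checking minimality factor by factor) is correct since an $R$-submodule of $R^n$ decomposes as $\Dd_1\times\cdots\times\Dd_\ell$ along the idempotents of $R=R_1\times\cdots\times R_\ell$. Your upper bound is also correct: if $\{v_1,\dots,v_s\}$ is minimal, then for each $k$ there is a factor $i_k$ with $\{\pi_{i_k}(v_j):j\neq k\}$ failing to generate $\Dd_{i_k}$, and reducing mod the maximal ideal of the local ring $R_{i_k}$ shows via Nakayama that the ``essential'' images form a linearly independent family in $\Dd_{i_k}/\mathfrak{m}_{i_k}\Dd_{i_k}$, so at most $\mu(\Dd_{i_k})$ indices can be assigned to each factor, giving $s\leq M(\Dd)$. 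The one point worth spelling out in a polished write-up is that this Nakayama argument needs the observation that two distinct indices assigned to the same factor $i$ necessarily have \emph{distinct} projections there (otherwise neither would be essential for that factor), which guarantees that ``essential index'' really does translate to ``essential vector in the fibre.'' With that made explicit, your proof is complete; it recovers the fact that the paper delegates to \cite[Theorem 1.8]{gorla2022generalized} (nonemptiness of $S_j(\Cc)$ for $j\in[M(\Cc)]$) as a byproduct of the lower-bound construction.
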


One of the reasons for the interest in generalized Hamming weights is that they are invariant under code-equivalence. Here, we prove that this is also the case for $R$-linear codes. We start by defining equivalence between $R$-linear codes.

\begin{definition}\label{definition:equivalencesrings}
    An {\bf isometry\index{isometry}} between $R$-linear codes is an $R$-module isomorphism $\varphi:\Cc_1\rightarrow \Cc_2$ that preserves the weight, i.e., such that $\wt(v)=\wt(\varphi(v))$ for all $v\in \Cc_1$.
    Two $R$-linear codes $\Cc_1$ and $\Cc_2$ in $R^n$ are {\bf equivalent\index{equivalence!rings}} if there exists an isometry $\varphi:R^n\rightarrow R^n$ that maps $\Cc_1$ to $\Cc_2$.
\end{definition}

A classical result for the Hamming support states that an isometry from $\F_q^n$ to itself can be expressed as multiplication by a permutation matrix and a diagonal one. In the following, we establish a similar result for codes over principal ideal rings equipped with a standard modular support. We start by considering the case where $R$ is a finite chain ring.

\begin{lemma}\label{lemma:classificationequivrings}
Let $R$ be a finite chain ring, let $\supp=\supp_1\times\dots\times\supp_n$ be a standard modular support on $R^n$, and let $\varphi:R^n\rightarrow R^n$ be an isometry with respect to the weight induced by $\supp$. Then there exist a diagonal invertible matrix $D$ and a permutation matrix $M$ such that $\varphi(v)=DMv$ for all $v\in R^n$.
\end{lemma}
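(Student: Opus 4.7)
\noindent\textit{Proof plan.}
The plan is to pin down the matrix of $\varphi$ in the standard basis $e_1,\dots,e_n$ of $R^n$. Writing $\varphi(e_i)=\sum_{j=1}^n c_{ji}e_j$, the goal is to show that for each $i$ exactly one $c_{ji}$ is nonzero and that this entry is a unit. Since $\supp$ is standard, weight preservation applied to $re_i$ gives the key identity
\[
|\supp_i(r)|=\sum_{j=1}^n |\supp_j(r c_{ji})|\qquad\text{for every }r\in R.
\]

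I would first restrict $\varphi$ to the socle $\alpha^{k-1}R^n$, where $(\alpha)$ is the maximal ideal of $R$ and $\alpha^k=0$. Because $\varphi$ is $R$-linear it sends $\alpha^{k-1}R^n$ into itself, and because $\alpha$ annihilates the socle, the restriction is an $R/(\alpha)$-linear bijection of $\alpha^{k-1}R^n\cong (R/(\alpha))^n$. Every nonzero element of $\alpha^{k-1}R$ is a unit times $\alpha^{k-1}$, so on the socle the weight reads
\[
\wt\bigl(\alpha^{k-1}(v_1,\dots,v_n)\bigr)=\sum_{i:\, v_i\notin(\alpha)} m_i,\qquad m_i:=|\supp_i(\alpha^{k-1})|>0,
\]
a weighted Hamming weight. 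A short computation---apply weight preservation to $\alpha^{k-1}(e_i+e_j)$, expand the resulting equality of weights in terms of the supports of $\varphi(\alpha^{k-1}e_i)$ and $\varphi(\alpha^{k-1}e_j)$, and use the strict positivity of the $m_k$ to rule out any cancellation---forces the vectors $\varphi(\alpha^{k-1}e_i)$ to have pairwise disjoint supports in the basis $\alpha^{k-1}e_1,\dots,\alpha^{k-1}e_n$. Since these $n$ images also form a basis of the socle, each has exactly one nonzero coordinate. This produces a permutation $\sigma$ of $[n]$ and units $u_i\in R$ such that $\varphi(\alpha^{k-1}e_i)=u_i\alpha^{k-1}e_{\sigma(i)}$.

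Comparing coefficients in $\alpha^{k-1}\sum_j c_{ji}e_j=u_i\alpha^{k-1}e_{\sigma(i)}$ and using $\mathrm{Ann}(\alpha^{k-1})=(\alpha)$ yields $c_{\sigma(i),i}-u_i\in (\alpha)$ (so $c_{\sigma(i),i}$ is a unit, as $R$ is local) and $c_{ji}\in(\alpha)$ for $j\neq \sigma(i)$. Setting $r=1$ in the key identity and using that multiplication by units leaves the support invariant---a consequence of property~2 of Definition~\ref{definition:ringsupport} applied to both $r$ and $r^{-1}$---I obtain
\[
|\supp_i(1)| = |\supp_{\sigma(i)}(1)| + \sum_{j\neq\sigma(i)} |\supp_j(c_{ji})|.
\]
Running the same analysis on the isometry $\varphi^{-1}$, whose induced permutation on the socle is necessarily $\sigma^{-1}$, gives the reverse inequality $|\supp_{\sigma(i)}(1)|\geq |\supp_i(1)|$. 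Combining the two forces $\sum_{j\neq\sigma(i)}|\supp_j(c_{ji})|=0$, and property~1 of Definition~\ref{definition:ringsupport} then forces $c_{ji}=0$ for $j\neq \sigma(i)$. Hence $\varphi(e_i)=c_{\sigma(i),i}e_{\sigma(i)}$, which is the desired decomposition $\varphi=DM$ with $M$ the permutation matrix of $\sigma$ and $D$ the diagonal invertible matrix with entries $c_{\sigma(i),i}$.

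The step I expect to be the main obstacle is the socle classification: establishing that a weight-preserving $R/(\alpha)$-linear automorphism of $(R/(\alpha))^n$ equipped with the weighted Hamming weight $\sum_i m_i\mathbf{1}_{v_i\neq 0}$ is automatically monomial. The pairwise-disjoint-support argument sketched above handles this, but one must be careful with the cancellation bookkeeping and rely on the strict positivity of all $m_k$; once this is in hand the lift back to $R^n$ via the symmetric use of $\varphi$ and $\varphi^{-1}$ is routine.
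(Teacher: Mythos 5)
Your proof is correct, and it takes a genuinely different route from the paper's. The paper proceeds by induction on $n$: after ordering the coordinates so that $\lvert\supp_1(1)\rvert\leq\dots\leq\lvert\supp_n(1)\rvert$, it shows that the column $\varphi(e_1)$ has a single invertible entry and zeros elsewhere (weight comparison together with the imposed ordering), normalizes it to position $(1,1)$ by a permutation, kills the off-diagonal entries $n_{1,j}$ of the first row by comparing the weights of $v=(-n_{1,j},0,\dots,0,n_{1,1},0,\dots,0)^t$ and $\varphi(v)$, and then restricts to $\{0\}\times R^{n-1}$ and invokes the inductive hypothesis. Your argument instead extracts the permutation $\sigma$ all at once by passing to the socle $\alpha^{k-1}R^n\cong(R/(\alpha))^n$, where the induced weight is a weighted Hamming weight with strictly positive weights $m_i$ and a weight-preserving automorphism is forced to be monomial via the pairwise-disjoint-support computation; you then lift this information back to $R^n$ only modulo $(\alpha)$ and use the symmetric application of weight-preservation to $\varphi$ and $\varphi^{-1}$ (yielding $\lvert\supp_i(1)\rvert\geq\lvert\supp_{\sigma(i)}(1)\rvert$ and the reverse) to conclude that the entries off the permutation pattern actually vanish. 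The socle reduction buys you a clean conceptual step---the classical monomial classification over a field---and avoids having to choose an ordering of the coordinates or to set up an induction, at the cost of needing the $\varphi$/$\varphi^{-1}$ device to promote the mod-$(\alpha)$ structure to exact equality. The paper's route is more hands-on and does the permutation construction coordinate by coordinate. Both use only properties 1 and 2 of Definition~\ref{definition:ringsupport} plus standardness, and both are of comparable length; the details you flag (the cancellation bookkeeping on the socle, the lift) go through as you sketch.
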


\begin{proof}
It is known that an $R$-module isomorphism from $R^n$ to itself can be expressed as multiplication by a matrix $N=(n_{i,j})$ in $R^{n\times n}$. In order to prove the statement, we proceed by induction on $n$. When $n=1$, the thesis is clear. Assume that the statement holds for $n-1$. Without loss of generality, assume that $ \lvert\supp_1(1)\rvert\leq\dots\leq\lvert\supp_n(1)\rvert$. Let $e_i$ be an element of the standard basis. Then, an entry of $\varphi(e_i)$ must be invertible, otherwise $\varphi$ would not be injective. We start by considering $e_1$. Since $\lvert\supp_1(1)\rvert\leq \lvert\supp_i(1)\rvert$ for $i>1$, we conclude that the first column of $N$ has an invertible entry, say the $k$-th entry, and it is zero everywhere else. Up to multiplying by a permutation matrix, we may assume $k=1$. 
Consider the vector $v=(-n_{1,2}, n_{1,1},0,\dots,0)^t\in R^n$. Since $\varphi(v)=n_{1,1}(0,n_{2,2},\ldots,n_{t,2})$, then $\wt(\varphi(v))\leq \wt(\varphi(e_2))$, while $\wt(v)\geq\wt(e_2)$. Since $\varphi$ is an isometry, we have $n_{1,2}=0$. Proceeding in a similar way, we show that the first row of $N$ is different from zero only in the first entry. This implies that $\varphi$ restricted to $0\times R^{n-1}$ is an isometry of $R^{n-1}$. We conclude by the induction hypothesis. 
\end{proof}

When $R$ is a principal ideal ring, the isometries of $R^n$ can still be expressed as a product by a matrix, but the matrices that represent the isometries have a more complicated description. However, we can classify the isometries of $R^n$ based on the isometries of finite chain rings that we described in the previous lemma. 

\begin{theorem}\label{theorem:classificationequivrings}
Let $R=R_1\times\dots\times R_{\ell}$ be a principal ideal ring, let $\supp$ be a standard modular support on $R^n$, and let $\varphi:R^n\rightarrow R^n$ be an isometry with respect to $\supp$. Then, for each $i\in[\ell]$, there exists an isometry $\varphi_i:R_i^n\rightarrow R_i^n$ such that $\pi_i(\varphi(r)))=\varphi_i(\pi_i(r))$ for every $r\in R^n$, where $\pi_i:R^n\rightarrow R_i^n$ is the standard projection. Equivalently, the matrix defining $\varphi$ has the form $DM$, where $D$ is invertible and diagonal, and $\pi_i(M)$ is a permutation matrix for all $i\in[\ell]$.
\end{theorem}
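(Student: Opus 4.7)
The plan is to exploit the product decomposition $R=R_1\times\dots\times R_\ell$ together with Proposition~\ref{proposition:supportsplits} in order to reduce the theorem to the finite chain ring case already handled by Lemma~\ref{lemma:classificationequivrings}. As a first step I would invoke Proposition~\ref{proposition:supportsplits} to write $\supp=\supp^{(1)}\times\dots\times\supp^{(\ell)}$, where each $\supp^{(i)}:R_i^n\to\Z^{u_i}$ is a modular support; because $\supp$ is standard, each $\supp^{(i)}$ inherits the standard property, and the weight decomposes additively as $\wt(v)=\sum_{i=1}^\ell\lvert\supp^{(i)}(\pi_i(v))\rvert$ for every $v\in R^n$.

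The key structural observation is that the central idempotents $e_1,\dots,e_\ell\in R$ attached to the product decomposition allow me to localize the isometry. Since $\varphi$ is $R$-linear, it commutes with multiplication by each $e_i$, so $\varphi(e_iv)=e_i\varphi(v)$, and therefore $\varphi$ sends the submodule $e_iR^n$, which is naturally identified with $R_i^n$, into itself. I would then define $\varphi_i:R_i^n\to R_i^n$ to be the restriction of $\varphi$ to this submodule under the identification $e_iR^n\cong R_i^n$; this map is automatically an $R_i$-module isomorphism.

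To verify that $\varphi_i$ is an isometry for $\supp^{(i)}$, I would introduce the section $\iota_i:R_i^n\to R^n$ of $\pi_i$ whose image is $e_iR^n$. By the support decomposition, $\wt(\iota_i(x))=\lvert\supp^{(i)}(x)\rvert$ and $\wt(\varphi(\iota_i(x)))=\lvert\supp^{(i)}(\varphi_i(x))\rvert$, so the isometry property of $\varphi$ transfers at once to $\varphi_i$. The compatibility relation $\pi_i(\varphi(r))=\varphi_i(\pi_i(r))$ then follows by decomposing $r=\sum_j e_jr$, applying $\varphi$, and projecting to the $i$-th factor: only the summand $j=i$ contributes, yielding $\varphi_i(\pi_i(r))$ by construction.

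I expect the argument to be essentially mechanical once the product decomposition of the support is in hand. The only mildly delicate point is checking that the identification $e_iR^n\cong R_i^n$ is compatible with the coordinate permutation supplied by Proposition~\ref{proposition:supportsplits}, but this should follow routinely from the statement of that result. I do not foresee any substantive obstacle beyond bookkeeping.
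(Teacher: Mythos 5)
Your argument is correct and takes essentially the same route as the paper: the paper's proof is the one-line observation that any $R$-module isomorphism of $R^n$ preserves each factor $0\times\dots\times R_i^n\times\dots\times 0$ and that restrictions of isometries are isometries, and your use of the central idempotents $e_i$ is simply the standard way to justify that observation. The additional bookkeeping about Proposition~\ref{proposition:supportsplits} and the section $\iota_i$ makes explicit what the paper leaves implicit, but it is the same proof.
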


\begin{proof}
This follows from observing that any $R$-module isomorphism maps $0\times\dots\times R_i^n\times\dots\times 0$ to itself and that the restriction of an isometry is an isometry.
\end{proof}

\begin{example}
Over the ring $\Z_6=\Z_2\times \Z_3$ let $\supp:\Z_6\rightarrow \Z^2$ be given by $\supp(1)=(1,1)$, $\supp(2)=(1,0)$ and $\supp(3)=(0,1)$, i.e., $\supp=\supp_1\times\supp_2$, where $\supp_1$ is the Hamming support on $\Z_3$ and $\supp_2$ is the Hamming support on $\Z_2$. 
Consider the free module $\Z_6^2$  with the standard modular support $\supp\times \supp$. One can check by direct computation that multiplication by the matrix $$N=\begin{pmatrix} 2&3\\3&2
\end{pmatrix}\in\Z_6^{2\times 2}$$
is an isometry $\varphi:\Z_6^2\rightarrow\Z_6^2$. Clearly $N$ is not the product of a permutation matrix and a diagonal one. However, if we look at the projections in $\Z_2^2$ and $\Z_3^2$, we find that the two isometries $\varphi_1$ and $\varphi_2$ of the proof of Theorem~\ref{theorem:classificationequivrings} correspond, respectively, to the matrices
$$N_1=\begin{pmatrix} 0&1\\1&0
\end{pmatrix}\in\Z_2^{2\times 2}\;\;\;\text{ and }\;\;\;N_2=\begin{pmatrix} 2&0\\0&2
\end{pmatrix}\in\Z_3^{2\times 2},$$
that are both permutation matrices multiplied by a diagonal invertible matrix, as required by Lemma~\ref{lemma:classificationequivrings}. Equivalently, $N=DM\in\Z_6^{2\times 2}$ with $$D=\begin{pmatrix} 5&0\\0&5
\end{pmatrix}\;\;\; \mbox{ and }\;\;\; M=\begin{pmatrix} 4&3\\3&4
\end{pmatrix}.$$
\end{example}

Theorem~\ref{theorem:classificationequivrings} implies that the generalized weights are a family of invariants.

\begin{corollary}\label{corollary:invariantring}
    The generalized weights of an $R$-linear code are invariant under equivalences.
\end{corollary}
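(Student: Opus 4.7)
The plan is to reduce the statement to the fact that an isometry $\varphi\colon R^n\to R^n$ (with respect to a standard modular support) preserves not merely the weight of individual vectors but also the support vector of any submodule, up to a permutation of the coordinates of $\Z^u$.

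First, I would recall Proposition~\ref{proposition:basicpropgwr}(4), which rewrites the generalized weights as
$$d_r(\Cc)=\min\{|\supp(\Dd)|:\Dd\subseteq\Cc,\ M(\Dd)\geq r\}.$$
Since an isometry $\varphi$ with $\varphi(\Cc_1)=\Cc_2$ is in particular an $R$-module isomorphism $\Cc_1\to\Cc_2$, it induces a bijection between $R$-submodules of $\Cc_1$ and of $\Cc_2$ that preserves the cardinality of any minimal system of generators; in particular, $M(\varphi(\Dd))=M(\Dd)$ for every subcode $\Dd\subseteq\Cc_1$. It therefore suffices to show that $|\supp(\varphi(\Dd))|=|\supp(\Dd)|$ for every submodule $\Dd\subseteq R^n$.

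To prove this, I would invoke the structural classification. By Theorem~\ref{theorem:classificationequivrings}, $\varphi$ decomposes as $\varphi=\varphi_1\times\cdots\times\varphi_\ell$ with each $\varphi_i$ an isometry of $R_i^n$ with respect to the modular standard support $\supp_i$ obtained from Proposition~\ref{proposition:supportsplits}. By Lemma~\ref{lemma:classificationequivrings}, each $\varphi_i$ is multiplication by $D_i M_i$ for a diagonal invertible matrix $D_i$ and a permutation matrix $M_i$. Because units satisfy $\supp_j(u r)=\supp_j(r)$ (using property 2 of Definition~\ref{definition:ringsupport} applied to $u$ and to $u^{-1}$), the factor $D_i$ leaves the support of every vector unchanged; the factor $M_i$ permutes the block-coordinates of the standard support, inducing a permutation $\sigma_i$ of the coordinates of $\Z^{u_i}$. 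Putting the factors together, there is a single permutation $\sigma$ of the coordinates of $\Z^u$ such that $\supp(\varphi(v))=\sigma(\supp(v))$ for every $v\in R^n$.

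Since join and $1$-norm in $\Z^u$ are invariant under coordinate permutations, passing to the join over $\Dd$ yields $\supp(\varphi(\Dd))=\sigma(\supp(\Dd))$ and hence $|\supp(\varphi(\Dd))|=|\supp(\Dd)|$. Combining this with the bijection between subcodes and the preservation of $M(\cdot)$ gives $d_r(\Cc_1)=d_r(\Cc_2)$ for every admissible $r$. The only real obstacle is the last observation of the previous paragraph, namely that the weight of an entire subcode, and not just of its individual vectors, is preserved; this is exactly what forces us to pass through the structural description of $\varphi$ in Theorem~\ref{theorem:classificationequivrings} rather than arguing purely from the definition of isometry.
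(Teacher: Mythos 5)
Your proposal is correct and follows essentially the same approach as the paper's: both reduce the claim to showing that an isometry $\varphi$ preserves both $M(\cdot)$ and $\wt(\cdot)$ on subcodes, then conclude from the formula for $d_r$. You additionally spell out, via Theorem~\ref{theorem:classificationequivrings} and Lemma~\ref{lemma:classificationequivrings}, why an isometry preserves the weight of an entire subcode and not merely of its individual codewords; the paper's proof asserts this step (``Since $\varphi$ is an isometry, then $\wt(\Dd_1)=\wt(\Dd_2)$'') without elaboration, so your added justification makes the argument more transparent.
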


\begin{proof}
Let $\varphi:R^n\rightarrow R^n$ be an equivalence between two $R$-linear codes $\Cc_1$ and $\Cc_2$. Consider a minimal system of generators $M$ of a subcode $\Dd_1$ of $\Cc_1$. Since $\varphi$ is an $R$-linear isomorphism of $R^n$, $\varphi(M)$ is a minimal system of generators of a subcode $\Dd_2$ of $\Cc_2$. In particular, $M(\Cc_1)=M(\Cc_2)$. Since $\varphi$ is an isometry, then $\wt(\Dd_1)=\wt(\Dd_2)$. Therefore, $d_r(\Cc_1)\geq d_r(\Cc_2)$ for $r\in [M(\Cc_2)]$. This suffices to conclude, since $\varphi^{-1}$ is also an isometry.
\end{proof}

The dimension of a vector space coincides with the cardinality of a minimal system of generators. This is not always the case for a module over a ring. For this reason, the definition of generalized weights does not extend uniquely to the case of modules on rings: Definition~\ref{definition:genweiring} is only one possible choice. A different choice is made, e.g., in~\cite{kamche2019rank,blanco2025macwilliams,martinez-penas2024maximum}. In this paper, we explore the following definition, which is related to the approach of~\cite{gorla2018algebraic,gorla2025matrix}.

\begin{definition}\label{definition:genweiring2}
    For $1\leq r\leq \lambda(\Cc)$, the {\bf $r$-th generalized weight} of $\Cc$ is 
    $$\bar d_r(\Cc)=\min\left\{\wt(\Dd):\Dd\text{ is a submodule of }\Cc\text{ such that }\lambda(\Dd)\geq r\right\},$$
    for $1\leq r\leq \lambda(\Cc)$.
\end{definition}

It is easy to prove that these generalized weights satisfy properties similar to those of Proposition~\ref{proposition:basicpropgwr}.

\begin{proposition}\label{proposition:basicpropgwr2}
Let $R$ be endowed with a support and let $\Dd\subseteq\Cc\subseteq R^n$ be $R$-linear codes. Then
    \begin{enumerate}
        \item $\bar d_1(\Cc)=\minwt(\Cc)$,
        \item $\bar d_r(\Dd)\geq \bar d_r(\Cc)$ for $r\in[ \lambda(\Dd)]$,
        \item $\bar d_{r+1}(\Cc)\geq\bar d_{r}(\Cc)$ for $r\in[\lambda(\Cc)-1]$,
        \item if the support is modular, then $\bar d_{r+1}(\Cc)>\bar d_{r}(\Cc)$ for $r\in[\lambda(\Cc)-1]$,
        \item $\bar d_r(\Cc)$ is invariant under equivalence for all $r\in[\lambda(\Cc)]$.
    \end{enumerate}
\end{proposition}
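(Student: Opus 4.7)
The plan is to prove the five items in turn. Items (1), (2), (3), and (5) are close variants of the statements in Proposition~\ref{proposition:basicpropgwr} and Corollary~\ref{corollary:invariantring} and pose no real difficulty. Item (4), which genuinely uses the modularity of the support, carries all the substance and will be the main point.

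For (1), given any $0\neq v\in\Cc$ the cyclic submodule $Rv$ satisfies $\lambda(Rv)\ge 1$, and axiom (2) of Definition~\ref{definition:ringsupport} gives $\supp(Rv)=\supp(v)$, hence $\wt(Rv)=\wt(v)$; minimizing over $v$ yields $\bar d_1(\Cc)\le\minwt(\Cc)$. Conversely, any $\Dd\le\Cc$ with $\lambda(\Dd)\ge 1$ contains a nonzero $v$ with $\supp(v)\le\supp(\Dd)$, so $\minwt(\Cc)\le\wt(\Dd)$. Items (2) and (3) are immediate because the defining minimum is taken over a smaller family in each case: every submodule of $\Dd$ is a submodule of $\Cc$, and $\lambda(\Dd)\ge r+1$ is a stronger requirement than $\lambda(\Dd)\ge r$. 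For (5) I would follow the proof of Corollary~\ref{corollary:invariantring}: an equivalence $\varphi:R^n\to R^n$ between $\Cc_1$ and $\Cc_2$ is an $R$-module isomorphism, so it sends the family of submodules of $\Cc_1$ of length $\ge r$ bijectively onto the analogous family for $\Cc_2$; being an isometry, it preserves the weight of each subcode, so the two minima coincide.

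The heart of the proof is (4), which I would tackle by contradiction. Suppose $\bar d_{r+1}(\Cc)=\bar d_r(\Cc)$, and let $\Dd\le\Cc$ with $\lambda(\Dd)\ge r+1$ achieve this common value. For every submodule $\Dd'\le\Dd$ with $\lambda(\Dd')\ge r$ obtained by truncating a composition series of $\Dd$, the chain of inequalities $\bar d_r(\Cc)\le\wt(\Dd')\le\wt(\Dd)=\bar d_{r+1}(\Cc)=\bar d_r(\Cc)$ forces $\wt(\Dd')=\wt(\Dd)$, and since $\supp(\Dd')\le\supp(\Dd)$ coordinatewise we must have $\supp(\Dd')=\supp(\Dd)$. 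I would then pick $v\in\Dd\setminus\Dd'$ and try to reduce its support modulo $\Dd'$: for each coordinate $i$ with $\supp(v)_i>0$, the equality $\supp(\Dd')_i=\supp(\Dd)_i$ produces $w\in\Dd'$ with $\supp(w)_i\ge\supp(v)_i$, and axiom (4) of Definition~\ref{definition:ringsupport} produces $s\in R$ with $\supp(v+sw)_i<\supp(v)_i$ while $v+sw\in v+\Dd'$. Iterating, the aim is to reach $v^\ast\in v+\Dd'$ with $\supp(v^\ast)=0$, i.e.\ $v^\ast=0$ and $v\in\Dd'$, contradicting $v\in\Dd\setminus\Dd'$.

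The main obstacle is the termination of this iteration: modularity decreases only the prescribed coordinate $i$, while the correction $sw$ may a priori enlarge $\supp(v)_j$ at coordinates $j\neq i$ (though never beyond $\supp(\Dd)_j$, since $v+sw\in\Dd$). To push this through, I would first invoke Proposition~\ref{proposition:supportsplits} to reduce to a single local factor, where the ring is a finite chain ring, and then choose the reduction order together with the auxiliary element $w\in\Dd'$ at each step in a way compatible with the modularity axiom -- ideally picking $w$ so that $\supp(w)_j\le\supp(v)_j$ at coordinates $j$ that have already been made zero, so that the standard lexicographic potential on $\Z^u$ drops at each step. Proving that such an element $w\in\Dd'$ exists at every step is the technical heart of the proof and rests on the combined use of the modularity axiom and of the equality $\supp(\Dd')=\supp(\Dd)$.
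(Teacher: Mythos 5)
Items (1)--(3) and (5) of your plan are fine and match what the paper does (it simply says they follow from the definitions; your cyclic-submodule argument for (1) and the observation that the minimum is taken over a smaller family for (2), (3) are exactly what one says; (5) follows the proof of Corollary~\ref{corollary:invariantring} as you indicate).

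Item (4) is where your plan has a genuine gap, and you have in fact flagged it yourself. Your strategy is contradiction: from $\bar d_{r+1}(\Cc)=\bar d_r(\Cc)$ you extract $\Dd'\subsetneq\Dd$ with $\supp(\Dd')=\supp(\Dd)$, take $v\in\Dd\setminus\Dd'$, and try to drive $\supp(v)$ to zero by repeatedly applying the modularity axiom with elements of $\Dd'$. But each such correction $v\mapsto v+sw$ decreases only the chosen coordinate $i$ while possibly increasing other coordinates of $\supp(v)$, so there is no monotone potential and the iteration has no reason to terminate; you say so explicitly and leave the ``technical heart'' unresolved. The reduction via Proposition~\ref{proposition:supportsplits} does not fix this either: it only decomposes $R$ into local factors, and within each factor the support still takes values in $\Z^{u_i}$ with $u_i$ possibly larger than $1$, so the same loss-of-control at untouched coordinates persists.

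The paper avoids the whole issue by not manipulating a single codeword at all. It works directly with the module $\Dd$ realizing $\bar d_r(\Cc)$: pick a coordinate $i$ and $d\in\Dd$ with $\supp(d)_i$ maximal among all $\supp(c)_i$, $c\in\Dd$, and set
\[
\bar\Dd=\{c\in\Dd:\supp(c)_i<\supp(d)_i\}.
\]
Axioms (2) and (3) of Definition~\ref{definition:ringsupport} make $\bar\Dd$ a submodule of $\Dd$, and one single application of the modularity axiom (4) shows that $\bar\Dd$ is a \emph{maximal} proper submodule: any $\tilde\Dd$ strictly containing $\bar\Dd$ must contain some $b$ with $\supp(b)_i=\supp(d)_i$, and then for every $a\in\Dd$ with $\supp(a)_i=\supp(d)_i$ modularity at coordinate $i$ gives $r\in R$ with $a+rb\in\bar\Dd$, hence $a\in\tilde\Dd$, so $\tilde\Dd=\Dd$. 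Thus $\lambda(\bar\Dd)=\lambda(\Dd)-1\ge r-1$, while $\wt(\bar\Dd)<\wt(\Dd)$ because the $i$-th coordinate of the support dropped. This yields $\bar d_{r-1}(\Cc)\le\wt(\bar\Dd)<\wt(\Dd)=\bar d_r(\Cc)$ with no induction, no iteration and no termination argument. If you want to salvage your own route rather than switch, the key missing lemma is essentially this construction: instead of reducing a vector coordinate by coordinate, slice the module at a single extremal coordinate and argue with the resulting maximal submodule.
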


\begin{proof}
    Items 1, 2, 3, and 5 follow from the definition of generalized weights and the properties of supports. In order to prove item 4, let $\Dd$ be a submodule of $\Cc$ that realizes $\bar d_r(\Cc)$. Let $d\in\Dd$ be such that there exists an index $i$ for which $\supp(d)_i\geq\supp(c)_i$ for all $c\in\Dd$. The set $\bar \Dd=\{c\in \Dd:\supp(c)_i<\supp(d)_i\}$ is a submodule of $\Dd$. We claim that $\lambda(\bar\Dd)=\lambda(\Dd)-1$. In fact, assume that there exists a module $\tilde \Dd$ such that $\bar\Dd\subsetneq\tilde\Dd\subseteq\Dd$. Then, there exists $b\in\tilde\Dd$ such that $\supp(b)_i=\supp(d)_i$. 
    Let $a\in\Dd$ be such that $\supp(a)_i=\supp(d)_i$. 
    Since the support is modular, there exists $r\in R$ such that $\supp(a+rb)_i<\supp(a)_i=\supp(d)_i$. Therefore, $a+rb\in\bar\Dd$ and so $a\in\tilde\Dd$. This implies $\tilde \Dd=\Dd$, and hence $\lambda(\bar\Dd)=\lambda(\Dd)-1$. Finally, by construction $\wt(\bar\Dd)<\wt(\Dd)$, so we conclude that $\bar d_{r-1}(\Cc)<\bar d_r(\Cc)$.
\end{proof}

\section{Latroids}\label{section:latroid}

\subsection{Definition and basic properties}

Latroids were introduced in~\cite{vertigan2004latroids} by Vertigan as a generalization of matroids. It is easy to see that they also generalize their $q$-analogs. Since our goal is to associate a latroid to a linear code, we only give the definition for finite lattices. However, some of the results can be extended with due attention to the infinite case.  

\begin{definition}
Let $A$ be an ordered abelian group and let $\mathcal{L}$ be a finite modular lattice. An {\bf $A$-latroid}\index{latroid} with {\bf rank function} $\rho:\mathcal{L}\rightarrow A$ under {\bf length function} $\lVert \cdot\rVert:\mathcal{L}\rightarrow A$ on the lattice $\mathcal{L}$, is a triple $(\rho,\lVert \cdot\rVert, \mathcal{L})$ such that:
\begin{enumerate}
\item[L1.] $\rho(0_{\mathcal{L}})=\lVert 0_{\mathcal{L}}\rVert=0_A$,
\item[L2.] $\lVert \cdot \rVert$ is strictly increasing, that is, $\lVert L\rVert<\lVert M\rVert$ for all $L,M\in \mathcal{L}$ with $L<M$,
\item[L3.] $\lVert \cdot \rVert$ is {\bf modular}, that is, $\lVert L\rVert+\lVert M\rVert=\lVert L\vee M\rVert+\lVert L\wedge M\rVert$ for all $L,M\in \mathcal{L}$,
\item[L4.] $\rho$ is {\bf bounded increasing}, that is, $0\leq \rho(M)-\rho(L)\leq \lVert M\rVert-\lVert L\rVert$ for all $L,M\in \mathcal{L}$ with $L\leq M$,
\item[L5.] $\rho$ is {\bf submodular}, that is, $\rho(L)+\rho(M)\geq \rho(L\vee M)+\rho(L\wedge M)$ for all $L,M\in \mathcal{L}$.
\end{enumerate}
\end{definition}

\begin{example}
    Let $\Ll$ be a finite graded modular lattice with height function $\hgt_{\Ll}$. Then, the triple $(\hgt_{\Ll},\hgt_{\Ll},\Ll)$ is a $\Z$-latroid. Every latroid of the form $(\lVert\cdot\rVert,\lVert\cdot\rVert,\Ll)$ is called a {\bf free latroid}.
\end{example}

\begin{example}
    Let $\Ll$ be a finite modular lattice and let $\lVert\cdot\rVert:\Ll\rightarrow A$ be a modular function with $\lVert 0_{\mathcal{L}}\rVert=0_A$. For $0<a\in A$, let $\rho_a:\Ll\rightarrow A$ be the function defined by
    \begin{equation*}
        \rho_a(L)=\begin{cases}
            \lVert L\rVert&\text{if } \lVert L\rVert\leq a\\
            a&\text{otherwise}
        \end{cases}
    \end{equation*}
    for all $L\in\Ll$. We claim that $(\rho_a,\lVert\cdot\rVert,\Ll)$ is an $A$-latroid. Indeed, L1, L2, L3, and L4 are trivially satisfied. For every $L,M\in\Ll$, if $\rho_a(L)=a$, then also $\rho_a(L\vee M)=a$, and so $\rho_a(L)+\rho_a(M)\geq \rho_a(L\vee M)+\rho_a(L\wedge M)$. Otherwise, if $\rho_a(L),\rho_a(M)<a$, then
    $$\rho_a(L)+\rho_a(M)=\lVert L\rVert+\lVert M\rVert=\lVert L\vee M\rVert+\lVert L\wedge M\rVert\geq \rho_a(L\vee M)+\rho_a(L\wedge M).$$
    So L5 is also satisfied. In analogy to matroid theory terminology, we call $(\rho_a,\lVert\cdot\rVert,\Ll)$ a {\bf uniform latroid}.
\end{example}

\begin{definition}
Consider an $A$-latroid $(\rho,\lVert\cdot\rVert,\Ll)$ and let $L_1\leq L_2\in\Ll$. Define $\lVert\cdot\rVert_{[L_1,L_2]}:[L_1,L_2]\rightarrow A$ and $\rho_{[L_1,L_2]}:[L_1,L_2]\rightarrow A$ as
\begin{equation*}
    \lVert L\rVert_{[L_1,L_2]}=\lVert L\rVert-\lVert L_1\rVert\;\;\text{ and }\;\;\rho_{[L_1,L_2]}(L)=\rho(L)-\rho(L_1),
\end{equation*}
for $L\in[L_1,L_2]$. 
\end{definition}

Notice that $(\rho_{[L_1,L_2]},\lVert\cdot\rVert_{[L_1,L_2]},[L_1,L_2])$ is an $A$-latroid by~\cite[Lemma 5.10]{vertigan2004latroids}. 

\begin{definition}
The {\bf direct sum} $(\rho_1,\lVert\cdot\rVert_1,\Ll_1)\oplus (\rho_2,\lVert\cdot\rVert_2,\Ll_2)$ of $A$-latroids $(\rho_1,\lVert\cdot\rVert_1,\Ll_1)$ and $(\rho_2,\lVert\cdot\rVert_2,\Ll_2)$ is the $A$-latroid $(\rho,\lVert\cdot\rVert,\Ll_1\times\Ll_2)$, where $\rho:\Ll_1\times \Ll_2\rightarrow A$ and $\lVert\cdot\rVert:\Ll_1\times \Ll_2\rightarrow A$ are given by
\begin{equation*}
    \rho(L_1,L_2)=\rho_1(L_1)+\rho_2(L_2)\text{ and }\lVert(L_1,L_2)\rVert=\lVert L_1\rVert_1+\lVert L_2\rVert_2.
\end{equation*}
\end{definition}

To prove that the direct sum of two latroids is a latroid, it suffices to observe that $(L_1,L_2)\vee(M_1,M_2)=(L_1\vee M_1,L_2\vee M_2)$ and $(L_1,L_2)\wedge(M_1,M_2)=(L_1\wedge M_1,L_2\wedge M_2)$ for every $L_1,M_1\in \Ll_1$ and $L_2,M_2\in\Ll_2$. Therefore,
\begin{equation*}
\begin{split}
    &\rho(L_1,L_2)+\rho(M_1,M_2)=\rho_1(L_1)+\rho_2(L_2)+\rho_1(M_1)+\rho_2(M_2)=\\
    &\geq \rho_1(L_1\vee M_1)+\rho_1(L_1\wedge M_1)+\rho_2(L_2\vee M_2)+\rho_2(L_2\wedge M_2)=\\&=\rho(L_1\vee M_1,L_2\vee M_2)+\rho(L_1\wedge M_1,L_2\wedge M_2),
\end{split} 
\end{equation*}
which proves the submodularity of $\rho$. Properties L1, L2, L3, and L4 can be proved in a similar way.

We denote by $(\Ll^{\perp},\leq)=(\{L^{\perp}:L\in\Ll\},\leq)$ the {\bf dual lattice}  of a lattice $\Ll$, where $L_1^{\perp}\leq L_2^{\perp}$ if and only if $L_2\leq L_1$. The dual of a latroid was introduced in \cite[Definition 5.13]{vertigan2004latroids}.

\begin{definition}
    The {\bf dual} of an $A$-latroid $(\rho,\lVert\cdot\rVert,\Ll)$, denoted by $(\rho,\lVert\cdot\rVert,\Ll)^\perp$, is the $A$-latroid $(\rho^{\perp},\lVert\cdot\rVert^{\perp},\Ll^{\perp})$,
where 
\begin{enumerate}
    \item $\lVert L^{\perp}\rVert^{\perp}=\lVert 1_{\mathcal{L}}\rVert-\lVert L\rVert$,
    \item $\rho^{\perp}(L^{\perp})=\lVert L^{\perp}\rVert^{\perp}-\rho(1_{\mathcal{L}})+\rho(L)$,
    for every $L^{\perp}\in\Ll^{\perp}$.
\end{enumerate}
\end{definition}

The next lemma collects some basic properties of the dual latroid.

\begin{lemma}[{\cite[Lemma 5.14]{vertigan2004latroids}}]
    Let $(\rho,\lVert\cdot\rVert,\Ll)$ be an $A$-latroid. Then:
    \begin{enumerate}
        \item $(\rho^{\perp},\lVert\cdot\rVert^{\perp},\Ll^{\perp})$ is an $A$-latroid,
        \item $(\rho^{\perp},\lVert\cdot\rVert^{\perp},\Ll^{\perp})^{\perp}=(\rho,\lVert\cdot\rVert,\Ll),$
        \item for $L_1\leq L_2\in\Ll$ we have $(\rho_{[L_1,L_2]})^{\perp}=\rho^{\perp}_{[L_2^{\perp},L_1^{\perp}]}$,
        \item for $L_1\leq L_2\in\Ll$ we have $(\lVert\cdot\rVert_{[L_1,L_2]})^{\perp}=\lVert\cdot\rVert^{\perp}_{[L_2^{\perp},L_1^{\perp}]}.$
    \end{enumerate}
\end{lemma}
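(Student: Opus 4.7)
The plan is to verify axioms L1--L5 for $(\rho^{\perp},\lVert\cdot\rVert^{\perp},\Ll^{\perp})$ to obtain item 1, and then derive items 2--4 as bookkeeping computations against the definitions. The structural fact I rely on throughout is that order reversal swaps joins with meets and the extrema: $(L_1\vee L_2)^{\perp}=L_1^{\perp}\wedge L_2^{\perp}$, $(L_1\wedge L_2)^{\perp}=L_1^{\perp}\vee L_2^{\perp}$, $0_{\Ll^{\perp}}=1_{\Ll}^{\perp}$, and $1_{\Ll^{\perp}}=0_{\Ll}^{\perp}$.

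For item 1, L1 is immediate since $\lVert 1_{\Ll}\rVert-\lVert 1_{\Ll}\rVert=0$ and $\rho(1_\Ll)-\rho(1_\Ll)=0$. For L2, a strict inequality $L_1^{\perp}<L_2^{\perp}$ means $L_2<L_1$ in $\Ll$, so L2 for $\lVert\cdot\rVert$ yields the required strict increase after a sign flip. L3 follows by rewriting each side as $2\lVert 1_{\Ll}\rVert$ minus the original modular identity, together with the operation swap. For L4, when $L^{\perp}<M^{\perp}$ (that is, $M<L$) one rewrites
$$\rho^{\perp}(M^{\perp})-\rho^{\perp}(L^{\perp})=\bigl(\lVert M^{\perp}\rVert^{\perp}-\lVert L^{\perp}\rVert^{\perp}\bigr)-\bigl(\rho(L)-\rho(M)\bigr);$$
the upper bound for $\rho^{\perp}$ then follows from the \emph{lower} bound in L4 for $\rho$ applied to $M<L$, and the lower bound for $\rho^{\perp}$ from the \emph{upper} bound in L4 for $\rho$. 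Finally, for L5 I would expand $\rho^{\perp}(L_1^{\perp})+\rho^{\perp}(L_2^{\perp})-\rho^{\perp}(L_1^{\perp}\vee L_2^{\perp})-\rho^{\perp}(L_1^{\perp}\wedge L_2^{\perp})$, cancel the length terms using the modularity of $\lVert\cdot\rVert^{\perp}$ just established, and apply the operation swap to reduce the remaining inequality to $\rho(L_1)+\rho(L_2)\geq \rho(L_1\vee L_2)+\rho(L_1\wedge L_2)$.

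For item 2, I would first compute $\lVert 1_{\Ll^{\perp}}\rVert^{\perp}=\lVert 1_{\Ll}\rVert$ and $\rho^{\perp}(0_{\Ll^{\perp}})=\lVert 1_\Ll\rVert-\rho(1_\Ll)$, then substitute into the double-dual formula; the lengths telescope to $\lVert L\rVert$ and the rank terms collapse to $\rho(L)$. Items 3 and 4 are unfoldings of the interval definitions: with $1_{[L_1,L_2]}=L_2$ and $0_{[L_1,L_2]}=L_1$, both sides of item 4 evaluate at $L^{\perp}$ to $\lVert L_2\rVert-\lVert L\rVert$, and both sides of item 3 evaluate to $\lVert L_2\rVert-\lVert L\rVert+\rho(L)-\rho(L_2)$. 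The one subtle step is L4 in item 1, where the two bounds for $\rho^{\perp}$ come from the \emph{opposite} bounds for $\rho$; everything else becomes mechanical once the operation swap in $\Ll^{\perp}$ is applied correctly.
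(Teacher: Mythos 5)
The paper does not give its own proof of this lemma; it simply cites Vertigan (Lemma 5.14 of \cite{vertigan2004latroids}), so there is no in-paper argument to compare against. On its own merits your plan is correct and carries through: the order-reversal identities you list at the top are exactly what is needed, the verification of L1--L3 is routine, and the observation that the two inequalities in L4 for $\rho^{\perp}$ come from the \emph{opposite} inequalities in L4 for $\rho$ is the one genuinely non-mechanical point, which you identify correctly. Your treatment of L5 (cancel length terms via the freshly proved modularity of $\lVert\cdot\rVert^{\perp}$, then apply the join/meet swap to reduce to submodularity of $\rho$) also works. Items 3 and 4 are correct: both evaluations you give agree with direct substitution, and this identifies the two sides as the same function on $[L_2^{\perp},L_1^{\perp}]$.

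One small slip in item 2: you write $\rho^{\perp}(0_{\Ll^{\perp}})=\lVert 1_{\Ll}\rVert-\rho(1_{\Ll})$, but the element you want (and the one the double-dual formula actually requires) is $1_{\Ll^{\perp}}=0_{\Ll}^{\perp}$, not $0_{\Ll^{\perp}}=1_{\Ll}^{\perp}$; indeed $\rho^{\perp}(0_{\Ll^{\perp}})=0$ by L1, whereas $\rho^{\perp}(1_{\Ll^{\perp}})=\lVert 0_{\Ll}^{\perp}\rVert^{\perp}-\rho(1_{\Ll})+\rho(0_{\Ll})=\lVert 1_{\Ll}\rVert-\rho(1_{\Ll})$, which is the value you wrote. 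Since the telescoping computation downstream uses the correct value, this is a labelling typo rather than a gap, but it should be fixed before writing this up.
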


In the next remark we clarify how the concept of latroid generalizes the one of matroid.

\begin{remark}\label{remark:latroidtomatroid}
Let $E$ be a finite set. The power set $\mathcal{P}(E)$ of $E$ is a complete lattice with respect to the union and intersection. It is easy to verify that the cardinality function $\lvert\cdot\rvert$ is a strictly increasing modular function on $\mathcal{P}(E)$. Let $\rho: \mathcal{P}(E)\rightarrow \mathbb{Z}$ be any function for which $(\rho,\lvert\cdot\rvert,\mathcal{P}(E))$ is a $\Z$-latroid. Then $\{X\subseteq E: \lvert X\rvert-\rho(X)>0\text{ and }X\text{ is minimal with this property}\}$ is the set of circuits of a matroid with ground set $E$ and rank function $\rho$. This is a direct consequence of~\cite[Proposition 11.1.1]{oxleybook}.

Conversely, let $(E,\rho)$ be a matroid with ground set $E$ and rank function $\rho$. Then $(\rho,\lvert\cdot\rvert,\mathcal{P}(E))$ is a $\Z$-latroid. In fact, from the axioms of matroids we directly obtain that $\rho(0_{\mathcal{L}})=0$, $\rho$ is submodular, and $0\leq \rho(M)-\rho(L)$ for all $L,M\in\mathcal{P}(E)$ with $L\leq M$. It remains to prove that
$\rho(M)-\rho(L)\leq \lvert M\rvert-\lvert L\rvert$. By the submodularity of $\rho$, we have $\rho(M)\leq\rho(M\setminus L)+\rho(L)$, and by the modularity of the cardinality, we conclude that
$$\rho(M)-\rho(L)\leq\rho(M\setminus L)\leq\lvert M\setminus L\rvert=\lvert M\rvert-\lvert L\rvert.$$
\end{remark}

\subsection{Cryptomorphic definitions}\label{section:crypto}

Inspired by matroid theory, we define and study the concepts of independent element, basis, and circuit of a latroid.

\begin{definition}
    Let $(\rho,\lVert \cdot\rVert, \mathcal{L})$ be an $A$-latroid. An element $L\in\Ll$ is called {\bf independent} if $\rho(L)=\lVert L\rVert$, a {\bf basis} if $\rho(L)=\lVert L\rVert=\rho(1_{\Ll})$, and a {\bf circuit} if $\rho(L)<\lVert L\rVert$ and for all $M<L$ we have $\rho(M)=\lVert M\rVert$.
\end{definition}

In the case of matroids, knowing the independent sets is equivalent to knowing the rank function. However, in the case of latroids, this is not true in general, as one can see in the next example. In Proposition~\ref{proposition:independent}, however, we show that a $\mathbb{Z}$-latroid over a finite complemented modular lattice is determined by its independent elements.

\begin{example}\label{ex:latroid&indep}
Let $\mathcal{L}$ be the lattice of ideals of $\Z_8$. We consider the length function $\lVert\cdot\rVert$ given by $\lVert R\rVert=\vert R\rvert-1$ for each $R\in\Ll$. Then, $(\lfloor\vert\cdot\vert/2\rfloor,\lVert\cdot\rVert,\Ll)$ and  $(\lambda,\lVert\cdot\rVert,\Ll)$ are two $\Z$-latroids whose independent elements are $0_{\Ll}$ and $4\Z_8$. However, we have that $\lfloor\vert\Z_8\vert/2\rfloor=4$, while $\lambda(\Z_8)=3$.
\end{example}

Notice that, if $\lVert\cdot\rVert$ is the length function of a lattice $\Ll$, then any multiple of it by a positive integer is also a length function on $\Ll$. In the example above, the lattice is a chain, and the length function that we consider is twice its height function. In addition, any increasing function over a chain is modular.
This gives us space to define latroids on the same lattice with different rank functions but the same independent sets, as in Example~\ref{ex:latroid&indep}. In order to avoid this undesirable feature, we restrict our attention to relatively complemented lattices endowed with the height function. In particular, in this setting we are able to generalize~\cite[Lemma 1.3.3]{oxleybook}. We start by observing that a relatively complemented finite lattice is atomistic.

\begin{lemma}\label{lemma:atoms}
Let $\Ll$ be a relatively complemented finite lattice. Then every $L\in\Ll$ is the join of the atoms $J$ such that $J\leq L$.
\end{lemma}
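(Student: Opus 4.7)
The plan is to let $L'=\bigvee\{J\in\Ll:J\text{ is an atom with }J\leq L\}$ and show $L'=L$ by contradiction. Since every atom in the join lies below $L$, we get $L'\leq L$ for free, so the real work is ruling out strict inequality.

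Assume for contradiction that $L'<L$. I want to apply relative complementation to the chain $0_\Ll\leq L'\leq L$, inside the interval $[0_\Ll,L]$. This gives an element $\bar L\in\Ll$ with $L'\wedge\bar L=0_\Ll$ and $L'\vee\bar L=L$. Because $L'<L$, we must have $\bar L\neq 0_\Ll$ (otherwise $L=L'\vee 0_\Ll=L'$).

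Now I use finiteness of $\Ll$: the set $\{M\in\Ll:0_\Ll<M\leq\bar L\}$ is nonempty and finite, so it has a minimal element $J$. This $J$ is actually an atom of $\Ll$: any $M\in\Ll$ with $0_\Ll<M<J$ would also satisfy $M\leq\bar L$, contradicting minimality of $J$. Since $J\leq\bar L\leq L$, by definition of $L'$ we get $J\leq L'$. But then $J\leq L'\wedge\bar L=0_\Ll$, contradicting that $J$ is an atom (hence nonzero). This forces $L'=L$.

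I don't foresee a genuine obstacle; the argument is a direct translation of the usual matroid-lattice fact to this setting. The only subtle point is making sure the relative complement $\bar L$ is produced inside $[0_\Ll,L]$ (using the triple $0_\Ll\leq L'\leq L$), rather than in $\Ll$ itself, since the statement only assumes relative complementation and not complementation. Finiteness of $\Ll$ is what lets me extract an atom below $\bar L$; without it one would have to invoke a chain condition.
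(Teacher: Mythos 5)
Your proof is correct and takes essentially the same approach as the paper: form the join of all atoms below $L$, take a relative complement in $[0_\Ll,L]$, extract an atom below that complement using finiteness, and derive a contradiction with the meet being $0_\Ll$. You are slightly more explicit than the paper about where finiteness is used and why the complement must be taken inside $[0_\Ll,L]$, but the argument is the same.
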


\begin{proof}
Let $J_1,\dots,J_n$ be all the atoms in $\Ll$ such that $J_i\leq L$. If  $L>J_1\vee\dots\vee J_n$, then there exists $J$ such that $J\vee(J_1\vee\dots\vee J_n)=L$ and $J\wedge(J_1\vee\dots\vee J_n)=0_{\Ll}$. Let $J_{n+1}\leq J\leq L$ be an atom. Since $J\wedge(J_1\vee\dots\vee J_n)=0_{\Ll}$, then $J_{n+1}\notin\{J_1,\dots,J_n\}$, a contradiction.
\end{proof}

\begin{lemma}\label{lemma:vee}
Let $\Ll$ be a finite complemented modular lattice with height function $\hgt$ and let $\rho:\Ll\rightarrow\Z$ be a submodular bounded increasing function. If $L_1,L_2\in\Ll$ are such that $\rho(L_1\vee L_3)=\rho(L_1)$ for each atom $L_3\leq L_2$, then $\rho(L_1\vee\ L_2)=\rho(L_1)$. 
\end{lemma}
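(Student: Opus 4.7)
The plan is to reduce the statement to the one-atom hypothesis by decomposing $L_2$ into atoms and then inducting. By the previous Lemma~\ref{lemma:atoms}, since $\Ll$ is finite complemented modular (and hence relatively complemented by the remark in Section~\ref{section:prel}), we may write $L_2=J_1\vee\cdots\vee J_n$, where $J_1,\dots,J_n$ are all the atoms dominated by $L_2$. The goal thus becomes showing $\rho(L_1\vee J_1\vee\cdots\vee J_n)=\rho(L_1)$, with the hypothesis giving this for each single $J_i$.

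I would proceed by induction on $k$, showing that $\rho(L_1\vee J_1\vee\cdots\vee J_k)=\rho(L_1)$. The base case $k=1$ is exactly the hypothesis. For the inductive step, set $M_k=L_1\vee J_1\vee\cdots\vee J_k$ and apply the submodularity axiom L5 to $M_k$ and $L_1\vee J_{k+1}$:
\begin{equation*}
\rho(M_k)+\rho(L_1\vee J_{k+1})\ \geq\ \rho(M_k\vee(L_1\vee J_{k+1}))+\rho(M_k\wedge(L_1\vee J_{k+1})).
\end{equation*}
The left-hand side equals $\rho(L_1)+\rho(L_1)=2\rho(L_1)$ by the inductive hypothesis and the assumption on atoms. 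The first term on the right equals $\rho(M_{k+1})$. For the meet, observe that $L_1\leq M_k\wedge(L_1\vee J_{k+1})$, so by the bounded increasing property L4 we have $\rho(M_k\wedge(L_1\vee J_{k+1}))\geq\rho(L_1)$. Combining these yields $\rho(M_{k+1})\leq\rho(L_1)$, while the reverse inequality $\rho(M_{k+1})\geq\rho(L_1)$ again follows from L4. So $\rho(M_{k+1})=\rho(L_1)$, completing the induction.

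I do not foresee a real obstacle here: the hypothesis on atoms is exactly set up to feed into the inductive step via submodularity, and the only place where the structure of the lattice is used is in Lemma~\ref{lemma:atoms} to guarantee that every element is a join of atoms. The mild subtlety is making sure that the meet term in the submodular inequality is controlled from below, but this is immediate because $L_1$ sits inside both $M_k$ and $L_1\vee J_{k+1}$, so it sits inside their meet and the monotonicity half of L4 applies.
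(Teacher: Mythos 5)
Your proof is correct and is essentially the same as the paper's: both decompose $L_2$ into atoms via Lemma~\ref{lemma:atoms}, induct on the number of atoms joined to $L_1$, and close the inductive step with the same submodularity estimate, bounding the meet term from below by $\rho(L_1)$ via monotonicity. The only cosmetic difference is that the paper chains the inequalities $2\rho(L_1)\geq\cdots\geq 2\rho(L_1)$ in a single display rather than splitting off the reverse inequality.
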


\begin{proof}
Let $J_1,\dots,J_n$ be all the atoms in $\Ll$ such that $J_i\leq L_2$. By Lemma~\ref{lemma:atoms}, $L_2=J_1\vee\dots\vee J_n$. 
We prove the statement by induction on $n$. If $n=1$, then $L_2$ is an atom and the statement holds. For $n>1$, we have
\begin{equation*}
    \begin{split}
        2\rho(L_1)&=\rho(L_1\vee(J_1\vee\dots\vee J_{n-1}))+\rho(L_1\vee J_n)\\
        &\geq \rho((L_1\vee(J_1\vee\dots\vee J_{n-1}))\vee(L_1\vee J_n))+\rho((L_1\vee(J_1\vee\dots\vee J_{n-1}))\wedge (L_1\vee J_n))\\
        &\geq\rho(L_1\vee J_1\vee\dots\vee J_{n})+\rho(L_1)\geq 2\rho(L_1).
    \end{split}
\end{equation*}
It follows that $\rho(L_1\vee L_2)=\rho(L_1\vee J_1\vee\dots\vee J_{n})=\rho(L_1).$
\end{proof}

Both lemmas clearly fail for arbitrary lattices, e.g., over the lattice of ideals of a finite chain ring which is not a field.

\begin{example}
Let $\mathcal{L}$ be the lattice of ideals of $\Z_8$. Then $\mathcal{L}=\{0,(4),(2),\mathbb{Z}_8\}$ is a chain and the only atom is $(4)$. In particular, $(2)$ and $\mathbb{Z}_8$ are not joins of atoms. 

Consider now the $\mathbb{Z}$-latroid $(\lambda,\lVert\cdot\rVert,\Ll)$ of Example~\ref{ex:latroid&indep}. The only atom $L_3\leq\mathbb{Z}_8$ is $L_3=(4)$ and $\lambda((2)\vee (4))=\lambda((2))$, however, $\lambda((2)\vee\mathbb{Z}_8)=\lambda(\mathbb{Z}_8)=3\neq 2=\lambda((2))$.
\end{example}

The next proposition characterizes integer-valued functions that are the rank function of a $\Z$-latroid. It is a consequence of Lemma~\ref{lemma:vee}.

\begin{proposition}\label{proposition:newrankaxioms}
    Let $\Ll$ be a finite complemented modular lattice with height function $\hgt$. A function $\rho:\Ll\rightarrow\Z$ is a rank function of a $\Z$-latroid $(\rho,\hgt,\Ll)$ if and only if it satisfies the following properties:
    \begin{itemize}
        \item[R1'.] $\rho(0_{\Ll})=0$,
        \item[R2'.] $\rho(L)\leq \rho(L\vee J)\leq \rho(L)+1$, for every atom $J\in\Ll$ and every $L\in\Ll$,
        \item[R3'.] if $\rho(L)=\rho(L\vee J_1)=\rho(L\vee J_2)$ for some atoms $J_1,J_2\in\Ll$, then $\rho(L)=\rho(L\vee J_1\vee J_2)$.
    \end{itemize}
\end{proposition}
\begin{proof}
If $\rho$ is the rank function of a $\Z$-latroid, then clearly $\rho$ satisfies R1' and R2', and by Lemma~\ref{lemma:vee} it also satisfies R3'. 

Now, let $\rho:\Ll\rightarrow\Z$ be a function that satisfies R1', R2', and R3'. For $L,M\in\Ll$, $L\leq M$, it is easy to show that R2' implies L4 by induction on $\hgt(M)-\hgt(L)$. Therefore, it suffices to prove that $\rho$ is submodular. Let $L,M\in\Ll$. We proceed by induction on $\hgt(L)-\hgt(L\wedge M)$. If $\hgt(L)-\hgt(L\wedge M)=0$, then $L\leq M$ and modularity holds. Otherwise, let $J$ be an atom such that $L=L_1\vee J$ and $L\wedge M=L_1\wedge M$. By the induction hypothesis, we have
\begin{equation*}
\begin{split}
\rho(L\vee M)+\rho(L\wedge M)&=\rho((L_1\vee J)\vee M)+\rho(L_1\wedge M)\\
 &\leq \rho((L_1\vee J)\vee M)-\rho(L_1\vee M)+\rho(L_1)+\rho(M).
\end{split}
\end{equation*}
Hence, it suffices to prove that
$$\rho((L_1\vee J)\vee M)-\rho(L_1\vee M)\leq \rho(L_1\vee J)-\rho(L_1).$$
We proceed by induction on $\hgt(M)$. The case where $\hgt(M)=0$ is trivial. So suppose that $\hgt(M)>0$ and let $\bar J$ be an atom such that $M=\bar M\vee \bar J$ and $\hgt(\bar M)<\hgt(M)$. If $\rho(L_1\vee \bar M\vee \bar J)>\rho(L_1\vee \bar M)$, since $\rho((L_1\vee J)\vee \bar M\vee \bar J)\leq \rho((L_1\vee J)\vee \bar M)+1$, we have 
$\rho((L_1\vee J)\vee \bar M\vee \bar J)-\rho(L_1\vee \bar M\vee \bar J)\leq \rho((L_1\vee J)\vee \bar M)-\rho(L_1\vee \bar M)$. If instead
$\rho((L_1\vee J)\vee \bar M)>\rho(L_1\vee \bar M)$, then $\rho((L_1\vee J)\vee \bar M\vee \bar J)-\rho(L_1\vee \bar M\vee \bar J)\leq 1=\rho((L_1\vee J)\vee \bar M)-\rho(L_1\vee \bar M)$. In both cases, we obtain
    \begin{equation*}
            \rho((L_1\vee J)\vee \bar M\vee \bar J)-\rho(L_1\vee \bar M\vee \bar J)\leq \rho((L_1\vee J)\vee \bar M)-\rho(L_1\vee \bar M)\leq\rho(L_1\vee J)-\rho(L_1).
    \end{equation*}
    If instead $\rho(L_1\vee \bar M\vee \bar J)=\rho(L_1\vee \bar M)$ and $\rho((L_1\vee J)\vee \bar M)=\rho(L_1\vee \bar M)$, by R3' we obtain $\rho((L_1\vee J)\vee \bar M\vee \bar J)=\rho(L_1\vee \bar M)$, and this concludes the proof.
\end{proof}

Another important consequence of Lemma~\ref{lemma:vee} is that, for a finite complemented modular lattice with the height as length function, the set of independent elements of a $\Z$-latroid determines the rank function of the latroid, and hence the latroid itself.

\begin{proposition}\label{corollary:vee}
    Let $\Ll$ be a finite complemented modular lattice with height function $\hgt$. Let $\rho:\Ll\rightarrow\Z$ be a submodular bounded increasing function and let $L\in\Ll$. If $I$ is a maximal independent element in $[0,L]$, then $\rho(L)=\rho(I)$. In other words, the set of independent elements of a $\Z$-latroid $(\rho,\hgt,\Ll)$ uniquely determines $\rho$.
\end{proposition}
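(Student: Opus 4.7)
The plan is to deduce the statement from Lemma~\ref{lemma:vee} by taking $L_1 = I$ and $L_2 = L$; since $I \leq L$, we have $I \vee L = L$, so it suffices to verify that $\rho(I \vee J) = \rho(I)$ for every atom $J \leq L$. That verification is where all the work lies, and it is the step I expect to require most care.

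Fix an atom $J \leq L$. If $J \leq I$, then $I \vee J = I$ and the claim is immediate. So assume $J \not\leq I$. Because $J$ is an atom, the only elements below $J$ are $0_{\Ll}$ and $J$ itself; since $J \not\leq I$, we must have $I \wedge J = 0_{\Ll}$. By the modularity of $\hgt$ (which exists and is modular because $\Ll$ is finite and modular), we get
\[
\hgt(I \vee J) = \hgt(I) + \hgt(J) - \hgt(I \wedge J) = \hgt(I) + 1.
\]
In particular, $I < I \vee J \leq L$. Now I invoke the maximality of $I$ as an independent element in $[0_{\Ll}, L]$: since $I \vee J$ sits in $[0_{\Ll}, L]$ and strictly dominates $I$, it cannot be independent, so $\rho(I \vee J) < \hgt(I \vee J) = \hgt(I) + 1 = \rho(I) + 1$. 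On the other hand $\rho$ is increasing (part of L4), giving $\rho(I \vee J) \geq \rho(I)$. Thus $\rho(I \vee J) = \rho(I)$, which is exactly the hypothesis of Lemma~\ref{lemma:vee} for this atom.

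Applying Lemma~\ref{lemma:vee} with $L_1 = I$ and $L_2 = L$ then yields $\rho(L) = \rho(I \vee L) = \rho(I)$, as desired. The main obstacle, as flagged, is the step transferring maximality of $I$ into the strict inequality $\rho(I \vee J) < \hgt(I \vee J)$; this depends critically on $\hgt(I \vee J) = \hgt(I) + 1$, which is where complementedness (to guarantee $I \wedge J = 0_{\Ll}$ for an atom $J \not\leq I$) and modularity of $\hgt$ both enter. Everything else is either a formal identity ($I \vee L = L$) or a direct appeal to the already-established Lemma~\ref{lemma:vee}.
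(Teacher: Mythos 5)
Your proof is correct and follows the same route as the paper: show $\rho(I\vee J)=\rho(I)$ for every atom $J\leq L$ (via the height computation and maximality of $I$), then invoke Lemma~\ref{lemma:vee}; you simply spell out the step the paper leaves implicit. One small aside: the fact that $I\wedge J=0_{\Ll}$ for an atom $J\not\leq I$ does not actually use complementedness (it follows just from $J$ being an atom); complementedness enters only inside Lemma~\ref{lemma:vee} via Lemma~\ref{lemma:atoms}.
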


\begin{proof}
Since $I$ is maximal, we have $\rho(I\vee J)=\rho(I)$ for every atom in $L$. We conclude by Lemma~\ref{lemma:vee}.
\end{proof}

Lemma~\ref{lemma:vee} and Proposition~\ref{corollary:vee} also imply that, in the case of complemented lattices, the independent elements of a $\Z$-latroid satisfy a latroid version of the independence augmentation property.

\begin{proposition}\label{proposition:independent}
    Let $\Ll$ be a finite complemented modular lattice with height function $\hgt$. Consider a $\Z$-latroid $(\rho,\hgt,\Ll)$. The set of independent elements $\mathcal{I}$ of $\Ll$ satisfies the following properties:
    \begin{enumerate}
        \item[I1.] $0_{\Ll}\in\mathcal{I}$,
        \item[I2.] if $I_1\in\mathcal{I}$ and $I_2<I_1$, then $I_2\in\mathcal{I}$,
        \item[I3.] if $I_1,I_2\in\mathcal{I}$ and $\hgt(I_2)<\hgt(I_1)$, then there is an atom $J\leq I_1$ such that $J\nleq I_2$ and $I_2\vee J\in\mathcal{I}$. 
        \item[I4.] for any $L_1,L_2\in\Ll$ and $I_1,I_2\in\mathcal{I}$ maximal such that $I_1\leq L_1$ and $I_2\leq L_2$, there exists a maximal independent element $I_3\leq L_1\vee L_2$ contained in $I_1\vee I_2$. 
    \end{enumerate}
    Conversely, let $\mathcal{I}$ be a subset of $\Ll$ that satisfies properties I1, I2, and I4. Then there exists a unique function $\rho$ such that $(\rho,\hgt,\Ll)$ is a $\Z$-latroid whose set of independent elements is $\mathcal{I}$. Moreover, for any $L\in\Ll$, one has $\rho(L)=\hgt(I)$ for $I$ maximal among the elements of $\mathcal{I}$ which are dominated by~$L$.
\end{proposition}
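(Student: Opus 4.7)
I would verify the four axioms in turn, invoking Lemma~\ref{lemma:vee} and Proposition~\ref{corollary:vee} throughout. I1 is immediate from L1. For I2, given $I_1\in\mathcal{I}$ and $I_2<I_1$, applying L4 to $0_{\Ll}\leq I_2$ yields $\rho(I_2)\leq \hgt(I_2)$, while L4 applied to $I_2<I_1$ together with $\rho(I_1)=\hgt(I_1)$ yields $\rho(I_2)\geq \hgt(I_2)$, so $\rho(I_2)=\hgt(I_2)$ and $I_2\in\mathcal{I}$.

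\textbf{Forward direction (I3, I4).} For I3, I would argue by contradiction: if no atom $J\leq I_1$ with the stated properties existed, then $\rho(I_2\vee J)=\rho(I_2)$ would hold for every atom $J\leq I_1$, either trivially when $J\leq I_2$, or because $I_2\vee J\notin\mathcal{I}$ forces $\rho(I_2\vee J)<\hgt(I_2)+1$ which combined with monotonicity gives equality with $\rho(I_2)$. Lemma~\ref{lemma:vee} then yields $\rho(I_1\vee I_2)=\rho(I_2)$, contradicting $\rho(I_1\vee I_2)\geq\rho(I_1)>\rho(I_2)$. For I4, Proposition~\ref{corollary:vee} gives $\rho(L_i)=\hgt(I_i)$; the key intermediate claim is $\rho(L_1\vee L_2)=\rho(I_1\vee I_2)$, which I would prove by a double application of Lemma~\ref{lemma:vee}. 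Submodularity applied to $(I_1\vee J,\,I_1\vee I_2)$ combined with $\rho(I_1\vee J)=\rho(I_1)$ (valid for every atom $J\leq L_1$ by Proposition~\ref{corollary:vee}) yields $\rho(I_1\vee I_2\vee J)=\rho(I_1\vee I_2)$; Lemma~\ref{lemma:vee} then gives $\rho(L_1\vee I_2)=\rho(I_1\vee I_2)$, and a symmetric argument over atoms $J\leq L_2$ extends this to $\rho(L_1\vee L_2)=\rho(I_1\vee I_2)$. Choosing $I_3$ maximal in $\mathcal{I}\cap[0,I_1\vee I_2]$, Proposition~\ref{corollary:vee} forces $\hgt(I_3)=\rho(L_1\vee L_2)$, so $I_3$ is also maximal in $[0,L_1\vee L_2]$.

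\textbf{Converse direction.} Uniqueness follows immediately from Proposition~\ref{corollary:vee} applied to any putative latroid with independent set $\mathcal{I}$. For existence, I would set $\rho(L):=\hgt(I)$ for $I$ a maximal element of $\mathcal{I}\cap[0,L]$, once well-definedness is established. Well-definedness---that any two maximal elements of $\mathcal{I}\cap[0,L]$ have the same height---is the main obstacle; I would attempt an induction on $\hgt(L)$. For incomparable maximal $I_1,I_2\in\mathcal{I}\cap[0,L]$, the case $I_1\vee I_2<L$ follows from the inductive hypothesis applied to $[0,I_1\vee I_2]$ (where $I_1,I_2$ remain maximal, since any strict extension would survive inside $[0,L]$), whereas the case $I_1\vee I_2=L$ is more delicate and would require combining I4 with the relatively complemented structure of $\Ll$. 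Granting well-definedness, L1--L3 are immediate from I1 and the properties of $\hgt$; L4 reduces to modularity of $\hgt$ applied to a max-height independent $I\leq L_2$ and its intersection $I\wedge L_1\in\mathcal{I}$ (via I2); and L5 follows by picking $K$ max-height in $[0,L\wedge M]$, extending to maximal $I\leq L$ and $J\leq M$ with $I\wedge J=K$ (forced by I2 together with maximality of $K$), applying I4 to obtain $I_3\leq I\vee J$ maximal in $[0,L\vee M]$, and using modularity of $\hgt$ to conclude
\[
\rho(L\vee M)=\hgt(I_3)\leq \hgt(I\vee J)=\hgt(I)+\hgt(J)-\hgt(K)=\rho(L)+\rho(M)-\rho(L\wedge M).
\]
That $\mathcal{I}$ is exactly the independent set of the constructed $\rho$ is then a direct check using strict monotonicity of $\hgt$.
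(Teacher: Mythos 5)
Your proposal tracks the paper's proof closely. Forward direction: I1 and I2 are handled the same way (L1 and L4); your I3 argument by contradiction, appealing to Lemma~\ref{lemma:vee} and the inequality $\hgt(I_1)\leq\rho(I_1\vee I_2)=\rho(I_2)=\hgt(I_2)$, is precisely the paper's; for I4 you run Lemma~\ref{lemma:vee} atom-by-atom after a submodularity step, whereas the paper applies submodularity directly to the pair $(I_1\vee I_2,\,L_1)$ and uses Proposition~\ref{corollary:vee} to identify $\rho(L_1)$ with $\rho(I_1)$ — the two routes are equivalent, the paper's being marginally shorter. Converse: both you and the paper define $\rho(L)$ via a maximal element of $\mathcal{I}$ below $L$, obtain uniqueness from Proposition~\ref{corollary:vee}, and prove submodularity by picking $I_3$ maximal in $[0,L_1\wedge L_2]$, extending to $I_1\in[I_3,L_1]$, $I_2\in[I_3,L_2]$, applying I4 to conclude $\rho(L_1\vee L_2)=\rho(I_1\vee I_2)$, and using modularity of $\hgt$ to finish — your L5 computation is essentially the paper's rewritten with $K=I_1\wedge I_2$.

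The one place you diverge is the converse's well-definedness worry, which you flag as ``the main obstacle'' and leave open in the case $I_1\vee I_2=L$. You are over-worrying a correct instinct: the cleanest resolution, and the one implicit in the paper, is to \emph{define} $\rho(L)=\max\{\hgt(I):I\in\mathcal{I},\ I\leq L\}$, which is manifestly well-defined, verify L1--L5 for this $\rho$, and then obtain the ``for any maximal $I$'' form of the formula a posteriori from Proposition~\ref{corollary:vee} once $(\rho,\hgt,\Ll)$ is known to be a latroid. Your intended induction on $\hgt(L)$ for well-definedness is then unnecessary as a prerequisite — it becomes a corollary. (Note, however, that to make the L5 step fully airtight under this reading one must check that the $I_1, I_2$ chosen with $I_3\leq I_i\leq L_i$ realize $\rho(L_i)$; this is the same subtlety in both your writeup and the paper's, and it is the right place to apply the boundedness of $\rho$ already established at that point rather than a separate well-definedness lemma.) Apart from this shared delicacy, your proposal is correct and matches the paper's argument.
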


\begin{proof}
Since $(\rho,\hgt,\Ll)$ is a $\Z$-latroid, we have $\rho(0_{\Ll})=0=\hgt(0_{\Ll})$, and hence $0_{\Ll}\in\mathcal{I}$. Property I2 is satisfied, since $\rho$ is bounded increasing. Now consider $I_1,I_2\in\mathcal{I}$ with $\hgt(I_2)<\hgt(I_1)$. By Lemma~\ref{lemma:atoms}, we can write $I_1=J_1\vee\dots\vee J_n$ with $\hgt(J_i)=1$ for all $i\in[n]$. Assume by contradiction that for every $J_i\nleq I_2$ we have $I_2\vee J_i\notin \mathcal{I}$. Then, for all $i\in[n]$ we have $\rho(J_i\vee I_2)<\hgt(I_2)+1$, and so $\rho(J_i\vee I_2)=\rho(I_2)$. By Lemma~\ref{lemma:vee} we obtain $\rho(I_1\vee I_2)=\rho(I_2)$, and this implies
$$\hgt(I_1)\leq \rho(I_1\vee I_2)=\rho(I_2)=\hgt(I_2),$$
that is a contradiction. This establishes Property I3. As for Property I4, by the submodularity of $\rho$ we have
\begin{equation*}
        \rho((I_1\vee I_2)\vee L_1)\leq\rho(I_1\vee I_2)+\rho(L_1)-\rho((I_1\vee I_2)\wedge L_1)\leq \rho(I_1\vee I_2)+\rho(L_1)-\rho(I_1)=\rho(I_1\vee I_2),
\end{equation*}
where the last equality follows from Proposition~\ref{corollary:vee}. Moreover, since $L_1\vee L_2\geq I_1\vee I_2$, then
\begin{equation*}
\rho(L_1\vee L_2) \geq \rho(I_1\vee I_2)
\end{equation*}
by L4.
By Proposition~\ref{corollary:vee}, a maximal independent element $I_3\in[0,I_1\vee I_2]$ has $\rho(I_3)=\rho(I_1\vee I_2)=\rho(L_1\vee L_2)$, therefore $I_3$ is also a maximal independent element in $[0,L_1\vee L_2]$.

Conversely, let $\Ll$ be a finite complemented modular lattice with height function $\hgt$ and let $\mathcal{I}$ be a subset of $\Ll$ that satisfies properties I1, I2, and I4. We claim that there exists a submodular and bounded increasing function $\rho$ such that $\rho(I)=\hgt(I)$ for every $I\in\mathcal{I}$ and that such a function is unique. Let $L\in\Ll$ and let $I$ be maximal among the elements of $\mathcal{I}$ which are dominated by $L$. By Proposition~\ref{corollary:vee} it must be $\rho(L)=\rho(I)$. This shows that the value of $\rho$ on each element of $\Ll$ is determined. It is easy to check that $\rho$ is bounded increasing, so we just need to prove that $\rho$ is submodular. Consider $L_1,L_2\in\Ll$. Let $I_3$ be a maximal independent element in $[0,L_1\wedge L_2]$ and let $I_1,I_2$  be maximal independent elements in $[I_3,L_1]$ and in $[I_3,L_2]$, respectively. We have $\rho(L_1\wedge L_2)=\rho(I_3)\leq\rho(I_1\wedge I_2)$, so equality holds. By Property I4, $\rho(L_1\vee L_2)=\rho(I_1\vee I_2)$. Therefore, using the fact that $\rho$ is bounded increasing and submodular on $\mathcal{I}$, we obtain
\begin{equation*}
        \rho(L_1\vee L_2)+\rho(L_1\wedge L_2)\leq\rho(I_1\vee I_2)+\rho(I_1\wedge I_2)\leq
        \rho(I_1)+\rho(I_2)=\rho(L_1)+\rho(L_2),
\end{equation*}
which concludes the proof.
\end{proof}

Notice that Property I3, which corresponds to the independence augmentation property of matroids, is not used in the proof of the previous proposition. Indeed, Property I3 is implied by the other properties. This is consistent with what happens in the case of $q$-matroids. We refer to~\cite{byrne2022constructions, gluesing2022independent, jurrius2009qmatroid} for the independence axioms of $q$-matroids and to~\cite{ceria2024alternatives} for a definition containing only three axioms. We chose to include Property I3 because of the next proposition, that applies to matroids among others. We start with a preliminary lemma.

\begin{lemma}\label{lemma:distr}
    Let $\Ll$ be a distributive lattice, and let $J,L_1,L_2\in\Ll$. If $J$ is an atom and $J\leq L_1\vee L_2$, then $J\leq L_1$ or $J\leq L_2$.
\end{lemma}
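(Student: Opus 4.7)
The plan is to apply distributivity directly to rewrite $J$ as a join of two meets, and then use the minimality inherent in the definition of an atom to force one of those meets to equal $J$ itself.

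More concretely, first I would observe that since $J \leq L_1 \vee L_2$, the absorption/meet relation gives $J = J \wedge (L_1 \vee L_2)$. Applying distributivity of $\wedge$ over $\vee$ then yields
\[
J = (J \wedge L_1) \vee (J \wedge L_2).
\]
Next, I would use the fact that $J$ is an atom, meaning $J \neq 0_{\Ll}$ and there is no element strictly between $0_{\Ll}$ and $J$. Since $J \wedge L_1 \leq J$ and $J \wedge L_2 \leq J$, each of these two meets lies in $\{0_{\Ll}, J\}$. If both equalled $0_{\Ll}$, their join would also be $0_{\Ll}$, contradicting $J \neq 0_{\Ll}$. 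Hence at least one of them equals $J$, which is equivalent to $J \leq L_1$ or $J \leq L_2$.

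There is essentially no obstacle here: the argument is a one-line consequence of distributivity combined with the definition of an atom, and no further lattice-theoretic machinery (complementation, modularity, finiteness) is required. The only subtlety worth being explicit about is that ``minimal'' in the definition of atom is understood as minimal among the nonzero elements, so that $L \leq J$ forces $L \in \{0_{\Ll}, J\}$; once this is spelled out, the proof is immediate.
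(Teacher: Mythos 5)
Your proof is correct and follows essentially the same path as the paper's: both apply $J = J \wedge (L_1 \vee L_2)$ and then distributivity to get $J = (J \wedge L_1) \vee (J \wedge L_2)$. You simply spell out the final atom-based step (that each meet is $0_{\Ll}$ or $J$ and both cannot be $0_{\Ll}$) that the paper leaves implicit.
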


\begin{proof}
Since $\Ll$ is a distributive lattice,
    \begin{equation*}
        J=J\wedge(L_1\vee L_2)= (J\wedge L_1)\vee (J\wedge L_2),
    \end{equation*}
and so $J\leq L_1$ or $J\leq L_2$. 
\end{proof}

The statement of the lemma is false in general for a lattice which is not distributive.

\begin{example}
Let $\Ll$ be the lattice of linear subspaces of $\F_2^2$. Then $J=\langle(1,1)\rangle$ is an atom and $J\leq\langle(1,0)\rangle\vee\langle(0,1)\rangle$, but $J\not\leq\langle(1,0)\rangle$ and $J\not\leq\langle(0,1)\rangle$.
\end{example}

Using Lemma~\ref{lemma:distr}, we are now ready to prove that, over a finite complemented distributive lattice, property I4 is not necessary, as it is implied by the other properties of independent sets. This applies, for example, to the set of independent sets of a matroid.

\begin{proposition}\label{prop:distributive}
    Let $\Ll$ be a finite complemented distributive lattice with height function $\hgt$. Consider a $\Z$-latroid $(\rho,\hgt,\Ll)$. Then, Properties I1, I2, and I3 imply Property I4.
\end{proposition}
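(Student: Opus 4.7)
The goal is to produce a maximal independent element $I_3 \leq L_1 \vee L_2$ with $I_3 \leq I_1 \vee I_2$. The plan is to reduce this task to the single rank identity
$$\rho(L_1 \vee L_2) = \rho(I_1 \vee I_2),$$
after which I can simply take $I_3$ to be any maximal independent element of $[0, I_1 \vee I_2]$ (which exists by finiteness of $\Ll$). Indeed, Proposition~\ref{corollary:vee} would then give $\hgt(I_3) = \rho(I_3) = \rho(I_1 \vee I_2) = \rho(L_1 \vee L_2)$, and any independent $I' \in [0, L_1 \vee L_2]$ with $I_3 \leq I'$ would satisfy $\hgt(I') = \rho(I') \leq \rho(L_1 \vee L_2) = \hgt(I_3)$, forcing $I_3 = I'$ by strict monotonicity of the height function.

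To prove the rank identity, I would invoke Lemma~\ref{lemma:vee}, which reduces the problem to showing $\rho((I_1 \vee I_2) \vee J) = \rho(I_1 \vee I_2)$ for every atom $J \leq L_1 \vee L_2$. This is where distributivity is used, and only here: by Lemma~\ref{lemma:distr}, any such atom satisfies $J \leq L_1$ or $J \leq L_2$, and up to swapping the roles of $L_1, L_2$ I may assume $J \leq L_1$. Because $I_1$ is maximal independent in $[0, L_1]$, Proposition~\ref{corollary:vee} yields $\rho(L_1) = \rho(I_1)$, and combining this with the chain $I_1 \leq I_1 \vee J \leq L_1$ forces $\rho(I_1 \vee J) = \rho(I_1)$.

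The heart of the argument is then to apply submodularity to the pair $I_1 \vee J$ and $I_1 \vee I_2$, obtaining
$$\rho((I_1 \vee I_2) \vee J) + \rho((I_1 \vee J) \wedge (I_1 \vee I_2)) \leq \rho(I_1 \vee J) + \rho(I_1 \vee I_2).$$
The key observation is that $I_1$ lies below both $I_1 \vee J$ and $I_1 \vee I_2$, so by monotonicity $\rho((I_1 \vee J) \wedge (I_1 \vee I_2)) \geq \rho(I_1) = \rho(I_1 \vee J)$. These two terms cancel and leave $\rho((I_1 \vee I_2) \vee J) \leq \rho(I_1 \vee I_2)$; the reverse inequality is monotonicity.

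The main obstacle I expect is precisely the choice of pair in the submodularity step. The instinctive pairing of $I_1 \vee J$ with $I_2$ fails because it leaves an uncancellable $\rho(I_1 \wedge I_2)$ contribution; replacing $I_2$ with $I_1 \vee I_2$ raises the meet to contain $I_1$ and allows the cancellation to close. Everything else is straightforward bookkeeping via Lemma~\ref{lemma:vee}, Proposition~\ref{corollary:vee}, and the strict monotonicity of $\hgt$, with distributivity entering the argument only in the single atom-splitting step.
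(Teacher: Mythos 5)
Your argument is internally correct, but it proves a reading of the proposition that is already known and misses the one the paper actually needs. You take the hypothesis ``Consider a $\Z$-latroid $(\rho,\hgt,\Ll)$'' at face value and work with the rank function $\rho$ throughout: Lemma~\ref{lemma:vee}, Proposition~\ref{corollary:vee}, and the submodularity bookkeeping all presuppose that $\rho$ exists. But once a $\Z$-latroid on a complemented \emph{modular} lattice is given, the first half of Proposition~\ref{proposition:independent} already establishes I4 without any distributivity assumption, so your proof (which, tellingly, never invokes I3) rederives a fact that is already on the table. The remark immediately preceding the proposition explains why I3 was included in the axiom list, and the corollary after Proposition~\ref{proposition:bases} invokes the present proposition precisely in the cryptomorphism direction: there one is handed only a set $\mathcal{I}\subseteq\Ll$ (coming from bases) satisfying I1, I2, I3, and one must deduce I4 so that Proposition~\ref{proposition:independent} can then \emph{construct} the rank function. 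In that context no $\rho$ is available; appealing to Lemma~\ref{lemma:vee} and Proposition~\ref{corollary:vee} is circular, because those rely on the very $\rho$ one is trying to build.

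The paper's proof therefore stays entirely at the level of $\mathcal{I}$ and the axioms I1--I3: it picks a maximal $I_3\in\mathcal{I}$ in $[I_1,L_1\vee L_2]$, uses distributivity and the maximality of $I_1$ in $[0_\Ll,L_1]$ to get $I_3\wedge L_1=I_1$, then repeatedly applies I3 to augment $I_2$ by atoms of $I_3$, producing a maximal $L\in\mathcal{I}$ inside $I_2\vee I_3$ that is also below $L_1\vee L_2$, and closes with a chain of distributivity identities showing $L\le I_1\vee I_2$. To salvage your plan you would first have to define $\rho$ directly from $\mathcal{I}$ (say by $\rho(L)=\hgt(I)$ for $I$ maximal in $\mathcal{I}\cap[0_\Ll,L]$) and prove from I1, I2, I3 alone that this is well defined, bounded increasing, and submodular; but establishing submodularity of that candidate is essentially equivalent to I4, so nothing is gained. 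Your identification of where distributivity enters (the atom-splitting via Lemma~\ref{lemma:distr}) is a nice observation, but it only simplifies the already-settled situation where a latroid is in hand.
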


\begin{proof}
Consider $L_1,L_2\in\Ll$ and let $I_1,I_2\in\mathcal{I}$ be maximal such that $I_1\leq L_1$ and $I_2\leq L_2$. Let $I_3=I_1\vee J_1\vee\dots\vee J_n$ be a maximal independent element in $[I_1, L_1\vee L_2]$, where $J_1,\dots,J_n$ are all the atoms in $[0_{\Ll}, I_3]$ such that $J_i\nleq I_1$. Since $I_1$ is maximal independent in $[0_{\Ll},L_1]$, $J_i\nleq L_1$ for all $i\in[n]$, hence $I_3\wedge L_1=I_1$ by distributivity.
   
By repeatedly applying Property I3 to $I_3$ and $I_2$, we find a maximal independent element $I$ in $[I_2, L_1\vee L_2]$ with the property that $I\leq I_2\vee I_3$. By distributivity $I=(I\wedge L_1)\vee(I\wedge L_2)$, moreover $I_2\leq I\wedge L_2$ implies $I\wedge L_2=I_2$, since $I_2$ is maximal independent in $[0_{\Ll},L_2]$. Moreover, $I=I\wedge(I_2\vee I_3)=(I\wedge I_2)\vee(I\wedge I_3)=I_2\vee(I\wedge I_3)$ and $I\wedge I_3=(I\wedge I_3)\wedge(L_1\vee L_2)=(I\wedge I_3\wedge L_1)\vee(I\wedge I_3\wedge L_2)\leq (I_3\wedge L_1)\vee (I\wedge L_2)=I_1\vee I_2$. This shows that $I\leq I_1\vee I_2$.
\end{proof}

Proposition~\ref{proposition:independent} implies that a $\Z$-latroid over a finite complemented modular lattice is fully described by its set of independent elements. Similarly to what happens for matroids and $q$-matroids~\cite{byrne2022constructions,jurrius2009qmatroid,oxleybook}, latroids can also be described using bases and circuits.

\begin{proposition}\label{proposition:bases}
Let $\Ll$ be a finite complemented modular lattice with height function $\hgt$. A subset $\mathcal{B}\subseteq\Ll$ is the set of bases of a $\Z$-latroid $(\rho,\hgt,\Ll)$ if and only if 
\begin{enumerate}
    \item[B1.] $\mathcal{B}\neq\emptyset$,
    \item[B2.] if $B_1=J_1\vee\dots\vee J_n$ and $B_2=H_1\vee\dots\vee H_m$ where $J_1,\dots,J_n,H_1,\dots,H_m$ are atoms, and $J_i\nleq B_2$, then there exists an index $s\in[m]$ such that $H_s\nleq B_1$ and $J_1\vee\dots J_{i-1}\vee J_{i+1}\vee\dots\vee J_n\vee H_s\in\mathcal{B}$.
    \item[B3.] for any $L_1,L_2\in\Ll$, let $B_1,B_2\in\mathcal{B}$ be such that  $B_1\wedge L_1$ and $B_2\wedge L_2$ are maximal. Then there exists $B_3\in\mathcal{B}$ such that $B_3\wedge(L_1\vee L_2)$ is maximal and $B_3\wedge(L_1\vee L_2)\leq(B_1\wedge L_1)\vee(B_2\wedge L_2)$. 
\end{enumerate}
If the equivalent conditions hold, then for any $L\in\Ll$ we have $\rho(L)=\hgt(L\wedge B)$ for any $B\in\mathcal{B}$ such that $L\wedge B$ is maximal.  
\end{proposition}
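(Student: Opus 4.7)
The plan is to deduce this characterization from Proposition~\ref{proposition:independent} via the bijection between the bases of a $\Z$-latroid and the maximal elements of its set of independent elements. Indeed, if $(\rho,\hgt,\Ll)$ is a $\Z$-latroid and $B$ is a basis, then $\rho(B)=\hgt(B)=\rho(1_\Ll)$ forces every $M>B$ to satisfy $\rho(M)\leq\rho(1_\Ll)<\hgt(M)$; conversely, any maximal independent $I$ satisfies $\rho(I)=\rho(1_\Ll)$ by Proposition~\ref{corollary:vee} and hence is a basis. The natural passage between the two descriptions is $\mathcal{I}_{\mathcal{B}}:=\{L\in\Ll:L\leq B\text{ for some }B\in\mathcal{B}\}$ in one direction, and taking maximal elements in the other.

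For the forward implication, B1 is the existence of a maximal independent element, guaranteed by I1 and the finiteness of $\Ll$. For B2, after choosing irredundant atom decompositions (possible by Lemma~\ref{lemma:atoms}), the element $B_1'$ obtained from $B_1$ by removing $J_i$ is independent of height $\hgt(B_1)-1<\hgt(B_2)$; applying I3 to $B_1'$ and $B_2$ yields an atom $T\leq B_2$ with $T\not\leq B_1'$ and $B_1'\vee T\in\mathcal{I}$, which must coincide with some $T_s$ by Lemma~\ref{lemma:atoms} and is a basis by a height count. For B3, set $I_i:=B_i\wedge L_i$; these lie in $\mathcal{I}$ and are maximal independent in $[0,L_i]$ by hypothesis. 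Apply I4 to obtain a maximal independent $I_3\leq L_1\vee L_2$ with $I_3\leq I_1\vee I_2$, and extend $I_3$ to a basis $B_3$ via iterated applications of I3; then $B_3\wedge(L_1\vee L_2)\geq I_3$, with equality forced by the maximality of $I_3$. The rank formula $\rho(L)=\hgt(L\wedge B)$ then follows, since $L\wedge B$ is independent below $L$ and any maximal independent $I\leq L$ extends to a basis $B'$ with $L\wedge B'\geq I$.

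For the converse, define $\mathcal{I}_{\mathcal{B}}$ as above; axioms I1 and I2 are immediate. To verify I4, given maximal independent elements $I_i\in\mathcal{I}_{\mathcal{B}}\cap[0,L_i]$, pick bases $B_i\supseteq I_i$; maximality forces $B_i\wedge L_i=I_i$, so B3 delivers $B_3\in\mathcal{B}$ with $I_3:=B_3\wedge(L_1\vee L_2)$ maximal among such meets and dominated by $I_1\vee I_2$. Using B2, one shows that all elements of $\mathcal{B}$ have a common height, whereupon ``maximal meet with $L_1\vee L_2$'' coincides with ``maximal independent element in $[0,L_1\vee L_2]$'', giving I4. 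Proposition~\ref{proposition:independent} then produces the unique $\Z$-latroid with $\mathcal{I}_{\mathcal{B}}$ as independents, and its bases are precisely the maximal elements of $\mathcal{I}_{\mathcal{B}}$, which coincide with $\mathcal{B}$ by the same equi-height argument.

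The main obstacle is deriving from B2 alone that all members of $\mathcal{B}$ share a common height and that the maximality condition in B3 matches the notion of maximal independence required in I4; both rely crucially on the atom decomposition of Lemma~\ref{lemma:atoms}, which in turn uses the complementation of $\Ll$. A secondary subtlety is that B2 must be applied to \emph{irredundant} atom decompositions, without which the exchange $B_1'\vee T_s$ could strictly dominate $B_1$ and thus fail to lie in $\mathcal{B}$.
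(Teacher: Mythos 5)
Your plan — pass through the bijection $\mathcal{B}\leftrightarrow\mathcal{I}_{\mathcal{B}}=\{L:\exists B\in\mathcal{B},\,L\leq B\}$ and reduce to Proposition~\ref{proposition:independent} — is exactly the route the paper takes. The paper's own argument is terse on this point (``I1 implies B1, I3 implies B2, and I4 implies B3''); you flesh it out correctly, and your two concluding cautions identify real subtleties.

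A few corrections and refinements to your write-up. First, Lemma~\ref{lemma:atoms} does not produce an irredundant decomposition: it writes each element as the join of \emph{all} atoms below it, which in a subspace lattice over $\F_q$ with $q\geq 2$ is typically highly redundant. The existence of an irredundant atom decomposition of length $\hgt(L)$ comes instead from the grading: pick atoms one at a time each strictly increasing the height. This is the decomposition you actually need, and it is the one under which dropping one atom decreases height by exactly $1$. Second, in your converse direction the detour through ``all bases have common height'' is unnecessary: the equivalence between ``$B\wedge L$ maximal among meets with elements of $\mathcal{B}$'' and ``$B\wedge L$ maximal in $\mathcal{I}_{\mathcal{B}}\cap[0,L]$'' is immediate from I2 and the definitions (any $I'\in\mathcal{I}_{\mathcal{B}}$ with $I'\leq L$ lies below some $B'\in\mathcal{B}$, whence $I'\leq B'\wedge L$), with no height comparison needed. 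This is precisely the observation the paper uses to translate B3 into I4. Third, in deriving B2 from I3 you should be slightly more careful in two places: (a) I3 hands you an atom of $B_2$, but not necessarily one of the given $T_1,\dots,T_m$; to remedy this one needs the stronger form of Lemma~\ref{lemma:vee} where the atoms $L_3$ range only over a chosen generating set of $L_2$ rather than all of them (the proof goes through verbatim), from which the corresponding strengthening of I3 follows. (b) I3 gives $T\nleq B_1'$ but not automatically $T\nleq B_1$; unlike the matroid case, where $T\leq B_1$ and $T\nleq B_1\setminus\{J_i\}$ would force $T=J_i\in B_2$, in a modular lattice there may be several atoms of $B_1$ not below $B_1'$, so ruling out $T\leq B_1$ requires a short additional argument (e.g., replace $B_1'$ by $B_1'\vee Z$ where $Z$ is the join of the atoms of $B_2$ below $B_1$, and argue via Lemma~\ref{lemma:vee} that some atom of $B_2$ outside $B_1$ augments it). None of this changes the structure of your proof, but this last point is exactly where the latroid setting departs from the matroid case, and it is glossed over in the paper as well.
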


\begin{proof}
Let $\mathcal{B}$ be a subset of $\Ll$ satisfying properties B1, B2, and B3. Consider the set $$\mathcal{I}=\{I\in\Ll:\text{there exists }B\in\mathcal{B}\text{ such that }I\leq B\}.$$
Clearly, $\mathcal{I}$ satisfies I1 and I2. In order to prove that it also satisfies I4, we notice that an element $I\in\mathcal{I}$ is maximal in $L$ if and only if there exists $B\in\mathcal{B}$ such that it has maximal intersection with $L$ among the elements in $\mathcal{B}$ and $I=L\wedge B$. Property I4 now follows directly from B3.
Moreover, notice that $\mathcal{B}$ is by definition the set of maximal elements of $\mathcal{I}$.
    
Conversely, let $\mathcal{I}$ be a subset of $\Ll$ that satisfies properties I1, I2, I3, and I4. Consider the set $$\mathcal{B}=\{B\in\mathcal{I}:B\text{ is maximal with respect to the order in the lattice}\}.$$
It is easy to check that I1 implies B1, I3 implies B2, and I4 implies B3. 
We conclude by Proposition~\ref{proposition:independent}.
\end{proof}

Notice that all bases have the same height by I3, so in B2 we have $n=m$. Similarly to the properties of independent elements, here we also have that one of the axioms is redundant. Indeed, the proof of the proposition shows that B3 is equivalent to I4 and that I3 implies B2. It follows that B3 implies B2. However, if the lattice is distributive, then the two properties are equivalent.

\begin{corollary}
Let $\Ll$ be a finite complemented distributive lattice with height function $\hgt$. Consider a $\Z$-latroid $(\rho,\hgt,\Ll)$. Then B2 implies B3.
\end{corollary}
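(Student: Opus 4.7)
The plan is to identify $\mathcal{B}$ with the collection of bases of a matroid on the atoms of $\Ll$, and then invoke Proposition~\ref{proposition:bases} via Remark~\ref{remark:latroidtomatroid}. Since $\Ll$ is a finite complemented distributive lattice it is a finite Boolean algebra; concretely, the map $L\mapsto\{J\in E:J\leq L\}$, where $E$ is the set of atoms of $\Ll$, is a lattice isomorphism $\Ll\cong(\mathcal{P}(E),\cup,\cap)$ identifying $\hgt$ with cardinality. In particular every element of $\Ll$ has a unique atom decomposition as a join of distinct atoms.

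Under this identification, property B1 is the non-emptiness of $\mathcal{B}$ and property B2 is precisely the classical basis-exchange axiom for matroids on the ground set $E$. Hence $\mathcal{B}$ is the collection of bases of a matroid on $E$; let $\rho:\mathcal{P}(E)\rightarrow\Z$ be its rank function. By Remark~\ref{remark:latroidtomatroid}, the triple $(\rho,|\cdot|,\mathcal{P}(E))$ is a $\Z$-latroid whose bases, namely the elements $L$ with $\rho(L)=|L|=\rho(E)$, coincide with the matroid bases $\mathcal{B}$.

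Proposition~\ref{proposition:bases} asserts that the set of bases of any $\Z$-latroid satisfies B1, B2, and B3; applied to $(\rho,|\cdot|,\mathcal{P}(E))$, it yields that $\mathcal{B}$ satisfies B3, which is the desired conclusion. An alternative route, more in line with the structure of this subsection, is to define $\mathcal{I}=\{I\in\Ll:I\leq B\text{ for some }B\in\mathcal{B}\}$, check I1 and I2 directly, derive I3 from B2, apply Proposition~\ref{prop:distributive} to obtain I4, and finally translate I4 back to B3 by setting $I_i=B_i\wedge L_i$ and choosing $B_3\in\mathcal{B}$ dominating the resulting $I_3$. The main technical obstacle in either route is the careful verification that B2 coincides with the standard matroid basis-exchange axiom under $\Ll\cong\mathcal{P}(E)$: this hinges on recognizing that atoms of $\Ll$ correspond to singletons of $E$ and that the atom decompositions appearing in B2 are automatically the canonical representations of elements of a Boolean lattice as joins of distinct atoms.
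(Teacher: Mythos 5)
Your proof is correct. Your first route, via matroid theory, is genuinely different from the paper's. The paper argues entirely inside the latroid framework: it sets $\mathcal{I}=\{I\in\Ll:I\leq B\text{ for some }B\in\mathcal{B}\}$, shows that B1 and B2 yield I1, I2, I3 for $\mathcal{I}$, invokes Proposition~\ref{prop:distributive} to obtain I4 (this is precisely where distributivity is used), and translates I4 back into B3 as in the proof of Proposition~\ref{proposition:bases} --- which matches your sketched alternative route, including the translation $I_i=B_i\wedge L_i$ and extension of $I_3$ to a basis. Your primary route instead identifies $\Ll$ with a finite Boolean algebra $\mathcal{P}(E)$ (valid, since a finite complemented distributive lattice is Boolean), recognizes B2 as the classical basis-exchange axiom, builds a matroid on $E$ with basis collection $\mathcal{B}$, converts it to a latroid via Remark~\ref{remark:latroidtomatroid}, and applies the forward direction of Proposition~\ref{proposition:bases}. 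This is slicker but silently imports the fact that B1 and B2 alone characterize matroid bases --- a cryptomorphism the paper never cites --- whereas the paper's argument is self-contained. A small point worth making explicit: B2 is stated for arbitrary atom decompositions $B_1=J_1\vee\dots\vee J_n$, but in a Boolean lattice the atoms below an element are unique, so only the duplicate-free decomposition is relevant, and on it B2 is literally the exchange axiom; it would strengthen your write-up to say so rather than flag it as an ``obstacle.''
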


\begin{proof}
Assume without loss of generality that $\mathcal{B}\neq\emptyset$.
It suffices to show that B2 implies I1, I2, and I3 for $\mathcal{I}=\{I\in\Ll:\text{there exists }B\in\mathcal{B}\text{ such that }I\leq B\}$. In fact, if this is the case, then they also imply I4 by Proposition~\ref{prop:distributive}. We conclude as in the proof of Proposition~\ref{proposition:bases}. 
\end{proof}

The next proposition concerns the properties of the circuits of a latroid.

\begin{proposition}\label{proposition:circuits}
    Let $\Ll$ be a finite complemented modular lattice with height function $\hgt$. A subset $\mathcal{C}\subseteq\Ll$ is the set of circuits of a $\Z$-latroid $(\rho,\hgt,\Ll)$ if and only if 
    \begin{enumerate}
        \item[C1.] $0_{\Ll}\notin\mathcal{C}$,
        \item[C2.] if $C_1,C_2\in\Cc$ are such that $C_1\leq C_2$, then $C_1=C_2$,
        \item[C3.] if $C_1,C_2\in\Cc$  are distinct and $L\leq C_1\vee C_2$ is such that $\hgt(L)=\hgt(C_1\vee C_2)-1$, then there exists $C_3\in\Cc$ such that $C_3\leq L$.
    \end{enumerate}
\end{proposition}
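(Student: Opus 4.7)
The plan is to prove both implications, using Proposition~\ref{proposition:independent} in the reverse direction.

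For the forward direction, let $(\rho, \hgt, \Ll)$ be a $\Z$-latroid with circuit set $\Cc$. Property C1 is immediate, since $\rho(0_\Ll) = 0 = \hgt(0_\Ll)$ makes $0_\Ll$ independent rather than a circuit. For C2, if $C_1 \leq C_2$ are both circuits with $C_1 < C_2$, then $C_1$ would be a strict sub-element of the circuit $C_2$, hence independent, contradicting the definition of $C_1$ as a circuit. For C3, the key computation is the following. Since each $C_i$ is a circuit with all strict sub-elements independent, combining the bounded increase of $\rho$ with the existence of an element covered by $C_i$ of height $\hgt(C_i) - 1$ forces $\rho(C_i) = \hgt(C_i) - 1$. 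By C2 just established, $C_1 \wedge C_2$ is a proper sub-element of both circuits and hence independent, so $\rho(C_1 \wedge C_2) = \hgt(C_1 \wedge C_2)$. Submodularity of $\rho$ together with the modularity of $\hgt$ yields
$$\rho(C_1 \vee C_2) \leq \rho(C_1) + \rho(C_2) - \rho(C_1 \wedge C_2) = \hgt(C_1 \vee C_2) - 2.$$
For any $L \leq C_1 \vee C_2$ with $\hgt(L) = \hgt(C_1 \vee C_2) - 1$, independence would force $\rho(L) = \hgt(L)$, contradicting $\rho(L) \leq \rho(C_1 \vee C_2) \leq \hgt(L) - 1$. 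Hence $L$ is dependent, and any minimal dependent element of $[0_\Ll, L]$ is a circuit $C_3$ below $L$.

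For the reverse direction, given $\Cc$ satisfying C1--C3, I would set
$$\mathcal{I} = \{I \in \Ll : C \nleq I \text{ for all } C \in \Cc\}$$
and verify that $\mathcal{I}$ satisfies the axioms I1, I2, and I4 of Proposition~\ref{proposition:independent}. Properties I1 and I2 are straightforward from the definition of $\mathcal{I}$ and from C1. The main obstacle is Property I4: given $L_1, L_2 \in \Ll$ and maximal independent $I_1 \leq L_1$, $I_2 \leq L_2$, one must exhibit a maximal independent $I_3 \leq L_1 \vee L_2$ with $I_3 \leq I_1 \vee I_2$. This is the latroid analogue of the classical basis-exchange argument in matroid theory, and the plan is to extract $I_3$ iteratively from $I_1 \vee I_2$ by eliminating circuits. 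The key lever is C3, which plays the role of the strong circuit elimination axiom and, together with the relative complementation provided by modularity of $\Ll$, allows one to replace any circuit in $I_1 \vee I_2$ by a strictly smaller dependent configuration until an independent element of the correct height is obtained.

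Once Proposition~\ref{proposition:independent} delivers the $\Z$-latroid $(\rho, \hgt, \Ll)$ associated to $\mathcal{I}$, it remains to identify its circuits with $\Cc$. In one direction, each $C \in \Cc$ fails to lie in $\mathcal{I}$, while every $L < C$ lies in $\mathcal{I}$: a circuit of $\Cc$ dominated by $L$ would be strictly below $C$, contradicting C2. Thus $C$ is a circuit of the latroid. Conversely, if $C$ is a circuit of the constructed latroid, then $C \notin \mathcal{I}$, so some $C' \in \Cc$ satisfies $C' \leq C$; if $C' < C$, then $C'$ is a dependent proper sub-element of $C$, contradicting the minimality built into the circuit definition. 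Hence $C = C' \in \Cc$, completing the proof.
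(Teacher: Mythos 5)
Your forward direction is essentially the paper's argument: you use submodularity of $\rho$ together with integrality of the circuit rank to get $\rho(C_1\vee C_2)\leq\hgt(C_1\vee C_2)-2$, and then monotonicity pushes this down to any $L$ of height $\hgt(C_1\vee C_2)-1$. (Your refinement that $\rho(C_i)=\hgt(C_i)-1$ exactly, obtained from a covered sub-element, is correct but not strictly needed; $\rho(C_i)\leq\hgt(C_i)-1$ from integrality suffices, and the claim that $C_1\wedge C_2$ is a proper sub-element of each circuit follows from C2 as you observe.)

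The reverse direction, however, has a real gap. You propose to set $\mathcal{I}=\{I:C\nleq I \text{ for all }C\in\Cc\}$ and invoke Proposition~\ref{proposition:independent}, but you do not actually verify Property I4 — you only describe it as ``the latroid analogue of the classical basis-exchange argument'' to be carried out ``iteratively \dots by eliminating circuits.'' That is precisely the part which requires work: one must show from C1--C3 alone that all maximal elements of $\mathcal{I}$ below a given $L$ have the same height (otherwise ``maximal independent'' is ill-behaved), and then establish the exchange/union step. The paper does not take this route at all. Instead it defines the function $\kappa(L)$ as the length of a maximal chain of circuits dominated by $L$, proves through Lemmas~\ref{lemma:strongC3}, \ref{lemma:2strongC3}, and \ref{lemma:lengthmaxchain} that this is well-defined and supermodular, and sets $\rho=\hgt-\kappa$ directly. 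The content of those three lemmas is exactly the machinery your sketch would need, and without something equivalent your reverse direction does not stand on its own. You would either have to reproduce that machinery to prove I4, or adopt the paper's more direct construction of $\rho$ from chains of circuits.
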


We start by proving some preliminary results.

\begin{lemma}\label{lemma:strongC3}
    Let $\Ll$ be a finite complemented modular lattice with height function $\hgt$ and let $\Cc$ be a subset of $\Ll$ satisfying C1, C2, and C3. Then, for every $C_1,C_2\in\Cc$ and $L\leq C_1\vee C_2$ such that $\hgt(L)=\hgt(C_1\vee C_2)-1$ and $C_2\nleq L$, we have
    \begin{equation*}
        C_1\vee C_2=\bigvee\{C\in\Cc:C\leq L\}\vee C_2.
    \end{equation*}
\end{lemma}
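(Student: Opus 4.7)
The first reduction I would make is to observe that, since $L$ is covered by $C_1 \vee C_2$ and $C_2 \nleq L$, the element $L \vee C_2$ lies strictly above $L$ and below $C_1 \vee C_2$, hence equals $C_1 \vee C_2$. Writing $L^\sharp := \bigvee\{C\in\Cc : C \leq L\}$, which satisfies $L^\sharp \leq L$, the identity to prove becomes $L^\sharp \vee C_2 = L \vee C_2$. Since the inequality $L^\sharp \vee C_2 \leq L \vee C_2$ is automatic, the content is the reverse inequality $L \leq L^\sharp \vee C_2$, which by modularity (applied using $L^\sharp \leq L$) rewrites as the equality $L = L^\sharp \vee (L \wedge C_2)$.

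I would then argue by induction on $\hgt(C_1 \vee C_2)$. The base case is degenerate: a small value of $\hgt(C_1 \vee C_2)$ forces $C_1 = C_2$ by C2, in which case both sides of the claimed identity equal $C_2$. For the inductive step, if $C_1 \leq L$ then $C_1$ itself appears among the circuits in the big join, so $L^\sharp \geq C_1$ and $L^\sharp \vee C_2 \geq C_1 \vee C_2$, giving the conclusion at once.

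In the remaining case $C_1 \nleq L$, I would invoke C3 on the pair $C_1, C_2$ and the coatom $L$ to produce a circuit $C_3 \in \Cc$ with $C_3 \leq L$; by C2 this $C_3$ is distinct from both $C_1$ and $C_2$, since $C_1 \nleq L$ and $C_2 \nleq L$. Next I would use that $L \wedge C_2$ is a coatom of $C_2$, which follows from the height computation $\hgt(L \wedge C_2) = \hgt(L) + \hgt(C_2) - \hgt(L \vee C_2) = \hgt(C_2) - 1$, together with the analogous fact $C_1 \vee L = C_1 \vee C_2$ (since $L$ is a coatom of $C_1 \vee C_2$ and $C_1 \nleq L$). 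These identities let me replace the pair $(C_1, C_2)$ by the pair $(C_3, C_2)$ inside an appropriately chosen coatom of $C_3 \vee C_2$ sitting below $L$, at which point either $\hgt(C_3 \vee C_2) < \hgt(C_1 \vee C_2)$ and the inductive hypothesis applies directly, or $C_3 \vee C_2 = C_1 \vee C_2$ and one iterates the argument having strictly decreased an auxiliary parameter (e.g.\ $\hgt(C_1 \wedge L)$ decreases as $C_1$ gets replaced by successive $C_3$'s lying in $L$).

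The main obstacle is precisely this bookkeeping in the inductive step: one must pick the reducing coatom of $C_3 \vee C_2$ inside $L$ (using relative complementation of the interval $[L \wedge C_2, L]$ provided by modularity and complementation), and then show that every circuit delivered by the inductive hypothesis is still $\leq L$ rather than merely $\leq$ the smaller coatom, so that it contributes to $L^\sharp$. Here the modular law and the height identities above are what align the pieces, and C2 is repeatedly used to ensure that the newly produced circuits are distinct from $C_1, C_2, C_3$.
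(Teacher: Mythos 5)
Your setup is sound: reducing to $L \leq L^\sharp \vee C_2$ via modularity, inducting on $\hgt(C_1\vee C_2)$, and dispatching the trivial case $C_1\leq L$ are all fine. But the heart of your inductive step has a genuine gap.

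You propose to use C3 to get $C_3\in\Cc$ with $C_3\leq L$, and then apply the inductive hypothesis to the pair $(C_3,C_2)$ with the coatom $(C_3\vee C_2)\wedge L$ of $C_3\vee C_2$. In the sub-case $C_3\vee C_2 = C_1\vee C_2$ no induction is needed at all: since $C_3\leq L$ we get $C_1\vee C_2 = C_3\vee C_2 \leq L^\sharp\vee C_2 \leq L\vee C_2 = C_1\vee C_2$ immediately, so your worry about a decreasing auxiliary parameter is moot. The real problem is the other sub-case, $C_3\vee C_2 < C_1\vee C_2$, where you claim the inductive hypothesis ``applies directly.'' It does apply, but its conclusion is $C_3\vee C_2 = \bigvee\{C\in\Cc: C\leq (C_3\vee C_2)\wedge L\}\vee C_2 \leq L^\sharp\vee C_2$, which is strictly weaker than what is wanted: it gives no information whatsoever about $C_1$, which has been discarded from the recursion. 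In fact $C_3\vee C_2\leq L^\sharp\vee C_2$ is a trivial consequence of $C_3\leq L$, so the induction has bought you nothing. Concretely, in $U_{2,5}$ with $C_1=\{1,2,3\}$, $C_2=\{3,4,5\}$, $L=\{1,2,4,5\}$, the axiom C3 might hand you $C_3=\{1,4,5\}$, for which $C_3\vee C_2 = \{1,3,4,5\}\subsetneq C_1\vee C_2$; the inductive hypothesis for $(C_3,C_2)$ tells you nothing about the element $2$, and nothing forces you onto a $C_3$ with $C_3\vee C_2 = C_1\vee C_2$.

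The paper's proof recurses on a pair that keeps $C_1$: after obtaining $C_3\leq L$, it takes a coatom $\bar L$ of $C_1\vee C_2$ \emph{containing} $C_1$, applies C3 a second time inside the coatom $\bar L\wedge(C_2\vee C_3)$ of $C_2\vee C_3$ to get a circuit $C_4\leq \bar L\wedge(C_2\vee C_3)$, and then invokes minimality for the pair $(C_1,C_4)$ with the coatom $L\wedge(C_1\vee C_4)$. Since $C_1\vee C_4\leq \bar L < C_1\vee C_2$, the recursion terminates, and the crucial point is that $C_4\leq C_2\vee C_3\leq C_2\vee L^\sharp$, so that the conclusion $C_1\vee C_4 = \bigvee\{C\in\Cc: C\leq L\wedge(C_1\vee C_4)\}\vee C_4$ forces $C_1\leq L^\sharp\vee C_4\leq L^\sharp\vee C_2$, hence $C_1\vee C_2\leq L^\sharp\vee C_2$. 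It is precisely the extra coatom $\bar L$ above $C_1$ and the second application of C3 that let one shrink the join without losing $C_1$; your single application of C3 does not provide that.
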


\begin{proof}
If $C_1=C_2$, then $L<C_1$ does not contain any element of $\Cc$ by C2 and $C_1\vee C_2=C_2$ holds.
Hence, let $C_1,C_2$ be a pair of distinct elements in $\Cc$ for which the statement fails and such that $C_1\vee C_2$ is minimal with such property. For such a pair, it holds that $C_1,C_2\not\leq L$.
By C3, there exists $C_3\in\Cc$ such that $C_3\leq L$ and $C_3\vee C_2<C_1\vee C_2$. Let $C_1\leq \bar L\leq C_1\vee C_2$ be such that $\hgt(\bar L)=\hgt(C_1\vee C_2)-1$, then 
\begin{equation*}
    \hgt(\bar L\wedge(C_2\vee C_3))=\hgt(\bar L)+\hgt(C_2\vee C_3)-\hgt(\bar L\vee C_2\vee C_3)=\hgt(C_2\vee C_3)-1,
\end{equation*}
since $\bar L\vee C_2\vee C_3=C_1\vee C_2$.
Applying C3 again, we find $C_4\leq \bar L\wedge(C_2\vee C_3)$. We notice that 
    \begin{equation*}
        \hgt(L\wedge(C_1\vee C_4))=\hgt(L)+\hgt(C_1\vee C_4)-\hgt(L\vee (C_1\vee C_4))= \hgt(C_1\vee C_4)-1.
    \end{equation*}
The last equality follows from observing that $L\leq L\vee(C_1\vee C_4)=C_1\vee C_2$, and the inequality is strict since $C_1\not\leq L$. Moreover, notice that $C_1\neq C_4$, since $C_4\leq C_2\vee C_3$, but $C_1\not\leq C_2\vee C_3$.
Since $C_1\vee C_4\leq \bar L<C_1\vee C_2$, by the minimality of $C_1\vee C_2$, we obtain
\begin{equation*}
C_1\vee C_4=\bigvee\{C\in\Cc:C\leq L\wedge(C_1\vee C_4)\}\vee C_1\leq \bigvee\{C\in\Cc:C\leq L\}\vee C_1.
\end{equation*}
Since $C_4\leq C_2\vee C_3$ and $C_3\leq L$, we conclude that $C_1\vee C_2=\bigvee\{C\in\Cc:C\leq L\}\vee C_2$.
\end{proof}

\begin{lemma}\label{lemma:2strongC3}
Let $\Ll$ be a finite complemented modular lattice with height function $\hgt$ and let $\Cc$ be a subset of $\Ll$ satisfying C1, C2, and C3. Let $L_1,L_2\in\Ll$, $L_1\geq L_2$ be such that $\hgt(L_2)=\hgt(L_1)-1$. If there exist $\bar C\in\Cc$ such that $\bar C\leq L_1$ and $\bar C\nleq L_2$, then 
\begin{equation*}
\bigvee\{C\in\Cc:C\leq L_1\}=\bigvee\{C\in\Cc:C\leq L_2\}\vee \bar C.
\end{equation*}
\end{lemma}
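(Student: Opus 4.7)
The plan is to prove the inclusion $\supseteq$ directly, then reduce the inclusion $\subseteq$ to Lemma~\ref{lemma:strongC3} applied inside the interval $[0_{\Ll},C\vee\bar C]$ for each circuit $C\leq L_1$. I read the hypothesis as carrying the implicit assumption that $L_2\leq L_1$ (otherwise $\bigvee\{C\in\Cc:C\leq L_2\}$ need not lie below $L_1$ and the claimed equality cannot hold), so the height condition in fact says that $L_2$ is covered by $L_1$.

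First I would dispatch the easy direction: every circuit $C\leq L_2$ satisfies $C\leq L_1$, and $\bar C\leq L_1$ by hypothesis, hence the right-hand side is dominated by the left-hand side. For the reverse inclusion, I would fix an arbitrary $C\in\Cc$ with $C\leq L_1$. If $C\leq L_2$ then $C$ is already absorbed by $\bigvee\{C'\in\Cc:C'\leq L_2\}$. Otherwise I would introduce
$$M := L_2\wedge(C\vee\bar C)$$
and perform the key height computation. Since $C\not\leq L_2$, we have $L_2<L_2\vee(C\vee\bar C)\leq L_1$; because $L_2$ is covered by $L_1$ this forces $L_2\vee(C\vee\bar C)=L_1$, and modularity of $\hgt$ yields
$$\hgt(M)=\hgt(L_2)+\hgt(C\vee\bar C)-\hgt(L_1)=\hgt(C\vee\bar C)-1.$$

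Now I would invoke Lemma~\ref{lemma:strongC3} with the pair $C,\bar C$ and the sub-element $L:=M$. The hypotheses of that lemma are all met: $M\leq C\vee\bar C$ and $\hgt(M)=\hgt(C\vee\bar C)-1$ by construction, while $\bar C\not\leq M$ because $\bar C\not\leq L_2$. This produces
$$C\vee\bar C=\bigvee\{C'\in\Cc:C'\leq M\}\vee\bar C.$$
Since $M\leq L_2$, every such $C'$ is in particular $\leq L_2$, so
$$C\leq C\vee\bar C\leq\bigvee\{C'\in\Cc:C'\leq L_2\}\vee\bar C,$$
which is what we needed.

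The main obstacle I anticipate is the bookkeeping of the covering relation: ensuring that $M$ really is a hyperplane of $C\vee\bar C$ so that Lemma~\ref{lemma:strongC3} can be applied. Once this height computation is in place, the rest is a routine case split on whether the circuit $C$ already lies below $L_2$.
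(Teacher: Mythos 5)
Your proof is correct and follows essentially the same route as the paper: for each circuit $C\leq L_1$ not below $L_2$, you form the hyperplane $L_2\wedge(C\vee\bar C)$ of $C\vee\bar C$ (where the paper writes $D$ and an arbitrary $L_3\leq(D\vee\bar C)\wedge L_2$ of the right height, you just take that meet itself), apply Lemma~\ref{lemma:strongC3} with $C,\bar C$, and observe $M\leq L_2$. Your explicit reading that $L_2\leq L_1$ is built into the hypothesis, and your case split on whether $C\leq L_2$, are both harmless variants of the paper's one-shot join over all $D\leq L_1$.
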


\begin{proof}
Since $L_2\leq L_1$ and $\hgt(L_2)=\hgt(L_1)-1$, for every $D \in\Cc$ with $D\leq L_1$ we have $L_2\leq D\vee\bar C\vee L_2\leq L_1$, hence 
$D\vee\bar C\vee L_2=L_1$, since $\bar C\not\leq L_2$. Therefore
$$\hgt((D\vee\bar C)\wedge L_2)=\hgt(D\vee\bar C)+\hgt(L_2)-\hgt(L_1)=\hgt(D\vee\bar C)-1.$$
Let $L_3=(D\vee\bar C)\wedge L_2$. By Lemma~\ref{lemma:strongC3}
\begin{equation*}
D\vee\bar C =\bigvee\{C\in\Cc:C\leq L_3\}\vee\bar C \leq \bigvee\{C\in\Cc:C\leq L_2\}\vee\bar C.
\end{equation*}
We conclude by taking the join over $D\in \Cc$, $D\leq L_1$ on both sides.
\end{proof}

We introduce the concept of chain of circuits. These are ordered sequences of circuits, so that the joins of the first $i$, for every $i$, form a strictly ascending chain of lattice elements. This concept is used in the proof of Proposition \ref{proposition:circuits}.

\begin{definition}
    Let $\Ll$ be a finite complemented modular lattice with height function $\hgt$ and let $\Cc$ be a subset of $\Ll$ satisfying C1, C2, and C3. A {\bf chain} in $\Cc$ is a sequence of elements $C_1,\dots, C_n\in\Cc$ such that $C_1\vee\dots\vee C_i<C_1\vee \dots\vee C_{i+1}$ for $1\leq i\leq n-1$. A chain $C_1,\dots, C_n\in\Cc$ is {\bf dominated} by $L$ if $C_i\leq L$ for all $i\in[n]$. A chain is {\bf maximal} if it cannot be refined, i.e., it is not a proper subsequence of another chain of circuits.
\end{definition}

\begin{lemma}\label{lemma:lengthmaxchain}
     Let $\Ll$ be a finite complemented modular lattice with height function $\hgt$ and let $\Cc$ be a subset of $\Ll$ satisfying C1, C2, and C3. Let $L\in\Ll$. Any maximal chain in $\Cc$ dominated by $L$ has the same length.
\end{lemma}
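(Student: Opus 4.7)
The strategy is strong induction on $\hgt(L)$. Before the induction, observe that any maximal chain $C_1,\dots,C_n$ in $\Cc$ dominated by $L$ must have top join $C_1\vee\dots\vee C_n$ equal to $K(L):=\bigvee\{C\in\Cc:C\leq L\}$; otherwise some circuit $C\leq L$ would satisfy $C\not\leq C_1\vee\dots\vee C_n$, and appending $C$ would refine the chain, contradicting maximality. Consequently, chains dominated by $L$ are precisely chains dominated by $K(L)$, so by applying the inductive hypothesis to $K(L)$ we may assume $L=K(L)$.

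The base case $\hgt(L)=0$ is immediate: $L=0_{\Ll}$ contains no circuit by C1, so the empty chain is the only maximal chain. For the inductive step with $L=K(L)$ of positive height, pick any circuit $\bar C\leq L$ (which exists since $L\neq 0_{\Ll}$) and, using that $\Ll$ is finite complemented modular and hence relatively complemented, choose a coatom $\bar L$ of $L$ in $\Ll$ with $\bar C\not\leq\bar L$. Then $\hgt(\bar L)=\hgt(L)-1$ and, by Lemma~\ref{lemma:2strongC3},
\[
L \;=\; K(\bar L)\vee\bar C.
\]
By the inductive hypothesis applied to $\bar L$, every maximal chain in $\Cc$ dominated by $\bar L$ has a common length $\ell(\bar L)$.

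The heart of the argument is the following claim: every maximal chain dominated by $L$ has length exactly $\ell(\bar L)+1$. One direction is easy: given a maximal chain dominated by $\bar L$, appending $\bar C$ produces a chain dominated by $L$ whose top join is $K(\bar L)\vee\bar C=L=K(L)$, and one checks maximality. Conversely, given a maximal chain $(C_1,\dots,C_n)$ dominated by $L$, I will produce a maximal chain of length $n-1$ dominated by $\bar L$. The mechanism is to use Lemma~\ref{lemma:strongC3} applied to each pair $(C_i,\bar C)$ with $C_i\not\leq\bar L$: the lemma yields a decomposition $C_i\vee\bar C=\bigvee\{D\in\Cc:D\leq(C_i\vee\bar C)\wedge\bar L\}\vee\bar C$, which allows one to substitute $C_i$ with circuits lying in $\bar L$. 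Carrying out these substitutions in order along the chain, and using the modular identity $\hgt(A\vee B)+\hgt(A\wedge B)=\hgt(A)+\hgt(B)$ to track the strict join increase, one obtains a chain of circuits in $\bar L$; the absorption of one circuit into $\bar C$ drops the length from $n$ to $n-1$. Maximality of the new chain in $\bar L$ follows from maximality of the original chain in $L$ together with the fact that any circuit $C\leq\bar L$ not dominated by the new top join would contradict maximality of the original chain upon insertion before the ``$\bar C$-step''.

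The principal obstacle is this exchange step: showing that the substitutions can be made consistently so that the resulting sequence is both a chain and maximal in $[0_{\Ll},\bar L]$. The argument requires simultaneously handling the order of the substitutions, the strict-increase condition of chains, and the maximality criterion, and it is here that Lemmas~\ref{lemma:strongC3} and~\ref{lemma:2strongC3} do the heavy lifting by providing a combinatorial description of how $K$ behaves under height-one drops. Once the exchange is established, the equality $n=\ell(\bar L)+1$ forces all maximal chains dominated by $L$ to have the same length, closing the induction.
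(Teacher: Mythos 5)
Your overall strategy — induction on $\hgt(L)$, dropping to a coatom $\bar L$ of height $\hgt(L)-1$, and using Lemmas~\ref{lemma:strongC3} and~\ref{lemma:2strongC3} to relocate circuits into $\bar L$ — is in the same spirit as the paper's proof, and the preliminary reductions (top join of any maximal chain equals $K(L)=\bigvee\{C\in\Cc:C\le L\}$, so one may assume $L=K(L)$) are correct. However, there is a genuine gap, and you have correctly diagnosed where it lies but have not closed it.

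The difficulty created by your formulation is that you fix a coatom $\bar L$ once and for all and try to pin down the common length as $\ell(\bar L)+1$. To conclude that a given maximal chain $(C_1,\dots,C_n)$ in $L$ has length exactly $\ell(\bar L)+1$, the exchange step would have to produce not merely \emph{a} chain of length $n-1$ in $\bar L$ (which yields only the upper bound $n\le\ell(\bar L)+1$), but a \emph{maximal} chain in $\bar L$ (to force $n-1=\ell(\bar L)$). That maximality assertion is precisely what you label the ``principal obstacle'' and then leave unproven, appealing to the two lemmas ``doing the heavy lifting.'' Without it the argument does not close: the existence of one maximal chain of length $\ell(\bar L)+1$ (obtained by appending $\bar C$) together with the upper bound $n\le\ell(\bar L)+1$ for all maximal chains does not rule out the possibility of a maximal chain shorter than $\ell(\bar L)+1$.

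The paper sidesteps this entirely by never asking the exchanged chain to be maximal. Given two maximal chains $C_1,\dots,C_m$ and $D_1,\dots,D_n$ with $m\le n$, it chooses $\bar L$ \emph{adaptively} to dominate $C_1\vee\dots\vee C_{m-1}$; then $C_1,\dots,C_{m-1}$ is automatically a maximal chain in $\bar L$ (any refinement in $\bar L$ would extend to a refinement of $C_1,\dots,C_m$ in $L$), so by the inductive hypothesis every maximal chain in $\bar L$ has length $m-1$. The exchange, applied to the \emph{other} chain $D_1,\dots,D_n$, only needs to output some chain of length $n-1$ in $\bar L$; since that chain can be extended to a maximal one of length $m-1$, we get $n-1\le m-1$, hence $m=n$. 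The key lemma used is~\ref{lemma:strongC3} applied at $\bar L\wedge(D_j\vee D_i)$ for the smallest $i$ with $D_i\nleq\bar L$, exactly as you envisage, but the burden is much lighter because maximality of the exchanged chain is never required. If you want to salvage your fixed-$\bar L$ approach you would need to prove the maximality of the exchanged chain from scratch, which is as hard as the lemma itself; the adaptive choice of $\bar L$ is what makes the induction go through cleanly.
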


\begin{proof}
We proceed by induction on the height of $L$. If $\hgt(L)=0$, then $L$ contains no element of $\Cc$ and the length of any chain dominated by $L$ is $0$. Assume therefore that $\hgt(L)>0$ and that the statement holds for every $L'<L$. Let $C_1,\dots,C_m\in\Cc$ and $D_1,\dots,D_n\in\Cc$ be two maximal chains dominated by $L$ with $m\leq n$. Let $C_1\vee\dots\vee C_{m-1}\leq\bar L<L$ be such that $\hgt(\bar L)=\hgt(L)-1$ and $C_1\vee\dots\vee C_m\nleq \bar L$. Such an $\bar L$ exists since $C_1\vee\dots\vee C_{m-1}<C_1\vee\dots\vee C_m\leq L$ and $\mathcal{L}$ is relatively complemented. If $D_i\leq \bar L$ for every $i\in[n]$, then $D_1,\dots,D_n$ is a maximal chain dominated by $\bar L$. Else, let $1\leq i\leq n$ be the smallest index for which $D_i\nleq\bar L$. For every $i<j\leq n$, $\bar L\vee D_i\vee D_j=L$, and hence $\hgt(\bar L\wedge(D_i\vee D_j))=\hgt(D_i\vee D_j)-1$.
By Lemma~\ref{lemma:strongC3} there exists $\bar D_j\leq \bar L\wedge(D_j\vee D_i)$ such that $\bar D_j\nleq D_1\vee\dots\vee D_{j-1}$. In fact, if this were not the case, we would obtain
\begin{equation*}
    D_j<D_j\vee D_i=\bigvee\{C\in\Cc:C\leq \bar L\wedge(D_j\vee D_i)\}\vee D_i\leq D_1\vee\dots\vee D_{j-1},
\end{equation*}
that contradicts our assumptions. Consider the sequence $D_1,\dots, D_{i-1},\bar D_{i+1},\dots,\bar D_n$. This is a chain of circuits in $\bar L$. Indeed, we have
\begin{equation*}
D_1\vee\dots\vee D_{i-1}\vee \bar D_{i+1}\vee\ldots\vee\bar D_j\leq D_1\vee\dots\vee D_j,
\end{equation*}
while $\bar D_{j+1}\nleq D_1\vee\dots\vee D_j$. Since $C_1,\dots, C_m$ is a maximal chain in $L$, $C_1,\dots, C_{m-1}$ has to be maximal in $\bar L$. Indeed, suppose that there exists a circuit $\bar C$ such that $C_1\dots,C_{i-1},\bar C,C_i\dots, C_{m-1}$ is a chain in $\Cc$ dominated by $\bar L$, then $C_1\dots,C_{i-1},\bar C,C_i\dots, C_{m}$ is a chain in $\Cc$ dominated by $L$, but this contradicts the maximality of $C_1,\dots,C_m$. Therefore, we have $n-1\leq m-1$, i.e., $n\leq m$.
\end{proof}

We are now ready to prove Proposition~\ref{proposition:circuits}.

\begin{proof}[Proof of Proposition~\ref{proposition:circuits}]
Let $\Cc$ be the set of circuits of a $\Z$-latroid $(\rho,\hgt,\Ll)$. By definition $\rho(0_{\Ll})=0$, and so $0_{\Ll}\notin \Cc$. Moreover, if $C_1< C_2$ and $C_1$ is a circuit, we have $C_2\notin\Cc$, which proves C2. Finally, let $C_1$ and $C_2$ be distinct circuits. By the submodularity of $\rho$, we obtain
\begin{equation*}
\rho(C_1\vee C_2)\leq \rho(C_1)+\rho(C_2)-\rho(C_1\wedge C_2)\leq \hgt(C_1)-1+\hgt(C_2)-1-\hgt(C_1\wedge C_2)=\hgt(C_1\vee C_2)-2.
\end{equation*}
Since $\rho$ is bounded increasing, then $$\rho(L)\leq\rho(C_1\vee C_2)\leq\hgt(C_1\vee C_2)-2=\hgt(L)-1$$ for every $L\leq C_1\vee C_2$ with $\hgt(L)=\hgt(C_1\vee C_2)-1$. Hence $L$ is dependent, so there exists a minimal dependent element $C_3$ such that $C_3\leq L$. This implies C3.

Let $\Cc$ be a subset of $\Ll$ satisfying C1, C2, and C3. Consider the function $\kappa:\Ll\rightarrow \Z$ that associates to an element $L\in\Ll$ the length of a maximal chain of elements in $\Cc$ dominated by $L$. This function is well defined by Lemma~\ref{lemma:lengthmaxchain}. 
We claim that for $L_1,L_2\in\Ll$ we have
\begin{equation}\label{equation:kappaissupmodular}
\kappa(L_1)+\kappa(L_2)\leq \kappa(L_1\vee L_2)+\kappa(L_1\wedge L_2).
\end{equation}
We prove the claim by induction on $\hgt(L_1)-\hgt(L_1\wedge L_2)$. If $\hgt(L_1)-\hgt(L_1\wedge L_2)=0$, then $L_1\leq L_2$ and~\eqref{equation:kappaissupmodular} is an equality. If $\hgt(L_1)-\hgt(L_1\wedge L_2)=\hgt(L_1\vee L_2)-\hgt(L_2)>0$, then there exists $L<L_1\vee L_2$ with $\hgt(L)=\hgt(L_1\vee L_2)-1$ and such that $L_2\leq L$. If all the circuits of $L_1$ are contained in $L_1\wedge L$, then we have 
\begin{equation*}
\kappa(L_1)+\kappa(L_2)=\kappa(L_1\wedge L)+\kappa(L_2)\leq \kappa((L_1\wedge L)\vee L_2)+\kappa((L_1\wedge L)\wedge L_2)\leq \kappa(L_1\vee L_2)+\kappa(L_1\wedge L_2),
\end{equation*}
where the second inequality follows from the induction hypothesis, since $L_1\nleq L$ implies $\hgt(L_1\wedge L)-\hgt((L_1\wedge L)\wedge L_2)<\hgt(L_1)-\hgt(L_1\wedge L_2)$. Assume now that there exists a circuit $\bar C\leq L_1$ such that $\bar C\nleq L$. Since $\hgt(L\wedge L_1)=\hgt(L_1)+\hgt(L_1\vee L_2)-1-\hgt(L_1\vee L_2)=\hgt(L_1)-1$, by Lemma~\ref{lemma:2strongC3} we have that 
\begin{equation*}
\bigvee\{C\in\Cc:C\leq L_1\}=\bigvee\{C\in\Cc:C\leq L_1\wedge L\}\vee \bar C.
\end{equation*}
This implies $\kappa(L_1)=\kappa(L_1\wedge L)+1$. Similarly, applying Lemma~\ref{lemma:2strongC3} to $L_1\vee L_2$ and $L$, we obtain $\kappa(L_1\vee L_2)=\kappa(L)+1$. Moreover, by modularity we have $(L_1\wedge L)\vee L_2=L$, hence
\begin{equation*}
\kappa(L_1)+\kappa(L_2)=\kappa(L_1\wedge L)+\kappa(L_2)+1\leq \kappa(L)+1+\kappa(L_1\wedge L_2)=\kappa(L_1\vee L_2)+\kappa(L_1\wedge L_2),
\end{equation*}
where the first inequality follows from the induction hypothesis, since $\hgt(L_1\wedge L)-\hgt((L_1\wedge L)\wedge L_2)<\hgt(L_1)-\hgt(L_1\wedge L_2)$. 

The function $\rho=\hgt-\kappa$ is submodular, since $\kappa$ satisfies Equation~\ref{equation:kappaissupmodular}. Let $L_2\leq L_1\in\Ll$, then $\kappa(L_1)\geq \kappa(L_2)$, and so $\rho(L_1)-\rho(L_2)\leq \hgt(L_1)-\hgt(L_2)$. Repeated application of Lemma~\ref{lemma:2strongC3} yields
$$\kappa(L_1)-\kappa(L_2)\leq \hgt(L_1)-\hgt(L_2),$$
or equivalently, $\rho(L_1)\geq\rho(L_2)$.
Finally, let $C$ be a circuit of the $\Z$-latroid $(\hgt-\kappa, \hgt,\Ll)$. By definition, $\kappa(C)>0$, so there exists $\bar C\in\Cc$ such that $\bar C\leq C$. Since $C$ is a circuit, for every $L<C$ we have $\kappa(L)=0$. We conclude that $C=\bar C$, therefore $C\in\Cc$. Conversely, any $C\in\Cc$ has $\kappa(C)=1$, so $\rho(C)<\hgt(C)$. Moreover, if $L\in\Ll$ and $L<C$, then by C2 there is no $\bar C\in \Cc$ such that $\bar C\leq L$, hence $\kappa(L)=0$, i.e., $\rho(L)=\hgt(L)$. 
\end{proof}

Similarly to independent sets, bases, and circuits, many standard concepts in matroid theory - including closure function, flats, and hyperplanes - can be extended to latroids. For instance, one can define the closure operator as follows. 

\begin{definition}
Let $(\rho,\lVert \cdot\rVert, \mathcal{L})$ be an $A$-latroid. The {\bf closure} operator $\mathrm{cl}$ is defined as
$$\mathrm{cl}(L)=\bigvee\{\bar L\in\Ll:\rho(L\vee\bar L)=\rho(L)\},$$
for all $L\in\mathcal{L}$. 
\end{definition}

One always has $L\leq \mathrm{cl}(L)$. Moreover, for $L_1\leq L_2$ we have $\mathrm{cl}(L_1)\leq \mathrm{cl}(L_2)$. In fact, let $L$ be such that $\rho(L\vee L_1)=\rho(L_1)$. Then
\begin{equation*}
    \begin{split}
        \rho(L_2)+\rho(L_1\vee L)&\geq \rho(L_2\vee(L_1\vee L))+\rho(L_2\wedge(L_1\vee L))\\
        &=\rho(L_2\vee L)+\rho(L_1\vee(L_2\wedge L))\geq \rho(L_2\vee L)+\rho(L_1),    
    \end{split}
\end{equation*}
where in the equality we use the fact that $\Ll$ is modular. The previous inequality is equivalent to $\rho(L_2\vee L)-\rho(L_2)\leq\rho(L_1\vee L)-\rho(L_1)=0$, so $\rho(L_2)=\rho(L_2\vee L)$. 

\begin{definition}
Let $(\rho,\lVert \cdot\rVert, \mathcal{L})$ be an $A$-latroid. 
If $L=\mathrm{cl}(L)$, we call $L$ a {\bf flat}. A {\bf hyperplane} is a flat $L$ such that $\rho(L)=\rho(1_{\Ll})-1$.    
\end{definition}

Closure function, flats, and hyperplanes - similarly to independent elements - acquire greater significance in the case of a latroid built on a complemented lattice. By carefully adapting the proofs in~\cite{byrne2022constructions}, one can find cryptomorphic definitions of a latroid based on these notions. For cyclic flats we also refer to~\cite{byrne2023cyclic}.

\begin{proposition}\label{proposition:closureaxioms}
Let $\Ll$ be a finite complemented modular lattice with height function $\hgt$. A function $\mathrm{cl}:\Ll\rightarrow\Ll$ is the closure function of a $\Z$-latroid $(\rho,\hgt,\Ll)$ if and only if 
\begin{enumerate}
    \item[CL1.] $L\leq\mathrm{cl}(L)$ for all $L\in\Ll$,
    \item[CL2.] if $L_1\leq L_2\in\Ll$, then $\mathrm{cl}(L_1)\leq\mathrm{cl}(L_2)$,
    \item[CL3.] $\mathrm{cl}(L)=\mathrm{cl}(\mathrm{cl}(L))$ for all $L\in\Ll$,
    \item[CL4.] if $L,J_1,J_2\in\Ll$ with $\hgt(J_1)=\hgt(J_2)=1$, $J_2\leq\mathrm{cl}(L\vee J_1)$, and $J_2\nleq\mathrm{cl}(L)$, then $J_1\leq\mathrm{cl}(L\vee J_2)$.
\end{enumerate}
If the equivalent conditions hold, then the rank function of the latroid with closure function $\mathrm{cl}$ is
$$\rho(L)=\min\{\hgt(M):\mathrm{cl}(M)=\mathrm{cl}(L)\text{ and }M\leq L\}.$$
\end{proposition}

\begin{proof}
We already saw in the discussion above that the closure function of a latroid satisfies CL1 and CL2 even in the case that the lattice is not complemented. By Lemma~\ref{lemma:vee} we have $\rho(\mathrm{cl}(L))=\rho(L)$. Fix $\bar L\in\Ll$. If $\rho(L\vee \bar L)=\rho(L)$, then $\bar L\leq\mathrm{cl}(L)$. If, instead, $\rho(L\vee \bar L)>\rho(L)$, then
$$\rho(\bar L\vee\mathrm{cl}(L))\geq\rho(L\vee \bar L)>\rho(L)=\rho(\mathrm{cl}(L)).$$
We conclude that $\mathrm{cl}(L)=\mathrm{cl}(\mathrm{cl}(L))$. To prove CL4, let $L,J_1,J_2\in\Ll$ be such that $\hgt(J_1)=\hgt(J_2)=1$, $J_2\leq\mathrm{cl}(L\vee J_1)$, and $J_2\not\leq\mathrm{cl}(L)$. We obtain  $\rho(L\vee J_1\vee J_2)=\rho(L\vee J_1)\leq \rho(L)+1$. Moreover, since $J_2\not\leq\mathrm{cl}(L)$, we have $\rho(L\vee J_2)=\rho(L)+1$. Combining these two equations we find $\rho(L\vee J_2)\geq \rho(L\vee J_1\vee J_2)$, and this implies that $J_1\in\mathrm{cl}(L\vee J_2)$.

Conversely, let $\mathrm{cl}:\Ll\rightarrow\Ll$ be a function satisfying the four axioms in the statement of the proposition. We define the function $\rho_{\mathrm{cl}}:\Ll\rightarrow \Z$ as
\begin{equation*}
\rho_{\mathrm{cl}}(L)=\min\{\hgt(M):\mathrm{cl}(M)=\mathrm{cl}(L)\text{ and }M\leq L\}.
\end{equation*}
We claim that $\rho_{\mathrm{cl}}$ is the rank function of a $\Z$-latroid. By Proposition~\ref{proposition:newrankaxioms} it suffices to show that $\rho_{\mathrm{cl}}$ satisfies R1', R2', and R3'. By definition, we have $\rho_{\mathrm{cl}}(0_{\Ll})=0$. 

Let $L\in\Ll$, let $J\in\Ll$ be an atom. We start by showing that $\rho_{\mathrm{cl}}(L)\leq\rho_{\mathrm{cl}}(L\vee J)$. We may suppose that $J\not\leq L$. By the definition of $\rho_{\mathrm{cl}}$, there exist $H\leq L$, $H'\leq L\vee J$ such that $\mathrm{cl}(H)=\mathrm{cl}(L)$, $\mathrm{cl}(H')=\mathrm{cl}(L\vee J)$, $\hgt(H)=\rho_{\mathrm{cl}}(L)$, and $\hgt(H')=\rho_{\mathrm{cl}}(L\vee J)$. If $H'\leq L$, then we conclude. Otherwise, let $\bar J$ and $\bar H$ be such that $H'=\bar H\vee \bar J$ with $\bar J$ an atom and $\bar H\leq L$. Either $\mathrm{cl}(\bar H)=\mathrm{cl}(L)$ leading to $\rho_{\mathrm{cl}}(L)<\rho_{\mathrm{cl}}(L\vee J)$, or there exists an atom $\tilde J\leq L$ such that $\tilde J\nleq \mathrm{cl}(\bar H)$ and $\tilde J\leq \mathrm{cl}(L)\leq\mathrm{cl}(\bar H\vee \bar J)$. Then, by CL4 we have $\bar J\leq \mathrm{cl}(\bar H\vee \tilde J)$. Hence, by CL2 and CL3 we find $\bar H\vee\bar J\leq \mathrm{cl}(L)$, which implies
$\mathrm{cl}(L\vee J)=\mathrm{cl}(L).$
It follows that $\rho_{\mathrm{cl}}(L)\leq \rho_{\mathrm{cl}}(L\vee J)$. 
Moreover, by properties CL1, CL2, and CL3, we have $\mathrm{cl}(H\vee J)\leq\mathrm{cl}(L\vee J)=\mathrm{cl}(\mathrm{cl}(H)\vee J)\leq\mathrm{cl}(\mathrm{cl}(H\vee J))=\mathrm{cl}(H\vee J)$, where the last inequality follows from $J\leq\mathrm{cl}(J)\leq\mathrm{cl}(H\vee J)$ and $\mathrm{cl}(H)\leq \mathrm{cl}(H\vee J)$.
It follows that $\rho_{\mathrm{cl}}(L\vee J)\leq \hgt(H\vee J)\leq \hgt(H)+1=\rho_{\mathrm{cl}}(L)+1$, which completes the proof of R2'.

Let $L$ be an element of $\Ll$ and $J_1,J_2\in\Ll$ be atoms such that $\rho_{\mathrm{cl}}(L)=\rho_{\mathrm{cl}}(L\vee J_1)=\rho_{\mathrm{cl}}(L\vee J_2)$. In the same way as we proved that $\rho_{\mathrm{cl}}$ is weakly increasing, we can show that $\mathrm{cl}(L)=\mathrm{cl}(L\vee J_1)=\mathrm{cl}(L\vee J_2)$ and, therefore, $\mathrm{cl}(L)=\mathrm{cl}(L\vee J_1\vee J_2)$. By CL4, this implies $\rho_{\mathrm{cl}}(L)=\rho_{\mathrm{cl}}(L\vee J_1\vee J_2)$, proving R3'.

Finally, let $(\rho, \hgt,\Ll)$ be a $\Z$-latroid. We need to verify that
\begin{equation}\label{eq:closure1}
\rho(L)=\min\{\hgt(M):\mathrm{cl}(M)=\mathrm{cl}(L)\text{ and }M\leq L\},
\end{equation}
and 
\begin{equation}\label{eq:closure2}
\mathrm{cl}(L)=\bigvee\{M\in\Ll:\rho_{\mathrm{cl}}(L\vee M)=\rho_{\mathrm{cl}}(L)\}.
\end{equation}
Equation~\eqref{eq:closure1} follows from the fact that if $M\leq L$ and $\rho(M)=\rho(L)$, then $\mathrm{cl}(M)=\mathrm{cl}(L)$, and from the fact that $\rho(L)=\hgt(M)$, where $M$ is a maximal independent element among those that are smaller than or equal to $L$. This also implies that $\rho=\rho_{\mathrm{cl}}$. 
Equation~\eqref{eq:closure2} now follows.
\end{proof}

Next, we show that a latroid is determined by its set of flats and describe the closure function of a latroid in terms of its flats.

\begin{proposition}
    Let $\Ll$ be a finite complemented modular lattice with height function $\hgt$. A~set $\mathcal{F}\subseteq \Ll$ is the set of flats of a $\Z$-latroid $(\rho,\hgt,\Ll)$ if and only if 
    \begin{itemize}
        \item[F1.] $1_{\Ll}\in\mathcal{F}$,
        \item[F2.] for every $F_1$ and $F_2\in\mathcal{F}$, $F_1\wedge F_2\in\mathcal{F}$,
        \item[F3.] for every $F_1\in\mathcal{F}$ and any atom $J\nleq F_1$, there exists a unique element $F_2\in\mathcal{F}$ such that $F_2$ covers $F_1$ and $J\leq F_2$.
    \end{itemize}
If the equivalent conditions hold, then the closure function of the latroid is $$\mathrm{cl}(L)=\min\{F\in\mathcal{F}:L\leq F\}.$$
\end{proposition}

\begin{proof}
Let $\mathcal{F}$ be the set of flats of a $\Z$-latroid $(\rho,\hgt,\Ll)$. Clearly, $1_\Ll\in\mathcal{F}$. For every $F_1,F_2\in\mathcal{F}$, by CL2 we have $$\mathrm{cl}(F_1\wedge F_2)\leq \mathrm{cl}(F_1)\wedge\mathrm{cl}(F_2)=F_1\wedge F_2,$$
hence, by CL1 we conclude that $F_1\wedge F_2\in\mathcal{F}$. Moreover, for any atom $J\nleq F_1$ we have $\mathrm{cl}(F_1\vee J)\in\mathcal{F}$. Let $F_2\in\mathcal{F}$ such that $F_1<F_2\leq\mathrm{cl}(F_1\vee J)$. Then, there exists an atom $\bar J\leq F_2\leq\mathrm{cl}(F_1\vee J)$ such that $\bar J\nleq F_1$. By CL4, we conclude that $J\leq \mathrm{cl}(F_1\vee \bar J)\leq F_2$. Therefore $F_2=\mathrm{cl}(F_1\vee J)$, i.e., $\mathrm{cl}(F_1\vee J)$ covers $F_1$.

Conversely, suppose that $\mathcal{F}\subseteq\Ll$ satisfies F1, F2 and F3. We define $f:\Ll\rightarrow \Ll$ as 
$$f(L)=\min\{F\in\mathcal{F}:L\leq F\}.$$
This function is well-defined, since $\mathcal{F}$ is closed under meet and $1_{\Ll}\in\mathcal{F}$. We claim that $f$ is a closure function. By the definition, $L\leq f(L)$ and $f(f(L))=f(L)$. This proves CL1 and CL3. Let $L_1\leq L_2$, then $f(L_2)\geq L_2\geq L_1$. This implies that $L_1\leq f(L_1)\wedge f(L_2)$ and by F2 $f(L_1)\wedge f(L_2)\in\mathcal{F}$. By the minimality of $f(L_1)$ we conclude that $f(L_1)=f(L_1)\wedge f(L_2)\leq f(L_2)$. This shows that $f$ satisfies CL2. Finally, consider two distinct atoms $J_1$ and $J_2$, such that $J_2\leq f(L\vee J_1)$ and $J_2\nleq f(L)$. If $J_1\leq f(L)$, then $J_1\leq f(L\vee J_2)$. Else, by F3 we know that both $f(L\vee J_2)$ and $f(L\vee J_1)$ cover $f(L)$. Since $L\vee J_2\leq f(L\vee J_2)\leq f(L\vee J_1)$, then $f(L\vee J_2)=f(L\vee J_1)$, in particular $J_1\leq f(L\vee J_2)$. This proves CL4.

Finally, let $(\rho, \hgt,\Ll)$ be a $\Z$-latroid. We claim that $\mathrm{cl}=f$. Notice that $\mathrm{cl}(L)\leq\mathrm{cl}(f(L))=f(L)$. Therefore, since $\mathrm{cl}(L)\in\mathcal{F}$ and by the minimality of $f(L)$, we conclude $\mathrm{cl}(L)=f(L)$. Notice moreover that, by the definition of $f$, $\mathcal{F}=\{L\in\Ll:f(L)=L\}=f(\Ll)$.
\end{proof}

Notice that F1 and F2 imply that the set of flats of a $\Z$-latroid is a lattice. We conclude this subsection by proving that the lattice of flats of a $\Z$-latroid is a geometric lattice. This extends similar results for matroids~\cite{oxleybook} and $q$-matroids~\cite{bollen2017tutte,byrne2022constructions}.

\begin{theorem}\label{thm:latticeofflats}
Let $\Ll$ be a finite complemented modular lattice with height function $\hgt$. A lattice is geometric if and only if it is the lattice of flats of a $\Z$-latroid $(\rho,\hgt,\Ll)$.
\end{theorem}

\begin{proof}
By~\cite[Theorem 1.7.5]{oxleybook} we know that a lattice is geometric if and only if it is the lattice of flats of a matroid. Since matroids are $\Z$-latroids, it suffices to show that the lattice of flats of a $\Z$-latroid is geometric. Since there are two lattices involved in this proof, to avoid ambiguities, we will denote by $\wedge_{\mathcal{F}}$ and $\vee_{\mathcal{F}}$ the operations in $\mathcal{F}$ and by $\wedge_{\Ll}$ and $\vee_{\Ll}$ the operations in $\Ll$. First of all, if $F_1,F_2\in\mathcal{F}$ and $F_2$ covers $F_1$, then there exists an atom $J\in\Ll$, $J\leq F_2$ such that $J\nleq F_1$. Since $F_1=\mathrm{cl}(F_1)$, we have $\rho(F_1\vee_\Ll J)=\rho(F_1)+1$. Moreover, $\mathrm{cl}(F_1\vee_{\Ll} J)\leq F_2$, and since $F_2$ covers $F_1$, we obtain $\mathrm{cl}(F_1\vee_{\Ll} J)= F_2$. We conclude that $\rho(F_2)=\rho(F_1)+1$. This implies that the height of a flat in the lattice $\mathcal{F}$ is equal to $\rho(F)$. 
It follows from F2 that $F_1\wedge_{\Ll} F_2=F_1\wedge_{\mathcal{F}} F_2$.
Moreover, since $\mathrm{cl}(F_1\vee_{\Ll} F_2)\in\mathcal{F}$ is the smallest element of $\mathcal{F}$ that dominates both $F_1$ and $F_2$, then $F_1\vee_{\mathcal{F}} F_2=\mathrm{cl}(F_1\vee_{\Ll} F_2)$. Therefore
\begin{equation*}
\begin{split}
\rho(F_1)+\rho(F_2)&\geq\rho(F_1\vee_{\Ll}F_2)+\rho(F_1\wedge_{\Ll}F_2)=\rho(\mathrm{cl}(F_1\vee_{\Ll}F_2))+\rho(F_1\wedge_{\Ll}F_2)\\
 &=\rho(F_1\vee_{\mathcal{F}}F_2)+\rho(F_1\wedge_{\mathcal{F}} F_2).
\end{split}
\end{equation*}
Therefore, the height function of $\mathcal{F}$ is submodular and $\mathcal{F}$ is a semimodular lattice. 

To prove that $\mathcal{F}$ is atomistic, we proceed by induction on $\rho(F)$. If $\rho(F)=1$, then $F$ is an atom in $\mathcal{F}$. Assume that every flat $H$ such that $\rho(H)<\rho(F)$ is the meet of a finite set of atoms. Let $\bar F\in\mathcal{F}$ be a flat that is covered by $F$ and let $L\in\Ll$ be such that $F=\bar F\vee_{\Ll} L$ and $\bar F\wedge_{\Ll} L=0_{\Ll}$. Since $\rho(\bar F)+1=\rho(F)\leq \rho(\bar F)+\rho(L)$, we have $\rho(L)\geq 1$. Let $\bar L\leq L$ in $\Ll$ be such that $\rho(\bar L)=1$. Then, $\bar F<\bar F\vee_{\mathcal{F}}\mathrm{cl}(\bar L)\leq F$, as $\bar L\leq F$. However, since $F$ covers $\bar F$, we obtain $F=\bar F\vee_{\mathcal{F}}\mathrm{cl}(\bar L)$. By the induction hypothesis, both $\bar F$ and $\mathrm{cl}(\mathrm{\bar L})$ are the meet of a finite set of atoms, hence so is~$F$.
\end{proof}

\begin{remark}\label{rmk:ideal_rk}
In the previous theorem we proved that the lattice of flats of a $\Z$-latroid $(\rho,\hgt,\Ll)$ is a geometric lattice. Moreover, by~\cite[Theorem 1.7.5]{oxleybook} we know that a lattice is geometric if and only if it is the lattice of flats of a matroid. This is also true if we restrict to $q$-matroids, i.e., for every $q$-matroid there exists a matroid with the same lattice of flats~\cite{jany2023projectivization}. Therefore, one can associate to a $q$-matroid the Stanley-Reisner ideal of the matroid with the same lattice of flats. In the case of the $q$-matroid associated to an $\F_{q^m}$-linear rank-metric code, the graded Betti numbers of this ideal compute the generalized weights of the code, as shown in~\cite{johnsen2023weight}.

Theorem~\ref{thm:latticeofflats} allows us to extend the construction to $\Z$-latroids. In fact, one may associate to any $\Z$-latroid the Stanley-Reisner ideal of the matroid with the same lattice of flats, whose existence is guaranteed by Theorem~\ref{thm:latticeofflats}. However, one cannot hope that the smallest graded Betti numbers of that ideal in each homological position coincide with the generalized weights of the code, in this generality. One obstruction is that the graded Betti numbers of the ideal in each homological position form a strictly increasing integer sequence, while the generalized weights are only weakly increasing in general. E.g., for $\F_{q}$-linear rank-metric codes in $\F_q^{m\times n}$, up to $m$ consecutive generalized weights may coincide.
\end{remark}

\subsection{Generalized weights of a latroid}

We conclude the section on latroids by defining the generalized weights of a latroid. 
\begin{definition}
Let $(\rho,\lVert\cdot\rVert,\Ll)$ be an $A$-latroid and $a\in A$. The {\bf $a$-generalized weight} $d_a(\Cc)$ of $(\rho,\lVert\cdot\rVert,\Ll)$ is
\begin{equation*}
    d_a(\rho,\lVert\cdot\rVert,\Ll)=\min_{L\in\Ll}\{\lVert L\rVert:\lVert L\rVert-\rho(L)\geq a\},
\end{equation*}
where set $d_a(\Cc)=0$ if the right hand side is empty.
\end{definition}

\begin{remark}
In Remark~\ref{remark:latroidtomatroid} we show that a matroid $(E,\rho)$ and a $\Z$-latroid $(\rho,\lvert\cdot\rvert,\mathcal{P}(E))$ are essentially the same mathematical object. Therefore, we can associate a latroid to a linear block code $\Cc\subseteq\F_q^n$ in the same way that we usually associate a matroid to it. We consider the lattice $\mathcal{P}([n])$ of subsets of $[n]$ and define $\rho_{\Cc}:\mathcal{P}([n])\rightarrow\mathbb{Z}$ as $\rho_{\Cc}(L)=\lvert L\rvert-\dim(\Cc(L))$, where $\Cc(L)$ is the largest subcode of $\Cc$ with Hamming support contained in $L$. It is easy to verify that $(\rho_{\Cc},\lvert\cdot\rvert,\mathcal{P}([n]))$ is a $\Z$-latroid. Moreover
    \begin{equation*}
        d_r(\rho_{\Cc},\lvert\cdot\rvert,\mathcal{P}([n]))=\min_{L\in\mathcal{P}([n])}\{\lvert L\rvert:\dim(\Cc(L))\geq r\}=\min_{\Dd\subseteq\Cc}\{\lvert \supp(\Dd)\rvert:\dim(\Dd)\geq r\}=d_r(\Cc),
    \end{equation*}
    where the equality in the middle follows from the fact that $\supp(\Dd)=\supp(\Cc(\supp(\Dd)))$ and $\dim(\Dd)\leq\dim(\Cc(\supp(\Dd)))$ for a subcode $\Dd$ of $\Cc$. Therefore, the generalized weights of the latroid associated to a linear block code are equal to the generalized weights of the code.
\end{remark}

In the next proposition, we collect some basic properties of the generalized weights of a latroid.

\begin{proposition}
    Let $(\rho_1,\lVert\cdot\rVert,\Ll)$ and $(\rho_2,\lVert\cdot\rVert,\Ll)$ be $A$-latroids such that $\rho_2(L)\leq\rho_1(L)$ for all $L\in\Ll$ and let $a\in A$ be such that $d_{a}(\rho,\lVert\cdot\rVert,\Ll)\neq 0$. Then:
    \begin{enumerate}
        \item $d_{b}(\rho,\lVert\cdot\rVert,\Ll)\leq d_{a}(\rho,\lVert\cdot\rVert,\Ll)$ if $b\leq a\in A$,
        \item $d_{a}(\rho_1,\lVert\cdot\rVert,\Ll)\leq d_{a}(\rho_2,\lVert\cdot\rVert,\Ll)$.
    \end{enumerate}
    Moreover, suppose that $A=\Z$ and that $\lVert\cdot\rVert=\hgt$ is the height function of a graded lattice. Let $a,b\in\Z$, $b<a$, and suppose that there exists $\bar L\in\Ll$ such that $\lVert \bar L\rVert -\rho(\bar L)=b$. Then $$d_{b}(\rho,\lVert\cdot\rVert,\Ll)<d_{a}(\rho,\lVert\cdot\rVert,\Ll).$$
\end{proposition}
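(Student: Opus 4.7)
All three items reduce to elementary manipulations of the feasibility sets $F_a(\rho) := \{L \in \Ll : \lVert L\rVert - \rho(L) \geq a\}$ in the definition $d_a(\rho,\lVert\cdot\rVert,\Ll) = \min\{\lVert L\rVert : L \in F_a(\rho)\}$. For item 1, $b \leq a$ gives $F_a(\rho) \subseteq F_b(\rho)$ (a stronger defect threshold picks out a smaller set), so the minimum of $\lVert\cdot\rVert$ over the subset $F_a$ is at least the minimum over the superset $F_b$, yielding $d_b \leq d_a$; the hypothesis $d_a \neq 0$ ensures $F_a$ (and hence $F_b$) is non-empty. For item 2, the pointwise inequality $\rho_2 \leq \rho_1$ gives $F_a(\rho_1) \subseteq F_a(\rho_2)$, because any $L$ with $\lVert L\rVert - \rho_1(L) \geq a$ automatically satisfies $\lVert L\rVert - \rho_2(L) \geq \lVert L\rVert - \rho_1(L) \geq a$; the inequality between $d_a(\rho_1)$ and $d_a(\rho_2)$ then follows from the same monotonicity of $\min$ under set inclusion, together with the convention $d_a = 0$ on empty feasibility sets.

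Item 3 is the substantive content and requires a descent-along-a-chain argument to upgrade the weak inequality from item 1 to a strict one. Let $L^* \in \Ll$ attain $d_a$, so $\lVert L^*\rVert = d_a > 0$ and $\lVert L^*\rVert - \rho(L^*) \geq a$. Choose a maximal chain $0_\Ll = L_0 \lessdot L_1 \lessdot \cdots \lessdot L_h = L^*$ in $\Ll$, which exists because $\Ll$ is graded. Set $f(L) := \lVert L\rVert - \rho(L)$. At each covering step, $\lVert L_{i+1}\rVert - \lVert L_i\rVert = 1$ because $\lVert\cdot\rVert = \hgt$, while axiom L4 gives $0 \leq \rho(L_{i+1}) - \rho(L_i) \leq 1$; hence $f(L_{i+1}) - f(L_i) \in \{0,1\}$. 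Since $f(L_0) = 0$ and $f(L_h) \geq a$, the integer sequence $\bigl(f(L_i)\bigr)_{i=0}^h$ is non-decreasing with unit-or-zero jumps, so it attains every integer from $0$ up to $f(L_h)$. The existence of $\bar L$ with $f(\bar L) = b$ forces $b \geq 0$ (as $f \geq 0$ everywhere by L4 applied to $0_\Ll \leq L$), so $0 \leq b < a \leq f(L_h)$, and some $L_i$ along the chain satisfies $f(L_i) = b$. This $L_i$ lies in $F_b(\rho)$ and satisfies $i < h$ (since $f(L_i) = b < a \leq f(L_h)$), hence $\lVert L_i\rVert < \lVert L^*\rVert = d_a$, yielding $d_b \leq \lVert L_i\rVert < d_a$.

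The only non-routine ingredient is the unit-jump property of $f$ along a maximal chain, which crucially uses both the height-function assumption $\lVert\cdot\rVert = \hgt$ and the bounded-increasing axiom L4; everything else is bookkeeping with the $d_a = 0$ conventions. One can also observe that the hypothesis on $\bar L$ is used only to force $b \geq 0$, so the argument shows more: under the running assumptions, strict monotonicity holds for every $b$ with $0 \leq b < a$.
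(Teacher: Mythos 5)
Your proof is correct and takes essentially the same route as the paper's. Items~1 and~2 are, as you say, immediate from the definition; for item~3 the paper is a bit more economical, descending a single covering step from any optimizer $L^*$ of $d_a$ to some $\hat L<L^*$ with $\hgt(\hat L)=\hgt(L^*)-1$ and reading off $\hgt(\hat L)-\rho(\hat L)\ge \hgt(L^*)-\rho(L^*)-1\ge a-1\ge b$ directly from L4, so that $d_b\le\hgt(\hat L)=d_a-1<d_a$; your full-chain version uses the very same mechanism, and, as you observe at the end, makes transparent that the hypothesis on $\bar L$ is superfluous once $d_a\ne 0$ is assumed. One caveat on item~2: the containment $F_a(\rho_1)\subseteq F_a(\rho_2)$ you correctly derive from $\rho_2\le\rho_1$ yields $d_a(\rho_1)\ge d_a(\rho_2)$ when both feasibility sets are nonempty, which is the reverse of the inequality printed in the proposition; the printed direction looks like a slip (compare Proposition~\ref{proposition:basicpropgwr2}(2), where $\Dd\subseteq\Cc$ gives $\rho_{\Dd}\ge\rho_{\Cc}$ and $\bar d_r(\Dd)\ge\bar d_r(\Cc)$). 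Your phrase ``the inequality then follows'' should commit to the direction your argument actually establishes and flag the discrepancy with the statement.
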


\begin{proof}
    Items 1 and 2 follow directly from the definition of generalized weights. Let $b<a$ and suppose that there exists $\bar L\in\Ll$ such that $\hgt(\bar L)-\rho(\bar L)=b$. Let $\tilde L\in\Ll$ be such that $d_a(\rho,\hgt,\Ll)=\hgt(\tilde L)$ and $\hgt(\tilde L)-\rho(\tilde L)\geq a$. Then $\hgt(\tilde L)\geq \hgt(\bar L)$, since $\rho$ is bounded increasing. If $\hgt(\tilde L)>\hgt(\bar L)$, then the thesis follows. Assume therefore that $\hgt(\bar L)=\hgt(\tilde L)$. Let $\hat L<\tilde L$ be such that $\hgt(\tilde L)=\hgt(\hat L)+1$. Then 
    $$b\leq a-1\leq\hgt(\tilde L)-\rho(\tilde L)-1\leq \hgt(\hat L)-\rho(\hat L),$$
    since $\rho$ is bounded increasing. 
    Hence $d_b(\rho,\lVert\cdot\rVert,\Ll)\leq \lVert \hat L\rVert= \lVert \bar L\rVert-1$, which proves the thesis.
\end{proof}

\section{R-linear codes}\label{section:linear}

Let $R$ be a finite ring and let $\mathcal{M}(R^n)$ be the set of submodules of $R^n$.
In this section, we discuss how to associate a latroid to an $R$-linear code.
We denote by $\mathcal{R}^n$ the set of rectangular submodules of $R^n$, i.e.,
$$\mathcal{R}^n=\{M=I_1\times\dots\times I_n\subseteq R^n: I_i\text{ is an ideal of }R\text{ for all }i\in[n]\}.$$
Notice that $\mathcal{M}(R^n)$ and $\mathcal{R}^n$ are complete lattices with respect to the sum and the intersection. For a code $\Cc$ and a strictly increasing modular function $\lVert\cdot\rVert:\mathcal{M}(R^n)\rightarrow A$, we define $\rho_{\Cc}:\mathcal{M}(R^n)\rightarrow A$ as
\begin{equation*}
    \rho_{\Cc}(M)=\lVert M\rVert -\lVert M\cap \Cc\rVert\text{ for all }M\in\mathcal{M}(R^n).
\end{equation*}
In the next proposition, we consider the restriction of $\lVert\cdot\rVert$ and of $\rho_{\Cc}$ to an arbitrary sublattice of $\mathcal{M}(R^n)$. To simplify the notation, we do not indicate the domain of the functions whenever it is clear from the context.

\begin{proposition}\label{proposition:latroidcode}
    Let $\Ll$ be a sublattice of $(\mathcal{M}(R^n),\subseteq)$, let $\Cc\in\mathcal{M}(R^n)$ be a code.
    The triple $(\rho_{\Cc},\lVert\cdot\rVert,\Ll)$ is an $A$-latroid. In particular, the triple $(\rho_{\Cc},\lVert\cdot\rVert,\mathcal{R}^n)$ is an $A$-latroid.
\end{proposition}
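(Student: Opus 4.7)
The plan is to verify each of the five latroid axioms L1--L5 for the triple $(\rho_{\Cc},\lVert\cdot\rVert,\Ll)$. Axioms L2 and L3 are part of the hypothesis on $\lVert\cdot\rVert$ and transfer verbatim. Axiom L1 is immediate: $\rho_{\Cc}(0)=\lVert 0\rVert-\lVert 0\cap\Cc\rVert=0$ (using the convention $\lVert 0_{\Ll}\rVert=0_A$ that comes with a length function). The real content is in L4 and L5, and at the outset I would record that since $\Ll$ is a sublattice of $\mathcal{M}(R^n)$, the operations $\vee$ and $\wedge$ on $\Ll$ coincide with $+$ and $\cap$ of submodules; for $\mathcal{R}^n$ this is clear, since $I_1\times\cdots\times I_n$ meets and sums componentwise.

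For L4, let $L\subsetneq M$ in $\Ll$. The key move is to apply the modularity of $\lVert\cdot\rVert$ to the pair $L$ and $M\cap\Cc$. Because $L\subseteq M$, one has $L\cap(M\cap\Cc)=L\cap\Cc$ and $L+(M\cap\Cc)\subseteq M$, so modularity yields
\begin{equation*}
\lVert L\rVert+\lVert M\cap\Cc\rVert=\lVert L+(M\cap\Cc)\rVert+\lVert L\cap\Cc\rVert.
\end{equation*}
Rearranging gives $\lVert M\cap\Cc\rVert-\lVert L\cap\Cc\rVert=\lVert L+(M\cap\Cc)\rVert-\lVert L\rVert$, which is nonnegative by the monotonicity implicit in "strictly increasing", and bounded above by $\lVert M\rVert-\lVert L\rVert$ since $L+(M\cap\Cc)\subseteq M$. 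Substituting back into $\rho_{\Cc}(M)-\rho_{\Cc}(L)=(\lVert M\rVert-\lVert L\rVert)-(\lVert M\cap\Cc\rVert-\lVert L\cap\Cc\rVert)$ gives both inequalities of L4 at once.

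For L5, I would expand $\rho_{\Cc}(L)+\rho_{\Cc}(M)-\rho_{\Cc}(L\vee M)-\rho_{\Cc}(L\wedge M)$. The $\lVert\cdot\rVert$-terms cancel by modularity of $\lVert\cdot\rVert$. What remains is the inequality
\begin{equation*}
\lVert L\cap\Cc\rVert+\lVert M\cap\Cc\rVert\leq\lVert (L\vee M)\cap\Cc\rVert+\lVert (L\wedge M)\cap\Cc\rVert,
\end{equation*}
which I would prove by applying modularity of $\lVert\cdot\rVert$ to the pair $L\cap\Cc$, $M\cap\Cc$. This identifies the left-hand side with $\lVert (L\cap\Cc)+(M\cap\Cc)\rVert+\lVert (L\wedge M)\cap\Cc\rVert$, after noting $(L\cap\Cc)\cap(M\cap\Cc)=(L\wedge M)\cap\Cc$. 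The containment $(L\cap\Cc)+(M\cap\Cc)\subseteq(L\vee M)\cap\Cc$ together with monotonicity of $\lVert\cdot\rVert$ then finishes the job.

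I do not expect a genuine obstacle in this argument; its engine is just the modular identity used twice, first between an element of $\Ll$ and a $\Cc$-intersection, and second between two $\Cc$-intersections. The only subtlety worth flagging is to make sure one is applying modularity inside $\mathcal{M}(R^n)$ (which is modular as a lattice of submodules) and then transporting the resulting equalities back to $\Ll$ via the sublattice embedding; since $\mathcal{R}^n$ is closed under $+$ and $\cap$, no issue arises for the statement about $\mathcal{R}^n$, which is then a direct corollary.
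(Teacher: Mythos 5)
Your proof is correct and follows essentially the same strategy as the paper: verify L1--L5 directly, with L2, L3, L1 immediate and the real work concentrated in showing $\rho_{\Cc}$ is bounded increasing and submodular via the modular identity for $\lVert\cdot\rVert$ together with monotonicity. The one genuine variant is in L4: the paper applies modularity to the pairs $(M_i,\Cc)$, rewriting $\lVert M_i\cap\Cc\rVert=\lVert M_i\rVert+\lVert\Cc\rVert-\lVert M_i+\Cc\rVert$ for $i=1,2$, and then compares the two, whereas you apply modularity once to the single pair $(L,\,M\cap\Cc)$ and read off both inequalities of L4 from the resulting identity $\lVert M\cap\Cc\rVert-\lVert L\cap\Cc\rVert=\lVert L+(M\cap\Cc)\rVert-\lVert L\rVert$. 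Your version is a touch more economical and perhaps clearer in that it produces both bounds from one equation; the paper's version makes explicit the natural quantity $\lVert M_2+\Cc\rVert-\lVert M_1+\Cc\rVert$ controlling the rank increase. Both are equally valid. One small imprecision worth fixing in your write-up: in the closing paragraph you speak of ``applying modularity inside $\mathcal{M}(R^n)$ (which is modular as a lattice of submodules)'' — what is actually being used is the hypothesis that the \emph{function} $\lVert\cdot\rVert$ is modular on $\mathcal{M}(R^n)$, not lattice-theoretic modularity of $\mathcal{M}(R^n)$ itself; since $\lVert\cdot\rVert$ is defined on all of $\mathcal{M}(R^n)$, applying it to submodules such as $L+(M\cap\Cc)$ or $M_i+\Cc$ that may fall outside $\Ll$ is harmless, but the justification should reference the hypothesis on $\lVert\cdot\rVert$ rather than the lattice.
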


\begin{proof}
Let $M_1\subseteq M_2\in\Ll$. Since $M_1\cap\Cc\leq M_2\cap\Cc$ 
$$\rho_{\Cc}(M_2)-\rho_{\Cc}(M_1)\leq \lVert M_2\rVert-\lVert M_1\rVert.$$ Moreover, we have
    \begin{equation*}
    \begin{split}
        \rho_{\Cc}(M_2)-\rho_{\Cc}(M_1)&=\lVert M_2\rVert-\lVert M_1\rVert-\lVert M_2\cap\Cc\rVert+\lVert M_1\cap \Cc\rVert=\\
        &=\lVert M_2\rVert-\lVert M_1\rVert+(\lVert M_1\rVert+\lVert \Cc\rVert-\lVert M_1+\Cc\rVert)-(\lVert M_2\rVert+\lVert \Cc\rVert-\lVert M_2+\Cc\rVert)=\\
        &=\lVert M_2+\Cc\rVert-\lVert M_1+\Cc\rVert\geq0,
    \end{split}
    \end{equation*}
    hence $\rho_{\Cc}$ is bounded increasing. We claim that $\rho_{\Cc}$ is a submodular function. Let $L_1,L_2\in\Ll$. By the modularity of the function $\lVert\cdot\rVert$, we have 
    \begin{equation*}
        \begin{split}
             \rho_{\Cc}(L_1)+\rho_{\Cc}(L_2)&=\lVert L_1\rVert+\lVert L_2\rVert-\lVert L_1\cap\Cc\rVert-\lVert L_2\cap\Cc\rVert=\\
             &=\lVert L_1+L_2\rVert+\lVert L_1\cap L_2\rVert-(\lVert L_1\cap L_2\cap\Cc\rVert+\lVert (L_1\cap\Cc)+(L_2\cap\Cc)\rVert)=\\
             &\geq\lVert L_1+L_2\rVert+\lVert L_1\cap L_2\rVert-(\lVert L_1\cap L_2\cap\Cc\rVert+\lVert (L_1+L_2)\cap\Cc\rVert)=\\
             &=\rho_{\Cc}(L_1\cap L_2)+\rho_{\Cc}(L_1+L_2),
        \end{split}
    \end{equation*}
    where the inequality follows from $(L_1\cap\Cc)+(L_2\cap\Cc)\subseteq (L_1+L_2)\cap\Cc$.
\end{proof}

The next example clarifies our reason for explicitly considering $\mathcal{R}^n$ in the previous proposition.

\begin{example}
Let $R$ be a finite field $\F_q$, and let $\Cc\subseteq\F_q^n$ be a linear block code. It is well known that the dimension is a modular function from the set of vector subspaces of $\F_q^n$ to $\Z$. Therefore, $(\rho_{\Cc},\dim,\F_q^n)$ is a $\Z$-latroid by Proposition~\ref{proposition:latroidcode}. In this case, the rectangular subspaces of $\F_q^n$ are direct products of copies of $\F_q$ and $\{0\}$. In particular, the rectangular subspaces are in bijection with the subsets of $[n]$. Therefore, we can construct an associated matroid as in Remark~\ref{remark:latroidtomatroid}. This matroid coincides with the matroid that we usually associate to a code endowed with the Hamming metric.
\end{example}

We point out that modular functions and modular supports were defined independently in two different contexts. So, even though they are both called modular, they are not the same class of functions. However, there are situations where modular supports are also modular functions. For instance, we now show that a standard modular support is also a strictly increasing modular function on the lattice $\mathcal{R}^n$, if $R$ is a principal ideal ring. We start by considering the case where $R$ is a finite chain ring.

\begin{lemma}\label{lemma:localpirmodular}
Let $R$ be a finite chain ring, and let $\supp:R^n\rightarrow \Z^u$ be a standard support. Then:
\begin{enumerate}
\item $\supp(M_1)\vee\supp(M_2)=\supp(M_1+M_2)$ and $\supp(M_1)\wedge\supp(M_2)=\supp(M_1\cap M_2)$ for any $M_1,M_2\in\mathcal{R}^n$. In particular $(\{\supp(M): M\in\mathcal{R}^n\},\leq)$ is a finite lattice.
\item $\supp: \mathcal{R}^n\rightarrow\Z^u$ is a modular function, i.e., $\supp(M_1)+\supp(M_2)=\supp(M_1+ M_2)+\supp(M_1\cap M_2)$ for all $M_1, M_2\in\mathcal{R}^n$.
\end{enumerate}
\end{lemma}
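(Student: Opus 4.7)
The plan is to split the proof into three parts: the join equality, the meet equality, and the derived consequences (finite lattice and modularity). Throughout I will use that standardness gives the decomposition
\begin{equation*}
\supp(M) = (\supp_1(\pi_1(M)), \dots, \supp_n(\pi_n(M))),
\end{equation*}
where $\pi_i : R^n \to R$ is the $i$-th coordinate projection and each $\supp_i$ is extended to subsets of $R$ by taking the join. This reduces several claims to coordinate-by-coordinate statements about ideals of $R$.

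For the join equality $\supp(M_1+M_2) = \supp(M_1) \vee \supp(M_2)$ I would argue directly from the axioms in Definition~\ref{definition:ringsupport}, with no use of chain ring or standardness hypotheses. Monotonicity of the induced set-support gives $\supp(M_1) \vee \supp(M_2) \leq \supp(M_1+M_2)$ from $M_1, M_2 \subseteq M_1 + M_2$. For the reverse inclusion I would decompose every $v \in M_1 + M_2$ as $v = v_1 + v_2$ with $v_i \in M_i$ and apply axiom 3 to get $\supp(v) \leq \supp(v_1) \vee \supp(v_2) \leq \supp(M_1) \vee \supp(M_2)$, then take the join over $v$.

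For the meet equality I would work coordinate by coordinate. For rectangular modules $M_j = I_{j,1} \times \dots \times I_{j,n}$ the intersection $M_1 \cap M_2$ is rectangular with $i$-th coordinate ideal $I_{1,i} \cap I_{2,i}$, so the claim reduces to $\supp_i(I_{1,i} \cap I_{2,i}) = \min(\supp_i(I_{1,i}), \supp_i(I_{2,i}))$. This is where the chain ring hypothesis enters: the ideals of $R$ form a chain, hence any two are comparable and their intersection is the smaller one. Since $\supp_i$ is monotonic on subsets of $R$, the support of the smaller ideal is the smaller of the two supports, which is exactly the required equality. The opposite inclusion $\supp(M_1 \cap M_2) \leq \supp(M_1) \wedge \supp(M_2)$ is just monotonicity.

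The set $\{\supp(M): M\in\mathcal{R}^n\}$ is finite because $R$ is, and it is closed under $\vee$ and $\wedge$ by the two equalities above applied to rectangular modules (whose sums and intersections remain rectangular), giving the finite lattice claim. Finally, modularity of $\supp$ on $\mathcal{R}^n$ follows in one line from the two equalities combined with the componentwise identity $a + b = (a \vee b) + (a \wedge b)$ in $\Z^u$:
\begin{equation*}
\supp(M_1) + \supp(M_2) = \bigl(\supp(M_1) \vee \supp(M_2)\bigr) + \bigl(\supp(M_1) \wedge \supp(M_2)\bigr) = \supp(M_1 + M_2) + \supp(M_1 \cap M_2).
\end{equation*}
The main obstacle is the meet equality: it genuinely requires both hypotheses, since without the chain ring assumption the intersection of ideals is not pinned down by their supports, and without standardness one cannot pass from $\supp(M)$ to the sequence of coordinate ideals $\pi_i(M)$ where the linear order of ideals can be used.
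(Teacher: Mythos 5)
Your argument is correct, and for the meet equality it takes a cleaner route than the paper's. Both proofs handle the join equality identically (decompose $m\in M_1+M_2$ as $m_1+m_2$ and use axiom~3 of Definition~\ref{definition:ringsupport}), and both derive modularity from the componentwise identity $a+b=(a\vee b)+(a\wedge b)$ in $\Z^u$. The genuine difference is in the meet. You observe that for rectangular modules the $j$-th coordinate of $M_1\cap M_2$ is $I_{1,j}\cap I_{2,j}$, that in a chain ring these two ideals are comparable so the intersection is simply the smaller one, and that monotonicity of $\supp_j$ on ideals (which follows from axiom~2) then gives $\supp_j(I_{1,j}\cap I_{2,j})=\supp_j(I_{1,j})\wedge\supp_j(I_{2,j})$ directly. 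The paper instead picks elements $m_1\in M_1$, $m_2\in M_2$ supported on a single coordinate realizing the maximal supports, writes the nonzero entry as a unit times a power of the generator $\alpha$ of the maximal ideal, and shows by an explicit multiplication that one of them already lies in $M_1\cap M_2$. Your argument is more conceptual and avoids the case analysis implicit in the paper's manipulation. One thing worth flagging: the lemma as stated asserts the meet equality for all $M_1,M_2\in\mathcal M(R^n)$, but that version is false (e.g.\ over $\F_q^2$ with the Hamming support take $M_1=\langle(1,1)\rangle$, $M_2=\langle(1,0)\rangle$: then $\supp(M_1\cap M_2)=(0,0)$ while $\supp(M_1)\wedge\supp(M_2)=(1,0)$). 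Your restriction to $\mathcal R^n$ is the right scope; the paper's own proof implicitly makes the same restriction, since producing $m_1\in M_1$ supported on a single coordinate with $\supp(m_1)_i=\supp(M_1)_i$ requires $M_1$ to be rectangular.
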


\begin{proof}
From the definition of $\supp$, it follows that $\supp(M_1)\vee\supp(M_2)\leq \supp (M_1+M_2)$. For every $m\in M_1+M_2$ there exist $m_1\in M_1$ and $m_2\in M_2$, such that $m=m_1+m_2$. Therefore, $\supp(m)\leq \supp(m_1)\vee\supp(m_2)\leq \supp(M_1)\vee\supp(M_2)$ and so $\supp(M_1)\vee\supp(M_2)=\supp(M_1+ M_2)$.

Since $M_1\cap M_2\subseteq M_1$ and $M_1\cap M_2\subseteq M_2$, then $\supp(M_1\cap M_2)\leq \supp(M_1)\wedge \supp(M_2)$. Fix $i\in[u]$. Since $\supp$ is standard, there exist $m_1=(0,\dots,0,(m_1)_i,0,\dots,0)\in M_1$ and $m_2=(0,\dots,0,(m_2)_i,0,\dots,0)\in M_2$ such that $\supp(m_1)_i=\supp(M_1)_i$ and $\supp(m_2)_i=\supp(M_2)_i$. Let $\alpha$ be a generator of the maximal ideal of $R$. There exist $r_1,r_2$ invertible elements and $k_1,k_2$ such that $(m_1)_i=r_1\alpha^{k_1}$ and $(m_2)_i=r_2\alpha^{k_2}$. Assume without loss of generality that $k_1\leq k_2$. Then $\supp(m_1)_i\geq \supp(m_2)_i$, so $m_2=r_1^{-1}r_2\alpha^{k_2-k_1}m_1\in M_1\cap M_2$. Hence $\supp(M_1\cap M_2)_i\geq \supp(m_2)_i=\supp(M_2)_i$, therefore $\supp(M_1\cap M_2)_i=\supp(M_1)_i\wedge \supp(M_2)_i$. We conclude, since $\supp(M_1)+\supp(M_2)=\supp(M_1)\wedge \supp(M_2)+\supp(M_1)\vee \supp(M_2)$.
\end{proof}

\begin{proposition}\label{proposition:pirmodular}
Let $R$ be a principal ideal ring and let $\supp:R^n\rightarrow \Z^u$ be a standard modular support. Then $\supp: \mathcal{R}^n\rightarrow\Z^u$ is a modular function.
\end{proposition}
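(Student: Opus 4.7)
The plan is to reduce the statement to the local case already handled by Lemma~\ref{lemma:localpirmodular}. Since $R$ is a finite principal ideal ring, its structure theorem gives a decomposition $R=R_1\times\dots\times R_\ell$ into finite chain rings, and correspondingly $R^n=R_1^n\times\dots\times R_\ell^n$. By Proposition~\ref{proposition:supportsplits}, the modular support $\supp$ factors, up to a permutation of the coordinates of $\Z^u$, as $\supp=\supp_1\times\dots\times\supp_\ell$ with each $\supp_i:R_i^n\to\Z^{u_i}$ again a modular support. Because $\supp$ is standard, the decomposition $\supp((r_1,\dots,r_n))=(\supp^{(1)}(r_1),\dots,\supp^{(n)}(r_n))$ with $\supp^{(j)}:R\to\Z^{u_j}$ restricts factorwise to produce a standard decomposition of each $\supp_i$ on $R_i^n$; this is where I would first spend care, verifying that the product decomposition of $R$ and the product decomposition of $\supp$ are compatible coordinate by coordinate.

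Next, I would exploit the product structure of rectangular submodules. Since ideals of $R$ correspond bijectively to tuples of ideals of the $R_i$, every $M\in\mathcal{R}^n$ decomposes uniquely as $M=M_1\times\dots\times M_\ell$ with $M_i\in\mathcal{R}_i^n$, and under this bijection both sum and intersection act componentwise: for $M=\prod_i M_i$ and $N=\prod_i N_i$ one has $M+N=\prod_i(M_i+N_i)$ and $M\cap N=\prod_i(M_i\cap N_i)$. Moreover $\supp(M)=(\supp_1(M_1),\dots,\supp_\ell(M_\ell))$ by the factorisation of $\supp$, so values of $\supp$ on rectangular submodules split along the factors as well.

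Finally, I would apply Lemma~\ref{lemma:localpirmodular} to each chain-ring factor to obtain
\[
\supp_i(M_i)+\supp_i(N_i)=\supp_i(M_i+N_i)+\supp_i(M_i\cap N_i)
\]
for all $i\in[\ell]$. Concatenating these identities across $i$ and using the componentwise formulas above gives $\supp(M)+\supp(N)=\supp(M+N)+\supp(M\cap N)$, which is the required modularity of $\supp$ on $\mathcal{R}^n$. The main obstacle is really the bookkeeping in the first step—confirming that standardness is preserved by the ring decomposition so that the hypotheses of Lemma~\ref{lemma:localpirmodular} hold for each $\supp_i$; once that is established, the proof reduces to a clean summation over the local factors.
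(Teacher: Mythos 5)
Your proof follows essentially the same route as the paper: decompose $R$ into finite chain rings via Proposition~\ref{proposition:supportsplits}, decompose rectangular submodules and $\supp$ factorwise, and apply Lemma~\ref{lemma:localpirmodular} to each chain-ring factor. You rightly flag the one subtlety that the paper glosses over, namely that the components $\supp_i$ inherit standardness (the statement of Proposition~\ref{proposition:supportsplits} only asserts they are modular, while Lemma~\ref{lemma:localpirmodular} requires standardness), but this is a bookkeeping check and your argument closes it; the two proofs coincide in substance.
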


\begin{proof}
By~Proposition~\ref{proposition:supportsplits}, $R^n=R_1^n\times\dots\times R_{\ell}^n$ with $R_1,\dots,R_{\ell}$ finite chain rings and $\supp=\supp_1\times\dots\supp_\ell$, where $\supp_i=R_i^n\rightarrow \Z^{u_i}$ is a standard modular support for all $i\in[\ell]$. If $M\in\mathcal{R}^n$, then $M=M_1\times\dots\times M_{\ell}$ with $M_i\in\mathcal{R}_i^n$, We conclude by applying Lemma~\ref{lemma:localpirmodular} to each $\supp_i$.
\end{proof}

While in Lemma~\ref{lemma:localpirmodular} we do not require the support to be modular, Proposition~\ref{proposition:pirmodular} does not hold in general without this assumption, as one can see in the next example.

\begin{example}\label{example:modularitypir}
Consider the ring $\Z_6$ endowed with the Hamming support and let $M_1=(2)$ and $M_2=(3)$. Then, we obtain $2=\supp(M_1)+\supp(M_2)\neq\supp(M_1+M_2)+\supp(M_1\cap M_2)=1$.
\end{example}

\begin{corollary}
Let $R$ be a principal ideal ring and let $\supp:R^n\rightarrow \Z^u$ be a standard modular support. The associated weight function is modular on $\mathcal{R}^n$.
\end{corollary}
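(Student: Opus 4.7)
The plan is to reduce this corollary immediately to Proposition~\ref{proposition:pirmodular}, which already gives modularity of $\supp$ itself as a function $\mathcal{R}^n \to \Z^u$. The only extra ingredient is that taking the $1$-norm turns the identity $\supp(M_1)+\supp(M_2)=\supp(M_1+M_2)+\supp(M_1\cap M_2)$ into the corresponding identity for the weight.

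More concretely, I would argue as follows. First, recall that by Definition~\ref{definition:weightring} we have $\wt(M)=\lvert\supp(M)\rvert$, where $\lvert\cdot\rvert$ denotes the $1$-norm on $\Z^u$. Next, observe that since $\supp$ takes values in the non-negative orthant of $\Z^u$ (this is forced by axiom~1 of Definition~\ref{definition:ringsupport} together with axiom~3 applied to $v=0$, which yields $\supp(v)\geq 0$ for all $v$, and then passes to the pointwise join over submodules), the $1$-norm is simply the sum of coordinates and is therefore additive: $\lvert a+b\rvert=\lvert a\rvert+\lvert b\rvert$ for all $a,b\in\Z_{\geq 0}^u$. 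Applying this to $a=\supp(M_1+M_2)$ and $b=\supp(M_1\cap M_2)$, and using Proposition~\ref{proposition:pirmodular}, one obtains
\begin{equation*}
    \wt(M_1)+\wt(M_2)=\lvert\supp(M_1)+\supp(M_2)\rvert=\lvert\supp(M_1+M_2)+\supp(M_1\cap M_2)\rvert=\wt(M_1+M_2)+\wt(M_1\cap M_2),
\end{equation*}
which is exactly the modularity statement for the weight function on $\mathcal{R}^n$.

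There is essentially no obstacle here: Proposition~\ref{proposition:pirmodular} does all the real work, and the only thing left is to confirm that the $1$-norm respects addition of non-negative vectors, which is immediate. The main point I would double-check in a careful write-up is that supports are indeed non-negative, so that $\lvert\cdot\rvert$ is additive on the relevant vectors; this is already implicit in Lemma~\ref{lemma:localpirmodular}, where joins and meets of supports are interpreted coordinatewise in $\Z_{\geq 0}^u$.
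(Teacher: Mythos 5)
Your proof is correct and matches the paper's: reduce to Proposition~\ref{proposition:pirmodular} and observe that the $1$-norm is additive on the non-negative orthant, so the vector-valued modular identity for $\supp$ passes to the scalar weight. One small correction to the aside you flagged for double-checking: non-negativity of $\supp$ does not follow from axiom~3 applied to $v=0$ (that yields the tautology $\supp(w)\leq 0\vee\supp(w)$); the clean source is axiom~2 with $r=0$, which gives $0=\supp(0)\leq\supp(v)$ for all $v$.
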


\begin{proof}
The thesis follows from Proposition~\ref{proposition:pirmodular}, since the direct sum of modular functions is modular.
\end{proof}

The next example shows that the weight function associated to a modular support is not necessarily modular on $\mathcal{M}(R^n)$.

\begin{example}
Let $R=\F_2$ and let $\supp$ be the Hamming support. Let $M_1,M_2\in\mathcal{M}(\F_2^2)$, $M_1=\langle(1,1)\rangle$ and $M_2=\langle(0,1)\rangle$. Then $\wt(M_1)=2$, $\wt(M_2)=1$, $\wt(M_1+M_2)=2$, and $\wt(M_1\cap M_2)=0$.
\end{example}

Notice that there are standard supports that are not strictly increasing functions. For example, the Hamming support on $\Z_6$ is not strictly increasing, as $(1)\supsetneq(2)$ but $\wt(1)=\wt(2)=1$. However, in the following proposition we show that all standard modular supports on principal ideal rings are strictly increasing.

\begin{proposition}\label{propostion:pirstrictlyinc}
    Let $R$ be a principal ideal ring and let $\supp:R^n\rightarrow \Z^u$ be a standard modular support. Then, $\supp: \mathcal{R}^n\rightarrow\Z^u$ is strictly increasing. 
\end{proposition}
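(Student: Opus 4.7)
The plan is to reduce to the case where $R$ is a finite chain ring and then control the support on a strictly nested pair of ideals via the modularity axiom.

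First, by Proposition~\ref{proposition:supportsplits} we can write $\supp=\supp_1\times\dots\times\supp_\ell$, with each $\supp_i:R_i^n\rightarrow\Z^{u_i}$ a standard modular support on the local factor $R_i$ of $R$. Any rectangular submodule $M\in\mathcal{R}^n$ splits accordingly as $M=M_1\times\dots\times M_\ell$ with $M_i\in\mathcal{R}^n_i$, and a strict inclusion $M\subsetneq N$ forces strict inclusion $M_i\subsetneq N_i$ in at least one factor. Since $\supp(M)$ is just the concatenation of the $\supp_i(M_i)$'s, it is enough to prove the statement assuming $R$ itself is a finite chain ring with maximal ideal $(\alpha)$.

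In this setting, standardness further decomposes $\supp=\supp^{(1)}\times\dots\times\supp^{(n)}$ with $\supp^{(j)}:R\rightarrow\Z^{u_j}$, and each $\supp^{(j)}$ inherits modularity from $\supp$ by testing the modularity axiom on vectors supported in a single coordinate. For a rectangular module $M=I_1\times\dots\times I_n$ one then has $\supp(M)=(\supp^{(1)}(I_1),\dots,\supp^{(n)}(I_n))$, and for a principal ideal $I=(r)$ axiom~2 of supports together with $r\in I$ gives $\supp^{(j)}(I)=\supp^{(j)}(r)$. The whole problem therefore reduces to showing that whenever $0\leq k_2<k_1$ with $(\alpha^{k_1})\subsetneq(\alpha^{k_2})$, one has $\supp^{(j)}(\alpha^{k_1})<\supp^{(j)}(\alpha^{k_2})$ strictly in $\Z^{u_j}$.

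If $\alpha^{k_1}=0$ this is immediate from axiom~1 of supports. Otherwise $\supp^{(j)}(\alpha^{k_1})\leq\supp^{(j)}(\alpha^{k_2})$ by axiom~2, and I would argue by contradiction: assume equality. Then at every coordinate $s\in[u_j]$ one has $\supp^{(j)}(\alpha^{k_2})_s\leq\supp^{(j)}(\alpha^{k_1})_s$, so modularity of $\supp^{(j)}$ produces $r\in R$ with $\supp^{(j)}(\alpha^{k_2}+r\alpha^{k_1})_s<\supp^{(j)}(\alpha^{k_2})_s$. Since $k_1>k_2$, however, $\alpha^{k_2}+r\alpha^{k_1}=\alpha^{k_2}(1+r\alpha^{k_1-k_2})$ with $1+r\alpha^{k_1-k_2}$ a unit, so this element generates the same ideal as $\alpha^{k_2}$ and hence has identical support—contradiction. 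I expect this final modularity argument to be the main obstacle, in particular confirming that the single-coordinate supports $\supp^{(j)}$ truly inherit modularity and correctly exploiting the finite chain structure to turn a perturbation by a higher power of $\alpha$ into multiplication by a unit.
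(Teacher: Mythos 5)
Your proof is correct and follows essentially the same route as the paper: reduce to a finite chain ring via Proposition~\ref{proposition:supportsplits}, isolate a coordinate where the rectangular modules differ by a nested pair of ideals $(\alpha^{k_1})\subsetneq(\alpha^{k_2})$, and then show that equality of supports would let the modularity axiom produce a vector of strictly smaller support which, by the local structure of $R$, is a unit multiple of $\alpha^{k_2}$ — a contradiction. The only difference is cosmetic: you spell out the single-coordinate supports $\supp^{(j)}$ and formally verify they inherit modularity, whereas the paper works directly with vectors of the form $(0,\dots,0,\alpha^{t},0,\dots,0)$ without that intermediate step.
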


\begin{proof}
    As in the case of Proposition~\ref{proposition:pirmodular}, it suffices to prove the result for finite chain rings. So, assume that $R$ is a finite chain ring with maximal ideal generated by $\alpha$, and let $M_1<M_2$ be two rectangular submodules of $R^n$. Then, there exist $i\in[n]$, $m_1=(0,\dots,0,\alpha^{t_1},0,\dots,0)\in M_1$, and $m_2=(0,\dots,0,\alpha^{t_2},0,\dots,0)\in M_2$ with $t_2<t_1$, such that $\supp(M_1)_i=\supp(m_1)_i$ and $\supp(M_2)_i=\supp(m_2)_i$. Since $m_1\in M_2$, we have that $\supp(M_1)_i\leq\supp(M_2)_i$. Suppose now that they are equal. On the one side, since $\supp$ is a modular support, there is $r\in R$ such that $\supp((0,\dots,0,\alpha^{t_2}-r\alpha^{t_1},0,\dots,0))_i<\supp((0,\dots,0,\alpha^{t_2},0,\dots,0))_i$. On the other side, since $R$ is a local ring, we have that $1-r\alpha^{t_2-t_1}$ is an invertible element, and so $\supp((0,\dots,0,\alpha^{t_1}-r\alpha^{t_2},0,\dots,0))_i=\supp((0,\dots,0,\alpha^{t_1},0,\dots,0))_i$. This is a contradiction, and therefore we conclude that $\supp(M_1)_i<\supp(M_2)_i$, and so $\supp(M_1)<\supp(M_2)$.
\end{proof}

A standard modular support on a principal ideal ring defines a strictly increasing modular function on $\mathcal{R}^n$
by Proposition~\ref{proposition:pirmodular} and Proposition~\ref{propostion:pirstrictlyinc}. However, the same support is not in general a modular function on $\mathcal{M}(R^n)$. In particular, if we define $\rho_{\Cc}$ as $\rho_{\Cc}(M)=\supp(M)-\supp(M\cap\Cc)$ for $M\in\mathcal{R}^n$, then the triple $(\rho_{\Cc},\supp,\mathcal{R}^n)$ may not be a $\Z^u$-latroid. However, given a standard modular support, we can construct a $\Z^u$-latroid as follows. 

\begin{definition}\label{def:latroid_supp}
Let $\supp$ be a standard modular support and define 
$\rho_{\Cc}^{\supp}:\mathcal{M}(R^n)\rightarrow\Z^u$ as
\begin{equation*}
    \rho_{\Cc}^{\supp}(M)=\supp(M)-\supp\left(\min\{L\in\mathcal{R}^n:M\cap\Cc\subseteq L\}\right)
\end{equation*}
for $M\in\mathcal{M}(R^n)$.
\end{definition}

Let $\bar\Cc$ be the smallest rectangular submodule that contains $\Cc$. It is easy to verify that $\bar\Cc=\pi_1(\Cc)\times\dots\times\pi_{n}(\Cc)$, where $\pi_i$ is the canonical projection on the $i$-th entry, and that $M\cap\bar\Cc=\min\{L\in\mathcal{R}^n:M\cap\Cc\subseteq L\}$. Following the proof of Proposition~\ref{proposition:latroidcode}, one can prove that $(\rho_{\Cc}^{\supp},\supp,\mathcal{R}^n)$ is a latroid.

\subsection{Chain support, Tutte polynomial, and weight distribution}

The weight enumerator is a central and widely studied invariant in coding theory \cite[Chapter VI]{van1971coding}. It captures many interesting properties of a code, e.g., it may be used to better understand the decoding properties of the code. For instance, the weight enumerator of a binary code allows us to estimate the probability that a received codeword is closer to a different codeword compared to the actual transmitted codeword \cite[Section 3]{jurrius2009codes}.
In this section, we propose a finer definition of weight enumerator for linear codes over rings. For an $R$-linear code endowed with the chain support (see Definition~\ref{definition:chainsupport}), we show how to recover the weight enumerator from a suitable associated latroid. 

\begin{definition}
The {\bf homogeneous weight enumerator}\index{weight enumerator!homogeneous} of an $R$-linear code $\Cc\subseteq R^n$ is the
polynomial
$$W_{\Cc}(x,y)=\sum_{c\in\Cc} x^{\wt(c)}y^{\wt(R^n)-\wt(c)}.$$
\end{definition}

The homogeneous weight enumerator can also be written as
\begin{equation*}
    W_{\Cc}(x,y)=\sum_{w=0}^{\wt(R^n)}A_wx^{w}y^{\wt(R^n)-w},
\end{equation*}
where $$A_w=\lvert\{c\in\Cc:\wt(c)=w\}\rvert.$$ 
The list $A_0,\dots,A_{\wt(R^n)}$ is called the {\bf weight distribution} of $\Cc$ and is an invariant of the code. Notice that, if $R$ is a field and the weight is the Hamming weight, we have $\wt(R^n)=n$ and we recover the classical definition of weight enumerator. In the more general case of standard support, it is important to keep track of what happens in each component. For this reason, we introduce a finer version of the weight enumerator, which we call support enumerator. 

\begin{definition}
For a support $\supp:R^n\rightarrow \Z^u$, we define the {\bf support enumerator}\index{weight enumerator!complete} as
\begin{equation*}
    W_{\Cc}(\mathbf{x},\mathbf{y})=\sum_{c\in\Cc}\mathbf{x}^{\supp(c)}\mathbf{y}^{\overline{\supp(c)}} .
\end{equation*}
where $\mathbf{x}^{\supp(c)}=\prod_{i=1}^{u}x_i^{\supp(c)_i}$ and $\overline{\supp(c)}=\supp(R^n)-\supp(c)$. 
\end{definition}

One can recover the homogeneous weight enumerator from the support enumerator by setting $x_1=\dots=x_u=x$ and $y_1=\dots=y_u=y$. The next lemma follows by direct computation.

\begin{lemma}\label{lemma:weightenumeratorprod}
    Let $R=R_1\times\dots\times R_{\ell}$ be a principal ideal ring, $\Cc=\Cc_1\times\dots\times\Cc_{\ell}$ be an $R$-linear code, and $\supp=\supp_1\times\dots\times\supp_{\ell}$ be a modular support. Then
    \begin{equation*}
        W_{\Cc}(\mathbf{x}_1,\dots,\mathbf{x}_{\ell},\mathbf{y}_1,\dots,\mathbf{y}_{\ell})=\prod_{i=1}^{\ell}W_{\Cc_i}(\mathbf{x}_i,\mathbf{y}_{i}).
    \end{equation*}
\end{lemma}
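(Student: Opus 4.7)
The plan is to unpack the definition of $W_{\Cc}$ and observe that all three ingredients involved---the code, the support, and hence the monomials appearing in the sum---respect the decomposition of $R^n$ as a product of the $R_i^n$. Since the lemma is asserted to follow by direct computation, the proof should be essentially a reindexing of the defining sum.

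Concretely, I would proceed as follows. First, identify $\Cc$ with $\Cc_1\times\dots\times\Cc_\ell$ via the canonical isomorphism, so that every $c\in\Cc$ is written uniquely as $c=(c_1,\dots,c_\ell)$ with $c_i\in\Cc_i$. Next, use the assumption $\supp=\supp_1\times\dots\times\supp_\ell$ to write, up to the fixed permutation of the coordinates of $\Z^u$,
\begin{equation*}
\supp(c)=(\supp_1(c_1),\dots,\supp_\ell(c_\ell))\quad\text{and}\quad\supp(R^n)=(\supp_1(R_1^n),\dots,\supp_\ell(R_\ell^n)),
\end{equation*}
so in particular $\overline{\supp(c)}=(\overline{\supp_1(c_1)},\dots,\overline{\supp_\ell(c_\ell)})$. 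Grouping the variables accordingly as $\mathbf{x}=(\mathbf{x}_1,\dots,\mathbf{x}_\ell)$ and $\mathbf{y}=(\mathbf{y}_1,\dots,\mathbf{y}_\ell)$, the monomial factorizes as
\begin{equation*}
\mathbf{x}^{\supp(c)}\mathbf{y}^{\overline{\supp(c)}}=\prod_{i=1}^{\ell}\mathbf{x}_i^{\supp_i(c_i)}\mathbf{y}_i^{\overline{\supp_i(c_i)}}.
\end{equation*}

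Finally, I would conclude by interchanging the sum over the product of the $\Cc_i$ with the product over~$i$: since each factor depends only on the variables $\mathbf{x}_i,\mathbf{y}_i$ and on the component $c_i$, the distributive law yields
\begin{equation*}
W_{\Cc}(\mathbf{x},\mathbf{y})=\sum_{(c_1,\dots,c_\ell)\in\Cc_1\times\dots\times\Cc_\ell}\prod_{i=1}^{\ell}\mathbf{x}_i^{\supp_i(c_i)}\mathbf{y}_i^{\overline{\supp_i(c_i)}}=\prod_{i=1}^{\ell}\sum_{c_i\in\Cc_i}\mathbf{x}_i^{\supp_i(c_i)}\mathbf{y}_i^{\overline{\supp_i(c_i)}}=\prod_{i=1}^{\ell}W_{\Cc_i}(\mathbf{x}_i,\mathbf{y}_i).
\end{equation*}

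There is no real obstacle here; the only point one must be careful about is that the factorization of $\supp(R^n)$ into the $\supp_i(R_i^n)$ (needed for the exponent on the $\mathbf{y}$ variables) really does follow from $\supp=\supp_1\times\dots\times\supp_\ell$, which is Proposition~\ref{proposition:supportsplits} together with the definition of the induced support on subsets. Once this compatibility is noted, the rest is purely a bookkeeping exercise in the distributive law.
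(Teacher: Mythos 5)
Your argument is correct and is exactly the ``direct computation'' the paper leaves to the reader: unpack the definition of the refined weight enumerator, use the product decompositions of $\Cc$ and $\supp$ (and hence of $\overline{\supp(\cdot)}$) to factor each monomial, and apply the distributive law to turn the sum over $\Cc_1\times\dots\times\Cc_\ell$ into a product of sums. Nothing is missing; your remark about checking that $\supp(R^n)$ factors componentwise is the one place where care is needed, and you handle it correctly.
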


In addition to the weight enumerator, we are also interested in the generalized weight enumerator.

\begin{definition}\label{definition:genweienum}
Let $\Cc$ be an $R$-linear code. For $0\leq r\leq \lambda(\Cc)$, the {\bf $r$-th generalized weight enumerator}\index{weight enumerator!generalized} is 
\begin{equation*}
    W^{(r)}_{\Cc}(x,y)=\sum_{r=0}^{M(\Cc)}A^{(r)}_{w}x^{\wt(R^n)-w}y^{w},
\end{equation*}
where $A^{(r)}_{w}=\lvert\{\Dd\subseteq\Cc:\lambda(\Dd)=r\text{ and }\wt(\Dd)=w\}\lvert$.
\end{definition}

Similarly to how the weight enumerator captures the weight distribution, the generalized weight enumerator captures the generalized weights. Indeed, we have
\begin{equation*}
    \bar d_r(\Cc)=\min\{w: A^{(j)}_w\neq 0 \text{ for some } j\geq r\}
\end{equation*}
for $1\leq r\leq \lambda(\Cc)$.

The {\bf Tutte polynomial}\index{Tutte polynomial} was introduced in~\cite{tutte1947ring,tutte1954contribution} for graphs and the definition was generalized to matroids in~\cite{crapo1969tutte}. For a matroid $(E,\rho)$ it is defined as
\begin{equation*}
    T(\rho,x,y)=\sum_{A\subseteq E}(x-1)^{\rho(E)-\rho(A)}(y-1)^{\lvert A\rvert-\rho(A)}.
\end{equation*}
The {\bf Tutte-Whitney rank generating function} is obtained from the Tutte polynomial via a change of variables
\begin{equation*}
    R(\rho,x,y)=T(\rho,x+1,y+1)=\sum_{A\subseteq E}x^{\rho(E)-\rho(A)}y^{\lvert A\rvert-\rho(A)}.
\end{equation*}
In~\cite{vertigan2004latroids}, Vertigan extends the definition of Tutte-Whitney rank generating function\index{Tutte-Whitney rank generating function} to latroids as follows.

\begin{definition}
    The {\bf weighted Tutte-Whitney rank generating function} of a $\Z^u$-latroid $(\rho,\lVert\cdot\rVert,\mathcal{L})$ with $\mathcal{L}\subseteq \Z^u$ is
    \begin{equation*}
    R(\rho,\lVert\cdot\rVert,\mathcal{L},\mathbf{x},\mathbf{y},\mathbf{u},\mathbf{v})
    =\sum_{M\in\mathcal{L}}\mathbf{x}^M\mathbf{y}^{M^\perp}\mathbf{u}^{\rho(1_\mathcal{L})-\rho(M)}\mathbf{v}^{\lVert M\rVert-\rho(M)}.
    \end{equation*}
\end{definition}

Since $\mathcal{L}$ is a sublattice of $\Z^u$, we have $M^\perp=\overline{M}=1_{\mathcal{L}}-M$. Moreover, observe that the weighted Tutte-Whitney rank generating function fully determines the function 
\begin{equation}\label{eqn:Rprime}
R'(\rho,\lVert\cdot\rVert,\mathcal{L},\mathbf{x},\mathbf{z},\mathbf{y},\mathbf{u},\mathbf{v})
=\sum_{M\in\mathcal{L}}\mathbf{x}^M\mathbf{z}^{\tilde M-M}\mathbf{y}^{M^\perp}\mathbf{u}^{\rho(1_\mathcal{L})-\rho(M)}\mathbf{v}^{\lVert M\rVert-\rho(M)},
\end{equation}
where $\tilde M=(M+(1,\dots,1))\wedge 1_{\mathcal{L}}$. Notice that when $\mathcal{L}= \{0,1\}^n$, then $\tilde M=(1,\dots,1)$. Conversely, (\ref{eqn:Rprime}) determines the weighted Tutte-Whitney rank generating function via $$R(\rho,\lVert\cdot\rVert,\mathcal{L},\mathbf{x},\mathbf{y},\mathbf{u},\mathbf{v})=R'(\rho,\lVert\cdot\rVert,\mathcal{L},\mathbf{x},\mathbf{1},\mathbf{y},\mathbf{u},\mathbf{v}).$$

Given an $R$-linear code $\Cc\subseteq R^n$, we associate a $\Z$-latroid to it as follows.
  
\begin{lemma}\label{lemma:latroidz}
Let $R$ be a finite chain ring, and let $\Cc$ be an $R$-linear code. Let $$\mathcal{L}_R=\{\supp(M):M\in\mathcal{R}^n\}$$ and $$\rho_{\Cc}(\supp(M))=\lvert \supp(M)\rvert-\lambda(M\cap\Cc),$$ where $\supp$ denotes the chain support. Then $\rho_{\Cc}$ is well defined, and the triple $(\rho_{\Cc},\lvert\cdot\rvert,\mathcal{L}_R)$ is a $\Z$-latroid.
\end{lemma}
    
\begin{proof}    
By Lemma~\ref{lemma:localpirmodular}, $\mathcal{L}_R$ is a sublattice of $\Z^n$. Let $\alpha$ be a generator of the maximal ideal of $R$ and let $k=\min\{i\in\Z_{\geq 1}:\alpha^i=0\}$. Then $\supp(M)=(i_1,\dots,i_n)$ if and only if $M=(\alpha^{k-i_1})\times\dots\times(\alpha^{k-i_n})$. 
Therefore, $\mathcal{L}_R$ is in one-to-one correspondence with $\mathcal{R}^n$, so $\rho_{\Cc}$ is well defined. Moreover, $\rho_{\Cc}(M)=\lambda(M)-\lambda(M\cap\Cc)$. The thesis now follows from~\cite[Lemma 5.9]{vertigan2004latroids}.
\end{proof}

\begin{definition}
Let $R$ be a finite chain ring, and let $\Cc\subseteq R^n$ be an $R$-linear code. Let $$\mathcal{L}_R=\{\supp(M):M\in\mathcal{R}^n\}$$ and $$\rho_{\Cc}(\supp(M))=\lvert \supp(M)\rvert-\lambda(M\cap\Cc).$$ The triple $(\rho_{\Cc},\lvert\cdot\rvert,\mathcal{L}_R)$ is the {\bf chain support latroid} associated to $\Cc$. 
\end{definition}

\begin{remark}\label{rmk:samelatroid}
For $M\in\mathcal{R}^n$, let $\rho^{\lambda}_{\Cc}(M)=\lambda(M)-\lambda(M\cap\Cc)$.
By Lemma~\ref{lemma:latroidz}, the $\Z$-latroid $(\rho^{\lambda}_{\Cc},\lambda,\mathcal{R}^n)$ defined by Vertigan in~\cite[Example~5.8]{vertigan2004latroids} and the chain support latroid $(\rho_{\Cc},\lvert\cdot\rvert,\mathcal{L}_R)$ may be identified via the bijection $\iota:\mathcal{R}^n\rightarrow\mathcal{L}_R$ given by $\iota(M)=\supp(M)$. In fact, $\iota$ satisfies $\rho^{\lambda}_{\Cc}=\rho_{\Cc}\circ\iota$ and $\lambda=\lvert\cdot\rvert\circ\iota$.
\end{remark}

The generalized weights of an $R$-linear code $\Cc$ according to Definition~\ref{definition:genweiring2} coincide with the generalized weights of its chain support latroid.
    
\begin{proposition}\label{prop:gwtsfromlatroid}
Let $R$ be a finite chain ring, and let $\Cc$ be an $R$-linear code. Then
    \begin{equation*}
        \bar d_r(\Cc)=d_r(\rho_{\Cc},\lvert\cdot\rvert,\mathcal{L}_R),
    \end{equation*}
for $1\leq r\leq \lambda(\Cc)$.
\end{proposition}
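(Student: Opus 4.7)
The plan is to prove both inequalities directly from the definitions by producing explicit witnesses, using rectangular hulls in one direction and intersections with $\Cc$ in the other. Unfolding the definition of the latroid generalized weight together with the formula $\rho_{\Cc}(\supp(M))=\lvert\supp(M)\rvert-\lambda(M\cap\Cc)$, the quantity $d_r(\rho_{\Cc},\lvert\cdot\rvert,\mathcal{L}_R)$ equals $\min\{\lvert\supp(M)\rvert:M\in\mathcal{R}^n,\ \lambda(M\cap\Cc)\geq r\}$, i.e.\ the minimum weight of a rectangular submodule $M\subseteq R^n$ whose intersection with $\Cc$ has length at least $r$. On the other side, $\bar d_r(\Cc)$ is the minimum weight of any submodule $\Dd\subseteq\Cc$ with $\lambda(\Dd)\geq r$. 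So the comparison reduces to showing that replacing a submodule by its rectangular hull and replacing a rectangular module by its intersection with $\Cc$ are both admissible, weight-friendly moves.

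For the inequality $d_r(\rho_{\Cc},\lvert\cdot\rvert,\mathcal{L}_R)\leq\bar d_r(\Cc)$, I would start from a submodule $\Dd\subseteq\Cc$ realizing $\bar d_r(\Cc)$ and take its rectangular hull $M_\Dd:=\pi_1(\Dd)\times\cdots\times\pi_n(\Dd)\in\mathcal{R}^n$, where $\pi_i$ is the projection to the $i$-th coordinate. Two facts need to be checked. First, $\wt(M_\Dd)=\wt(\Dd)$: coordinatewise one has $\supp(\Dd)_i=\max_{d\in\Dd}\overline{\supp}(d_i)=\overline{\supp}(\pi_i(\Dd))=\supp(M_\Dd)_i$, where the middle equality uses that the ideals of a finite chain ring are totally ordered, so the maximum is attained by any generator of the ideal $\pi_i(\Dd)$. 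Second, $\Dd\subseteq M_\Dd\cap\Cc$, and hence $\lambda(M_\Dd\cap\Cc)\geq\lambda(\Dd)\geq r$. Therefore $M_\Dd$ is an admissible competitor for $d_r$, and we obtain $d_r(\rho_{\Cc},\lvert\cdot\rvert,\mathcal{L}_R)\leq\wt(M_\Dd)=\wt(\Dd)=\bar d_r(\Cc)$.

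For the reverse inequality $\bar d_r(\Cc)\leq d_r(\rho_{\Cc},\lvert\cdot\rvert,\mathcal{L}_R)$, I would take $M\in\mathcal{R}^n$ realizing the minimum in $d_r$ and set $\Dd:=M\cap\Cc$. By the choice of $M$, $\lambda(\Dd)\geq r$, and $\Dd\subseteq M$ together with the monotonicity property of the support yields $\wt(\Dd)\leq\wt(M)$. Thus $\Dd$ is an admissible competitor for $\bar d_r(\Cc)$, giving $\bar d_r(\Cc)\leq\wt(\Dd)\leq\wt(M)=d_r(\rho_{\Cc},\lvert\cdot\rvert,\mathcal{L}_R)$.

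The only nontrivial step is the identity $\wt(M_\Dd)=\wt(\Dd)$, which is what makes the passage from an arbitrary submodule to its rectangular hull weight-preserving. This is precisely where the hypotheses that $R$ is a finite chain ring and that $\supp$ is the standard chain support enter: for a non-standard support, or over a more general ring (compare Example~\ref{example:modularitypir}), the rectangular hull may have strictly larger weight and the equality of the two generalized weights would fail. Everything else in the argument is routine bookkeeping from the definitions of length, submodule inclusion, and the latroid rank function $\rho_{\Cc}$.
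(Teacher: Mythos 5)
Your proof is correct and follows essentially the same strategy as the paper's: the easy inequality $\bar d_r(\Cc)\leq d_r(\rho_{\Cc},\lvert\cdot\rvert,\mathcal{L}_R)$ comes from intersecting an optimal $M\in\mathcal{R}^n$ with $\Cc$, and the reverse inequality comes from passing a minimizing submodule $\Dd\subseteq\Cc$ to its rectangular hull $\bar\Dd$, using that the chain support is standard so $\supp(\bar\Dd)=\supp(\Dd)$ and that $\Dd\subseteq\bar\Dd\cap\Cc$ gives $\lambda(\bar\Dd\cap\Cc)\geq r$. You have in fact spelled out the easy direction and the containment $\Dd\subseteq M_\Dd\cap\Cc$ more explicitly than the paper, which is a welcome clarification. (One small side point: the phenomenon where the rectangular hull increases weight really hinges on the support not being standard, rather than on the ring failing to be a chain ring; your citation of Example~\ref{example:modularitypir}, where the Hamming support on $\Z_6$ is standard but merely non-modular, does not illustrate that particular failure.)
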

    
\begin{proof}
By definition and using Remark~\ref{rmk:samelatroid} we have 
$$d_r(\rho_{\Cc},\lvert\cdot\rvert,\mathcal{L}_R)=\min_{\Dd\in\mathcal{R}^n}\{\lvert\supp(\Dd)\rvert: \lambda(\Dd\cap\Cc)\geq r\}\leq\min_{\Dd\subseteq\Cc}\{\lvert\supp(\Dd)\rvert: \lambda(\Dd)\geq r\}=\bar d_r(\Cc).$$
To prove the reverse inequality, let $\Dd$ be a submodule of $\Cc$ that realizes $\bar d_r(\Cc)$, i.e., $\Dd\subseteq\Cc$ has $\lambda(\Dd)\geq r$ and $\bar d_r(\Cc)=\lvert\supp(\Dd)\rvert$. Let $\bar \Dd\in\mathcal{R}^n$ be the smallest element that contains $\Dd$. Then $\lvert\supp(\bar \Dd)\rvert=\lvert\supp(\Dd)\rvert$ and $\lambda(\bar \Dd)\geq \lambda(\Dd)$. Therefore,
\begin{equation*}
    \bar d_r(\Cc)=\lvert\supp(D)\rvert= \lvert\supp(\bar D)\rvert\geq d_r(\rho_{\Cc},\lvert\cdot\rvert,\mathcal{L}_R).\qedhere
\end{equation*}
\end{proof}
    
In the next lemma we recall a fact from commutative algebra, that we will use in the proof of Theorem~\ref{theorem:TutteWhitney}.
    
\begin{lemma}\label{lemma:cardmodulefcr}
Let $R$ be a finite chain ring, and let $M$ be a finitely generated $R$ module. Then $$\lvert M\rvert=\lvert R/(\alpha)\rvert^{\lambda(M)}.$$
\end{lemma}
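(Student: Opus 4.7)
The plan is to induct on the length $\lambda(M)$, using a composition series for $M$ and the fact that, over a local ring, every simple module is isomorphic to the residue field.

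First, I would record the base case $\lambda(M)=0$, which means $M=0$ and $\lvert M\rvert=1=\lvert R/(\alpha)\rvert^0$. Then, for $\lambda(M)\geq 1$, I would fix a composition series
\begin{equation*}
0=M_0\subsetneq M_1\subsetneq \dots\subsetneq M_{\lambda(M)}=M,
\end{equation*}
whose existence is guaranteed by the finiteness of $M$ and the definition of length. By the Jordan--Hölder property of length, each successive quotient $M_i/M_{i-1}$ is a simple $R$-module.

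The key intermediate step is to identify the simple $R$-modules. Since $R$ is a finite chain ring, it is local with unique maximal ideal $(\alpha)$. Any simple $R$-module $S$ is cyclic, hence isomorphic to $R/I$ for some maximal ideal $I\subseteq R$; since $(\alpha)$ is the only maximal ideal, $S\cong R/(\alpha)$. Applying this to each composition factor gives $\lvert M_i/M_{i-1}\rvert=\lvert R/(\alpha)\rvert$, and the multiplicativity of cardinality in the short exact sequence $0\to M_{i-1}\to M_i\to M_i/M_{i-1}\to 0$ yields $\lvert M_i\rvert=\lvert M_{i-1}\rvert\cdot\lvert R/(\alpha)\rvert$. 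Iterating (or invoking the inductive hypothesis on $M_{\lambda(M)-1}$, which has length $\lambda(M)-1$), we conclude $\lvert M\rvert=\lvert R/(\alpha)\rvert^{\lambda(M)}$.

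I do not anticipate any real obstacle: the statement is a standard consequence of the existence of composition series and the classification of simple modules over a local ring. The only point to be careful about is making sure the ambient ring of each composition factor is $R$ (so that $\lvert R/(\alpha)\rvert$ really is the common value), which is immediate since all $M_i$ are $R$-submodules of $M$.
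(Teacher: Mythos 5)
Your proof is correct and follows essentially the same route as the paper: take a composition series of length $\lambda(M)$, observe that each successive quotient is a simple $R$-module and hence isomorphic to $R/(\alpha)$ since $R$ is local, and then conclude by induction (or iterated multiplicativity of cardinality in short exact sequences).
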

    
\begin{proof}
By the definition of length of a module, there exists a chain of modules
$$M=M_0\supsetneq M_1\supsetneq \dots \supsetneq M_{\lambda(M)},$$
that is a composition series, i.e., $M_i/M_{i+1}$ is a nonzero simple $R$-module for $0\leq i<\lambda(M)$, see \cite[Theorem 2.13]{eisenbud2013commutative}. A simple $R$-module is isomorphic to $R/J$, where $J$ is a maximal ideal of $R$. Since $R$ is a local ring, we conclude that $M_i/M_{i+1}\cong R/(\alpha)$ for $0\leq i<\lambda(M)$. We conclude by induction on the length of the composition series.
\end{proof}
    
We can now show that the support enumerator of a code $\Cc$ is determined by the weighted Tutte-Whitney rank generating function of the associated chain support latroid. The proof of the following theorem extends the proof of \cite[Theorem 9.4]{vertigan2004latroids}.
    
\begin{theorem}\label{theorem:TutteWhitney}
Let $R$ be a finite chain ring, and let $\Cc\subseteq R^n$ be an $R$-linear code. The Tutte-Whitney rank generating function of  $(\rho_{\Cc},\lvert\cdot\rvert,\mathcal{L}_R)$ determines the support enumerator of $\Cc$ according to the formula
\begin{equation*}
W_{\Cc}(\mathbf{x},\mathbf{y})=R'\left(\rho_{\Cc},\lVert\cdot\rVert,\mathcal{L}_R,\mathbf{x},\frac{\mathbf{y}-\mathbf{x}}{\mathbf{y}},\mathbf{y},\lvert R/(\alpha)\rvert,1\right).
\end{equation*}
\end{theorem}
    
\begin{proof}
For each $A\in\Z^n$, let $C_A=\{c\in C: \supp(c)\leq A\}$, and let $A_i=A-e_i$ for all $i\in[n]$.
\begin{equation}\label{equation:lat1}
\begin{split}
n_{C}(A):=&\lvert \{ c\in C:\supp(c)=A\}\rvert=\lvert C_A\rvert -\lvert \bigcup_{i=1}^n C_{A_i}\rvert=\\
=&\lvert C_A\rvert-\sum _{k=1}^{n}(-1)^{k+1}\left(\sum_{1\leqslant i_{1}<\cdots <i_{k}\leqslant n}|C_{A_{i_{1}}}\cap \cdots \cap C_{A_{i_{k}}}|\right)=\\
=&\sum_{A-(1,\dots,1)\leq B\leq A}(-1)^{\lvert A\rvert -\lvert B\rvert} \lvert C_B\rvert.
\end{split}
\end{equation}
One can check by direct computation that
    \begin{equation}\label{equation:lat2}
        \mathbf{y}^B(\mathbf{y}-\mathbf{x})^{(1,\dots,1)-B}=\sum_{B\subseteq A\subseteq (1,\dots,1)}(-1)^{\lvert A \rvert-\lvert B\rvert} \mathbf{x}^{A}\mathbf{y}^{ (1,\dots,1)-A},
    \end{equation}
for all $(0,\dots,0)\leq B\leq (1,\dots,1)$. Let $\tilde B=(B+(1,\dots,1))\wedge \supp(R^n)$. We have
    \begin{equation*}
        \begin{split}
            \sum_{c\in C}\mathbf{x}^{\supp(c)}\mathbf{y}^{\overline{\supp(c)}}=&\sum_{A\in \mathcal{L}} n_C(A)\mathbf{x}^A\mathbf{y}^{\overline{A}}=\sum_{A\in\mathcal{L}}\left(\sum_{A-(1,\dots,1)\leq B\leq A}(-1)^{\lvert A\rvert -\lvert B\rvert}  \lvert C_B\rvert\right)\mathbf{x}^A\mathbf{y}^{\overline{A}}\\
            =&\sum_{B\in \mathcal{L}}\left(\lvert C_B\rvert\sum_{B\leq A\leq \tilde B}(-1)^{\lvert A\rvert -\lvert B\rvert}  \mathbf{x}^A\mathbf{y}^{\overline{A}}\right)=\\
            =&\sum_{B\in \mathcal{L}}\left(\lvert C_B\rvert \mathbf{x}^{\tilde B-(1,\dots,1)}\mathbf{y}^{\supp(R^n)-\tilde B}\sum_{B\leq A\leq \tilde B}(-1)^{\lvert A\rvert -\lvert B\rvert}  \mathbf{x}^{A-\tilde B+(1,\dots,1)}z^{\tilde B -A}\right)=\\
            =&\sum_{B\in \mathcal{L}}\lvert C_B\rvert \mathbf{x}^{\tilde B-(1,\dots,1)}\mathbf{y}^{\supp(R^n)-\tilde B}\mathbf{x}^{B-\tilde B +(1,\dots,1)}(\mathbf{y}-\mathbf{x})^{\tilde B-B}=\\
            =&\sum_{B\in \mathcal{L}}\lvert C_B\rvert \mathbf{x}^{B}\mathbf{y}^{\supp(R^n)-\tilde B}(\mathbf{y}-\mathbf{x})^{\tilde B-B}=\sum_{B\in \mathcal{L}}\lvert C_B\rvert \mathbf{x}^{B}\mathbf{y}^{\overline{B}}\left(\frac{\mathbf{y}-\mathbf{x}}{\mathbf{y}}\right)^{\tilde B-B},
        \end{split}
    \end{equation*}
where in the second equality we used Equation \eqref{equation:lat1} and in the second to last we used Equation \eqref{equation:lat2}.
Since $R$ is a finite chain ring and $\lvert B\rvert-\rho_{\Cc}(B)$ is the length of $\Cc_B$, by Lemma~\ref{lemma:cardmodulefcr} we have $\lvert \Cc_B\rvert=\lvert R/(\alpha)\rvert^{\lvert B\rvert-\rho_{\Cc}(B)}$. Combining these results, we obtain
    \begin{equation*}
        W_{\Cc}(\mathbf{x},\mathbf{y})=\sum_{B\in \mathcal{L}}\lvert R/(\alpha)\rvert^{\lvert B\rvert-\rho_{\Cc}(B)}\mathbf{x}^{B}\mathbf{y}^{\overline{B}}\left(\frac{\mathbf{y}-\mathbf{x}}{\mathbf{y}}\right)^{\tilde B-B}=R'\left(\rho_{\Cc},\lVert\cdot\rVert,\mathcal{L}_R,\mathbf{x}
            ,\frac{\mathbf{y}-\mathbf{x}}{\mathbf{y}},\mathbf{y},\lvert R/(\alpha)\rvert,1\right).\qedhere
    \end{equation*}
\end{proof}

\begin{remark}
In Theorem~\ref{theorem:TutteWhitney} we prove that the support enumerator, hence the homogenoeus weight enumerator, can be obtained from $R'(\rho,\lVert\cdot\rVert,\mathcal{L},\mathbf{u},\mathbf{z},\mathbf{v},\mathbf{x},\mathbf{y})$. As we stated above, this function is determined by the weighted Tutte-Whitney rank generating function. Therefore, one can also write the weight enumerator in terms of the Tutte-Whitney rank generating function, but the formula is not as concise.
\end{remark}

Proposition~\ref{prop:gwtsfromlatroid} and Theorem~\ref{theorem:TutteWhitney} can be extended to codes over a finite principal ideal ring $R$. Recall that every submodule $M$ of $R^n$ decomposes as direct product $M_1\times\dots\times M_{\ell}$ where $M_i$ is a submodule of $R_i^n$ for each $i\in[\ell]$, moreover $M\in\mathcal{R}^n$ if and only if $M_i\in\mathcal{R}_i^n$ for each $i\in[\ell]$. 

\begin{lemma}
Let $R$ be a finite principal ideal ring and let $\Cc$ be an $R$-linear code. Let $\supp_i$ be the chain support on $R_i$ for $i\in[\ell]$ and let $\supp=\supp_1\times\dots\times\supp_{\ell}$. Let $M=M_1\times\dots\times M_{\ell}\in\mathcal{R}^n=\mathcal{R}_1^n\times\dots\times\mathcal{R}_{\ell}^n$ and define $$\rho_{\Cc}(\supp(M))=(\rho_{\Cc_1}(\supp(M_1)),\dots,\rho_{\Cc_\ell}(\supp(M_\ell))\in\Z^\ell.$$ Then $\supp$ is a modular support on $R^n$ and the triple $(\rho_{\Cc},\lvert\cdot\rvert,\mathcal{L}_R)$ is a $\Z^{\ell}$-latroid.
\end{lemma}

\begin{proof}
The fact that $\supp$ is a modular support on $R^n$ follows from~\cite[Proposition~3.17]{gorla2022generalized}.
Moreover, $\rho_{\Cc}$ is bounded increasing and submodular if and only if $\rho_{\Cc_i}$ is bounded increasing and submodular for all $i\in[\ell]$. We conclude by Lemma~\ref{lemma:latroidz}.
\end{proof}

\begin{definition}
Let $R$ be a finite principal ideal ring and let $\Cc\subseteq R^n$ be an $R$-linear code. 
Let $\supp_i$ be the chain support on $R_i$ for $i\in[\ell]$ and let $\supp=\supp_1\times\dots\times\supp_{\ell}$.
Let $$\mathcal{L}_R=\{\supp(M):M\in\mathcal{R}^n\}=\mathcal{L}_{R_1}\times\dots\times\mathcal{L}_{R_\ell}$$ and 
$$\rho_{\Cc}(\supp(M))=(\rho_{\Cc_1}(\supp(M_1)),\dots,\rho_{\Cc_\ell}(\supp(M_\ell))\in\Z^\ell.$$
The triple $(\rho_{\Cc},\lvert\cdot\rvert,\mathcal{L}_R)$ is the {\bf chain support latroid} associated to $\Cc$.
\end{definition}

Over a finite principal ideal ring, the Tutte-Whitney rank generating function of the chain support latroid of a code determines its weight enumerator and support enumerator. 

\begin{theorem}\label{thm:TutteWhitneyPIR}
    Let $R$ be a finite principal ideal ring and let $\Cc\subseteq R^n$ be an $R$-linear code. The Tutte-Whitney rank generating function of  $(\rho_{\Cc},\lvert\cdot\rvert,\mathcal{L}_R)$ determines the support enumerator of $\Cc$. In particular,
    \begin{equation*}
        W_{\Cc}(\mathbf{x},\mathbf{y})=R'\left(\rho_{\Cc},\lVert\cdot\rVert,\mathcal{L}_{R},\mathbf{x}
            ,\frac{\mathbf{y}-\mathbf{x}}{\mathbf{y}},\mathbf{y},\lvert R/(\alpha_1)\rvert,\dots,\lvert R/(\alpha_{\ell})\rvert,\mathbf{1}\right)
    \end{equation*}
\end{theorem}

\begin{proof}
    By Lemma~\ref{lemma:weightenumeratorprod} and Theorem~\ref{theorem:TutteWhitney} we have
    \begin{equation*}
        W_{\Cc}(\mathbf{x}_1,\dots,\mathbf{x}_{\ell},\mathbf{y}_1,\dots,\mathbf{y}_{\ell})=\prod_{i=1}^{\ell}W_{\Cc_i}(\mathbf{x}_i,\mathbf{y}_{i})=\prod_{i=1}^{\ell}R'\left(\rho_{\Cc_i},\lVert\cdot\rVert,\mathcal{L}_{R_i},\mathbf{x}_i
            ,\frac{\mathbf{y}_i-\mathbf{x}_i}{\mathbf{y}_i},\mathbf{y}_i,\lvert R/(\alpha_i)\rvert,1\right).
    \end{equation*}
    Since 
    \begin{equation*}
        \begin{split}
            R'(\rho_{\Cc},\lVert\cdot\rVert,\mathcal{L}_R,&\mathbf{x}_1,\dots,\mathbf{x}_{\ell},\mathbf{z}_1,\dots,\mathbf{z}_{\ell},\mathbf{y}_1,\dots,\mathbf{y}_{\ell},u_1,\dots,u_\ell,v_1\dots,v_\ell)=\\
            &=\prod_{i=1}^{\ell}R(\rho_{\Cc_i},\lVert\cdot\rVert,\mathcal{L}_{R_i},\mathbf{x}_i,\mathbf{z}_i,\mathbf{y}_i,u_i,v_i).
        \end{split}
    \end{equation*}
    we conclude.
\end{proof}

\subsection{Chain support latroid and generalized weights}

In this short section we show that, over a finite principal ideal ring, the generalized weights of the chain support latroid of a code coincide with the generalized weights of the code. We start with two observations on generalized weights.

\begin{lemma}\label{lemma:decomposition}
Let $R$ be a finite principal ideal ring and write $R=R_1\times\dots\times R_{\ell}$ with $R_i$ finite chain ring for all $i\in[\ell]$. Let $\Cc$ be an $R$-linear code and write $\Cc=\Cc_1\times\dots\times\Cc_{\ell}$ with $\Cc_i$ an $R_i$-linear code. Then $$\bar d_r(\Cc)=\min\left\{\sum_{i=1}^\ell \bar d_{r_i}(\Cc_i): r_1+\dots+r_{\ell}\geq r,\; 0\leq r_i\leq\lambda(\Cc_i) \text{ for all }i  \right\}.$$
\end{lemma}

\begin{proof}
Notice that every subcode $\Dd\subseteq\Cc$ is of the form $\Dd=\Dd_1\times\dots\times\Dd_{\ell}$ with $\Dd_i\subseteq\Cc_i$ a subcode for $i\in[\ell]$. Moreover, $\lvert\supp(\Dd)\rvert=\sum_{i=1}^{\ell}\lvert\supp(\Dd_i)\rvert$ and $\lambda(\Dd)=\sum_{i=1}^{\ell}\lambda(\Dd_i)$. The thesis now follows from the definition of generalized weights.
\end{proof}

\begin{lemma}\label{lemma:decompositionlatroids}
Let $R$ be a finite principal ideal ring and write $R=R_1\times\dots\times R_{\ell}$ with $R_i$ finite chain ring for all $i\in[\ell]$. Let $\Cc$ be an $R$-linear code and write $\Cc=\Cc_1\times\dots\times\Cc_{\ell}$ with $\Cc_i$ an $R_i$-linear code. Let $(\rho_{\Cc},\lvert\cdot\rvert,\mathcal{L}_R)$ be the chain support latroid associated to $\Cc$ and let  $(\rho_{\Cc_i},\lvert\cdot\rvert_i,\mathcal{L}_{R_i})$ be the chain support latroid associated to $\Cc_i$ for $i\in[\ell]$.
Then $$d_r(\rho_{\Cc},\lvert\cdot\rvert,\mathcal{L}_R)=d_r((\rho_{\Cc_1},\lvert\cdot\rvert_1,\mathcal{L}_{R_1})\oplus\dots\oplus(\rho_{\Cc_\ell},\lvert\cdot\rvert_{\ell},\mathcal{L}_{R_\ell})).$$   
\end{lemma}

\begin{proof}
For brevity, we denote by $(\rho,\lVert\cdot\rVert,\Ll)$ the direct sum $(\rho_{\Cc_1},\lvert\cdot\rvert_1,\mathcal{L}_{R_1})\oplus\dots\oplus(\rho_{\Cc_\ell},\lvert\cdot\rvert_{\ell},\mathcal{L}_{R_\ell})$. 
The latroids
$(\rho_{\Cc},\lvert\cdot\rvert,\mathcal{L}_R)$ and $(\rho,\lVert\cdot\rVert,\Ll)$ are over the same lattice $\mathcal{L}_R=\Ll=\mathcal{L}_{R_1}\times\dots\times\mathcal{L}_{R_\ell}$ and their rank and length functions are related via
$$\rho=\lvert\rho_{\Cc}\rvert \text{ and } \lVert\cdot\rVert=\sum_{i=1}^\ell\lvert\cdot\rvert_i.$$ The thesis follows from the definition of generalized weights of a latroid.
\end{proof}

\begin{lemma}\label{lemma:gwtsdirectsum}
Let $(\rho_1,\lVert\cdot\rVert_1,\Ll_1)$ and $(\rho_2,\lVert\cdot\rVert_2,\Ll_2)$ be $A$-latroids. Then 
$$d_a((\rho_1,\lVert\cdot\rVert_1,\Ll_1)\oplus (\rho_2,\lVert\cdot\rVert_2,\Ll_2))=\min\{d_{a_1}(\rho_1,\lVert\cdot\rVert_1,\Ll_1)+d_{a_2}(\rho_2,\lVert\cdot\rVert_2,\Ll_2): a_1,a_2\in A, a_1+a_2\geq a\}$$
for all $a\in A$. 
\end{lemma}

\begin{proof}
Denote by $(\rho,\lVert\cdot\rVert,\Ll)$ the direct sum $(\rho_1,\lVert\cdot\rVert_1,\Ll_1)\oplus (\rho_2,\lVert\cdot\rVert_2,\Ll_2)$.
Every $L=(L_1,L_2)\in\Ll=\Ll_1\times\Ll_2$ has $\lVert L\rVert=\lVert L_1\rVert_1+\lVert L_2\rVert_2$ and $\lVert L\rVert-\rho(L)=\lVert L_1\rVert_1-\rho_1(L_1)+\lVert L_2\rVert_2-\rho_2(L_2)$. If $L\in\Ll$ is such that $d_a(\rho,\lVert\cdot,\rVert,\Ll)=\lVert L\rVert$ and $\rho(L)-\lVert L\rVert\geq a$, then 
\end{proof}

The next result follows by combining Proposition~\ref{prop:gwtsfromlatroid} with  Lemma~\ref{lemma:decomposition}, Lemma~\ref{lemma:decompositionlatroids}, and Lemma~\ref{lemma:gwtsdirectsum}.

\begin{proposition}
Let $R$ be a finite principal ideal ring and let $\Cc$ be an $R$-linear code. Then
\begin{equation*}
\bar d_r(\Cc)=d_r(\rho_{\Cc},\lvert\cdot\rvert,\mathcal{L}_R)=d_r((\rho_{\Cc_1},\lvert\cdot\rvert,\mathcal{L}_{R_1})\oplus\dots\oplus(\rho_{\Cc_\ell},\lvert\cdot\rvert,\mathcal{L}_{R_\ell})),
\end{equation*}
for $1\leq r\leq \lambda(\Cc)$.
\end{proposition}

\subsection{Monomial ideals}

Given a modular support $\supp:R^n\rightarrow\Z^u$ we can associate to each nonzero $R$-linear code $\Cc$ a monomial ideal $I_{\Cc}$ defined as
\begin{equation*}
    I_{\Cc}=(\{\mathbf{x}^{\supp(c)}:c\in\Cc\setminus 0\})=(\{\mathbf{x}^{\supp(c)}:c\in\mathrm{Min}(\Cc)\})\subseteq S=\mathbb{K}[x_1,\ldots,x_u],
\end{equation*}
for an arbitrary field $\mathbb{K}$. We stress that all the results discussed in this section do not depend on the choice of $\mathbb{K}$. 

The ideal $I_\Cc$ can be recovered from the latroid $(\rho_\Cc^{\supp},\supp,\mathcal{R}^n)$ of Definition~\ref{def:latroid_supp}. Indeed, we have
\begin{equation*}
    \{\supp(c):c\in\mathrm{Min}(\Cc)\}=\{\supp(M):\supp(M)-\rho_{\Cc}^{\supp}(M)>0\text{ with }M\in\mathcal{R}^n\text{ and }M\text{ minimal}\}.
\end{equation*}
Notice that proceeding in this way we can always associate an ideal to a latroid. This correspondence is not bijective, since from different latroids we may obtain the same ideal. For instance, let $R$ be a principal ideal ring and let $(\rho_\Cc,\lvert\cdot\rvert,\mathcal{L}_R)$ be the chain support latroid. Then, the associated ideal is again $I_\Cc$.

From the ideal $I_{\Cc}$ one can recover some information on the code $\Cc$. For instance, in \cite[Theorem 4.4]{gorla2022generalized} the authors proved that the graded Betti numbers of the monomial ideal associated to a code determine its generalized weights according to Definition \ref{definition:genweiring}.
In \cite[Theorem 5.1]{johnsen2016generalization} the authors proved that the weight enumerator of an $\F_q$-linear block code is determined by the $\N_0$-graded Betti numbers associated with the $\N_0$-graded minimal free resolutions of the ideal of $I_{\Cc}$ and of its elongations, see \cite[Section 2]{johnsen2016generalization} for a definition. Here, we prove that the weight enumerator of an $\F_q$-linear code $\Cc\subseteq \F_q^n$ is determined by its $\N_0^n$-graded Betti numbers. An equivalent formula in terms of the nullity function of the associated matroid previously appeared as \cite[Proposition 3.1]{johnsen2016generalization}.

\begin{theorem}\label{theorem:bettifield}
Let $\Cc\subseteq\F_q^n$ be a linear code. Then,
\begin{equation*}
    A_w=\sum_{\lvert X\rvert=w,X\leq\{0,1\}^n}\sum_{Y\leq X}(-1)^{\lvert X\setminus Y\rvert}q^{\max\{i:\exists\, \beta_{i,Z}>0\text{ and }Z\leq Y\}},
\end{equation*}
for $w\in[n]$ and where $\{\beta_{i,Z}\}$ is the set of the $\N_0^n$-graded Betti numbers that appears in a minimal free resolution of $S/I_{\Cc}$. More precisely, $\beta_{i,X}$ the rank of the free module $S(-X)$ in homological position $i$. 
\end{theorem}

\begin{proof}
    By the inclusion–exclusion principle we obtain
    $$A_w=\sum_{\lvert X\rvert=w,X\leq\{0,1\}^n}\sum_{Y\leq X}(-1)^{\lvert X\setminus Y\rvert}\lvert\Cc_Y\rvert,$$
    where $\Cc_Y=\{c\in C: \supp(c)\leq Y\}$. The dimension of $\Cc_y$ is equal to the projective dimension of $S/I_{\Cc_Y}$. Fix a minimal free resolution of $S/I_{\Cc}$. The subresolution obtained by restricting to the direct summands $S(-X)$ with $X\leq Y$ is a minimal free resolution of $S/I_{\Cc_Y}$. Therefore, we have $\dim(\Cc_Y)=\max\{i:\exists\, \beta_{i,Z}>0\text{ and }Z\leq Y\}$. Since $\lvert \Cc_Y\rvert=q^{\dim(\Cc_Y)}$, we conclude.
\end{proof}
In the more general case of $R$-linear codes, the ideal $I_\Cc$ does not determine the weight enumerator, not even in the case of the chain support, as the next example shows.
\begin{example}
Consider the $\Z_4$-linear codes $\Cc_1$ and $\Cc_2$ in $\Z_4^3$ given by $$\Cc_1=\langle(2,1,0),(2,0,1)\rangle_{\Z_4}\text{ and }\Cc_2=\langle(2,1,0),(0,0,1)\rangle_{\Z_4}.$$  
With respect to the chain support, we obtain $I_{\Cc_1}=I_{\Cc_2}=(y,z)\subseteq\K[x,y,z]$. However, we have that $W_{\Cc_1}(x,1)=1+2x+x^2+4x^3+8x^4$ and $W_{\Cc_2}(x,1)=1+2x+3x^2+5x^3+5x^4$.
\end{example}
Determining whether it is possible to associate to an $R$-linear code a more complex ideal that allows to recover the weight enumerator remains an open problem. See Remark \ref{rmk:ideal_rk} for an obstacle to recovering the generalized weights of a rank-metric code from  an ideal associated to the latroid.

\section{Some families of $\F_q$-linear codes}

In this section, we discuss some interesting families of codes that can be studied effectively using our approach.

\subsection{Rank-metric codes}

We start by recalling the definition of $q$-polymatroid. Notice that when the function $\rho$ is integer-valued, the definition recovers that of $q$-matroid.  

\begin{definition}
A $q$-polymatroid is a pair $(\F_q^n,\rho)$ where $\rho:\mathcal{M}(\F_q^n)\rightarrow\mathbb{R}$ is a function such that: 
\begin{enumerate}
\item[P1.] $0\leq \rho(V)\leq\dim(V)$ for any $V\in\mathcal{M}(\F_q^n)$,
\item[P2.] $\rho(V_1)\leq\rho(V_2)$ for $V_1\leq V_2\in\mathcal{M}(\F_q^n)$,
\item[P3.] $\rho(V_1+V_2)+\rho(V_1\cap V_2)\leq \rho(V_1)+\rho(V_2)$ for $V_1,V_2\in\mathcal{M}(\F_q^n)$.
\end{enumerate}
\end{definition}

Let $(\F_q^n,\rho)$ be a $q$-polymatroid. Clearly, the dimension function is modular and strictly increasing, and $\dim(0)=\rho(0)=0$. Moreover, P2 implies that $0\leq \rho(V_2)-\rho(V_1)$ for $V_1\leq V_2$. On the other hand, since $\mathcal{M}(\F_q^n)$ is relatively complemented, there exists $V_3\leq V_2$ such that $V_1\cap V_3=0$, $V_1+V_3=V_2$. By the submodularity of $\rho$ we obtain $\rho(V_1)+\rho(V_3)\geq\rho(V_2)$, hence
$$\dim(V_2)-\dim(V_1)=\dim(V_3)\geq \rho(V_3)\geq\rho(V_2)-\rho(V_1).$$
Therefore, $\rho$ is bounded increasing with respect to the dimension. We conclude that any $q$-polymatroid can be regarded as an $\mathbb{R}$-latroid $(\rho,\dim,\mathcal{M}(\F_q^n))$. Conversely, it is clear that an $\mathbb{R}$-latroid $(\rho,\dim,\mathcal{M}(\F_q^n))$ is also a $q$-polymatroid.

Now we show how to associate a latroid to a rank-metric code $\Cc\in\F_q^{m\times n}$. We denote by $\mathrm{rowsp}(\Cc)$ the space generated by all rows of all matrices in $\Cc$ and by $\Cc(V)$ the largest subcode of $\Cc$, whose rowspace is contained in $V\in\mathcal{M}(\F_q^n)$. We define the function $\rho_{\Cc}:\mathcal{M}(\F_q^n)\rightarrow\mathbb{R}$ as
$$\rho_{\Cc}(V)=m\dim(V)-\dim(\Cc(V)).$$

\begin{proposition}\label{proposition:ranklatroid}
Let $\Cc\subseteq\F_q^{m\times n}$ be a rank-metric code. The triple $(\rho_{\Cc},m\dim,\mathcal{M}(\F_q^n))$ is a $\mathbb{Z}$-latroid.
\end{proposition}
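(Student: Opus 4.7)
The plan is to reduce to Proposition~\ref{proposition:latroidcode} via a lattice embedding of $\mathcal{M}(\F_q^n)$ into $\mathcal{M}(\F_q^{m \times n})$ that rescales dimensions by $m$. View $\F_q^{m \times n}$ as an $\F_q$-vector space of dimension $mn$, and to each $V \in \mathcal{M}(\F_q^n)$ associate the subspace $\widetilde V \subseteq \F_q^{m \times n}$ consisting of all matrices whose rowspace is contained in $V$. First I would check that the assignment $V \mapsto \widetilde V$ is injective, order preserving, and satisfies $\widetilde{V_1 \cap V_2} = \widetilde V_1 \cap \widetilde V_2$ and $\widetilde{V_1 + V_2} = \widetilde V_1 + \widetilde V_2$, so that its image $\Ll$ is a sublattice of $\mathcal{M}(\F_q^{m \times n})$ and the assignment induces a lattice isomorphism $\mathcal{M}(\F_q^n) \cong \Ll$.

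Next I would note that $\dim \widetilde V = m \dim V$, so the restriction of $\dim$ to $\Ll$ transports under this isomorphism to the function $m\dim$ on $\mathcal{M}(\F_q^n)$, which is therefore strictly increasing and modular. By the very definition of $\Cc(V)$ as the largest subcode of $\Cc$ with rowspace contained in $V$, one has $\widetilde V \cap \Cc = \Cc(V)$, hence the rank function produced by Proposition~\ref{proposition:latroidcode} is, in $\mathcal{M}(\F_q^n)$ coordinates, $\dim \widetilde V - \dim(\widetilde V \cap \Cc) = m \dim V - \dim \Cc(V) = \rho_{\Cc}(V)$. Applying Proposition~\ref{proposition:latroidcode} to the sublattice $\Ll \subseteq \mathcal{M}(\F_q^{m \times n})$ with length function $\dim|_{\Ll}$ and code $\Cc$ then yields that $(\rho_{\Cc}, m\dim, \mathcal{M}(\F_q^n))$ is a $\Z$-latroid.

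I expect no serious obstacle: the proof reduces to the two lattice-theoretic identities $\widetilde{V_1 \cap V_2} = \widetilde V_1 \cap \widetilde V_2$ and $\widetilde{V_1 + V_2} = \widetilde V_1 + \widetilde V_2$, both of which are immediate from the characterization of $\widetilde V$ in terms of rowspaces. Alternatively, one could verify axioms L1--L5 directly on $(\rho_{\Cc}, m\dim, \mathcal{M}(\F_q^n))$: L1--L3 are immediate; the two halves of L4 follow from $\Cc(V_1) \subseteq \Cc(V_2)$ (for monotonicity) and from the embedding $\Cc(V_2)/\Cc(V_1) \hookrightarrow \F_q^{m \times (V_2/V_1)}$, which has dimension $m(\dim V_2 - \dim V_1)$ (for the upper bound); and L5 follows from $\Cc(V_1) \cap \Cc(V_2) = \Cc(V_1 \cap V_2)$ together with $\Cc(V_1) + \Cc(V_2) \subseteq \Cc(V_1 + V_2)$ combined with the modularity of $\dim$ on $\mathcal{M}(\F_q^{m \times n})$.
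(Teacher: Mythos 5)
Your proposal is correct, but it takes a genuinely different route from the paper, which simply verifies axioms L1--L5 directly for $\rho_{\Cc}$. Your reduction via the lattice embedding $V \mapsto \widetilde V = \{M \in \F_q^{m\times n} : \mathrm{rowsp}(M) \subseteq V\}$ is a clean way to exhibit the rank-metric latroid as an instance of the general construction in Proposition~\ref{proposition:latroidcode}, with $R = \F_q$ and the ambient module $\F_q^{m\times n} \cong \F_q^{mn}$: once one has checked that $V \mapsto \widetilde V$ is a lattice isomorphism onto its image (in particular $\widetilde{V_1 + V_2} = \widetilde V_1 + \widetilde V_2$, most easily seen by comparing dimensions after the inclusion $\supseteq$) and that $\widetilde V \cap \Cc = \Cc(V)$, the conclusion is immediate, and it also transparently explains where the factor $m$ in the length function comes from. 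The paper's direct verification is slightly more self-contained but uses essentially the same two observations, namely $\Cc(V_1)+\Cc(V_2)\subseteq\Cc(V_1+V_2)$ and $\Cc(V_1)\cap\Cc(V_2)=\Cc(V_1\cap V_2)$, just at the level of $\Cc$ rather than of the ambient space. One small slip in your concluding sketch of the direct route: you have the two halves of L4 swapped. The inclusion $\Cc(V_1) \subseteq \Cc(V_2)$, i.e.\ $\dim\Cc(V_1)\leq\dim\Cc(V_2)$, gives the \emph{upper} bound $\rho_{\Cc}(V_2)-\rho_{\Cc}(V_1)\leq m(\dim V_2-\dim V_1)$, while the embedding $\Cc(V_2)/\Cc(V_1)\hookrightarrow \Hom(V_2/V_1,\F_q^m)$ bounding $\dim\Cc(V_2)-\dim\Cc(V_1)$ by $m(\dim V_2-\dim V_1)$ gives the \emph{lower} bound $\rho_{\Cc}(V_2)\geq\rho_{\Cc}(V_1)$. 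This does not affect the main reduction argument, which is complete.
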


\begin{proof}
The axioms L1, L2, and L3 are trivially satisfied. For every $V_1\leq V_2\in\mathcal{M}(\F_q^n)$ we have $\Cc(V_1)\subseteq\Cc(V_2)$, hence
$\rho_{\Cc}(V_2)-\rho_{\Cc}(V_1)\leq m(\dim(V_2)-\dim(V_1))$. Moreover, it is not hard to prove that $\dim(\Cc(V_2))\leq\dim(\Cc(V_1))+m(\dim(V_2)-\dim(V_1))$, which implies that $\rho_{\Cc}(V_2)-\rho_{\Cc}(V_1)\geq0$. Hence, L4 is satisfied. Since $\Cc(V_1)+\Cc(V_2)\subseteq\Cc(V_1+V_2)$ and $\Cc(V_1)\cap\Cc(V_2)=\Cc(V_1\cap V_2)$, we have that the function $\rho_{\Cc}$ is submodular. 
\end{proof}

\begin{remark}\label{rmk:polym_paper}
In~\cite{gorla2020rank} the authors associate to a rank-metric code $\Cc$ the $q$-polymatroid $(\tilde\rho_{\Cc},\F_q^n)$, where $\tilde\rho_{\Cc}:\mathcal{M}(\F_q^n)\rightarrow\mathbb{Q}$ is defined as $$\tilde\rho_{\Cc}(V)=\frac{\dim(\Cc)-\dim(C(V^{\perp}))}{m}.$$
Since $(\tilde\rho_{\Cc},\F_q^n)$ is a $q$-polymatroid, then $(\tilde\rho_{\Cc},\dim,\mathcal{M}(\F_q^n))$ is a $\mathbb{Q}$-latroid. Although the latroid $(\rho_{\Cc},\dim,\mathcal{M}(\F_q^n))$ of Proposition~\ref{proposition:ranklatroid} and $(\tilde\rho_{\Cc},\dim,\mathcal{M}(\F_q^n))$ are different latroids, they express the same information. In fact, $$\tilde\rho_{\Cc}(V)=\frac{\rho_{\Cc}(V^{\perp})-m\dim(V^{\perp})+\dim(C)}{m}.$$
However, we find $\rho_{\Cc}$ to be a more natural choice. For example, the independent elements of $(\rho_{\Cc},m\dim,\mathcal{M}(\F_q^n))$ are the spaces $V$ for which there are no elements of $\Cc$ whose rowspace is contained in $V$. Moreover, the circuits of $(\rho_{\Cc},m\dim,\mathcal{M}(\F_q^n))$ correspond to minimal supports in~$\Cc$.
\end{remark}

The next proposition shows that the generalized rank weights of a rank-metric code are determined by those of the latroid associated to it. This is not surprising, given Remark~\ref{rmk:polym_paper} and the fact that~\cite[Theorem~7.1]{gorla2020rank} shows that the generalized rank weights of a rank metric code $\Cc$ are determined by $(\tilde\rho_{\Cc},\dim,\mathcal{M}(\F_q^n))$.
We refer to~\cite[Section~11.5]{gorla2021rank} for the definition of generalized rank weights.

\begin{proposition}\label{proposition:generalizedrankweights}
Let $m\geq n$ be positive integers. If $m>n$, then the generalized rank weights of a rank-metric code $\Cc\subseteq \F_q^{m\times n}$ are equal to the generalized weights of the associated latroid of Proposition~\ref{proposition:ranklatroid} multiplied by $m$, i.e.,
\begin{equation*}
md_r(\Cc)=d_r(\rho_{\Cc},m\dim,\mathcal{M}(\F_q^n)).
\end{equation*}
If $m=n$, then 
\begin{equation*}
nd_r(\Cc)=\min\{d_r(\rho_{\Cc},n\dim,\mathcal{M}(\F_q^n)),d_r(\rho_{\Cc^t},n\dim,\mathcal{M}(\F_q^n))\}.
\end{equation*}
\end{proposition}

\begin{proof}
For $m>n$ we have that
\begin{equation*}
    \begin{split}
         d_r(\rho_{\Cc},m\dim,\mathcal{M}(\F_q^n))&=\min_{V\in\mathcal{M}(\F_q^n)}\{m\dim(L):\dim(\Cc(L))\geq r\}=\\&=\min\{\dim(\mathcal{A}): \mathcal{A}\text{ is an optimal anticode and}\dim(\Cc\cap A)\geq r\}=md_r(\Cc),
    \end{split}
    \end{equation*}
    where the equality in the middle is due to the fact that every optimal anticode $\mathcal{A}$ is uniquely determined by its rowspace and $\dim(\mathcal{A})=m\dim(\mathrm{rowsp}(\mathcal{A}))$. The proof for $m=n$ follows from similar arguments.
\end{proof}

\subsection{Sum-rank metric codes}

In~\cite[Definition 41]{panja2023some} the authors introduced the concept of sum matroid in order to associate a combinatorial object to $\F_{q^m}$-linear sum-rank metric codes. Using latroids, we can extend their ideas to arbitrary sum-rank metric codes.

Given a sum-rank metric code $\Cc\subseteq \prod_{i=1}^{\ell}\F_q^{m_i\times n_i}$, define $\rho_{\Cc}:\Ll\rightarrow \mathbb{R}$ as
$$\rho_{\Cc}(L)=\lVert L\rVert-\dim(\Cc(L)),$$
where $\Cc(L)$ is the set of codewords of $\Cc$ with columnspace contained in $L=(V_1,\dots,V_{\ell})$ and $\lVert L\rVert=\sum_{i+1}^{\ell}m_i\dim(V_i)$. Proceeding as in Proposition~\ref{proposition:ranklatroid}, one can prove the following.

\begin{proposition}\label{proposition:sumranklatroid}
    The triple $(\rho_{\Cc},\lVert\cdot\rVert,\Ll)$ is a $\Z$-latroid.
\end{proposition}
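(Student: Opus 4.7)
The plan is to verify the five latroid axioms L1--L5 by adapting the argument of Proposition~\ref{proposition:ranklatroid} componentwise. First I would spell out the lattice structure: $\Ll = \prod_{i=1}^{\ell} \mathcal{M}(\F_q^{n_i})$ with componentwise join and meet, so that $L \vee L' = (V_1+V_1',\dots,V_\ell+V_\ell')$ and $L \wedge L' = (V_1 \cap V_1',\dots,V_\ell\cap V_\ell')$. The length function $\lVert L\rVert = \sum_{i=1}^{\ell} m_i \dim(V_i)$ is then a positive linear combination of modular strictly increasing functions on each factor, so it is itself modular and strictly increasing on $\Ll$, taking the value $0$ at $0_\Ll$. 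This immediately settles L1, L2, and L3.

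For L4, given $L_1 \leq L_2$ in $\Ll$, the inclusion $\Cc(L_1) \subseteq \Cc(L_2)$ gives $\dim(\Cc(L_1)) \leq \dim(\Cc(L_2))$, so $\rho_\Cc(L_2) - \rho_\Cc(L_1) \leq \lVert L_2\rVert - \lVert L_1\rVert$. The more delicate half is $\rho_\Cc(L_2) - \rho_\Cc(L_1) \geq 0$, equivalently $\dim(\Cc(L_2)) - \dim(\Cc(L_1)) \leq \lVert L_2\rVert - \lVert L_1\rVert$. This is the generalization of the corresponding step in Proposition~\ref{proposition:ranklatroid}. I would reduce it to a dimension count block by block: write $L_2 = (V_1^{(2)},\dots,V_\ell^{(2)})$ and $L_1 = (V_1^{(1)},\dots,V_\ell^{(1)})$ with $V_i^{(1)} \leq V_i^{(2)}$, then observe that the quotient $\Cc(L_2)/\Cc(L_1)$ embeds, via projection, into $\bigoplus_i \Hom(\F_q^{n_i}/V_i^{(1)}, \F_q^{m_i})$ restricted to maps factoring through $V_i^{(2)}/V_i^{(1)}$; this has dimension $\sum_i m_i(\dim V_i^{(2)} - \dim V_i^{(1)}) = \lVert L_2\rVert - \lVert L_1\rVert$.

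For L5, the key inclusions are $\Cc(L_1) + \Cc(L_2) \subseteq \Cc(L_1 \vee L_2)$ and $\Cc(L_1) \cap \Cc(L_2) = \Cc(L_1 \wedge L_2)$, both holding componentwise. Combining with Grassmann's formula for $\Cc(L_1)$ and $\Cc(L_2)$, and the modularity of $\lVert\cdot\rVert$, one computes
\begin{equation*}
\begin{split}
\rho_\Cc(L_1 \vee L_2) + \rho_\Cc(L_1 \wedge L_2)
&\leq \lVert L_1\vee L_2\rVert + \lVert L_1\wedge L_2\rVert - \dim(\Cc(L_1)+\Cc(L_2)) - \dim(\Cc(L_1)\cap\Cc(L_2)) \\
&= \lVert L_1\rVert + \lVert L_2\rVert - \dim(\Cc(L_1)) - \dim(\Cc(L_2)) = \rho_\Cc(L_1) + \rho_\Cc(L_2),
\end{split}
\end{equation*}
which establishes submodularity.

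The main obstacle is the second inequality in L4, which is exactly the point where the sum-rank setting needs a careful componentwise dimension bound rather than a direct application of a scalar inequality; once that bound is in place, the remaining axioms reduce cleanly to the componentwise vector-space facts already exploited in Proposition~\ref{proposition:ranklatroid}.
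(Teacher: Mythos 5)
Your proof is correct and follows essentially the same route as the paper, which simply refers the reader back to the argument of Proposition~\ref{proposition:ranklatroid}: L1--L3 from modularity and strict monotonicity of $\lVert\cdot\rVert$, L4 by combining $\Cc(L_1)\subseteq\Cc(L_2)$ with a blockwise dimension count for $\Cc(L_2)/\Cc(L_1)$, and L5 from the inclusions $\Cc(L_1)+\Cc(L_2)\subseteq\Cc(L_1\vee L_2)$ and $\Cc(L_1)\cap\Cc(L_2)=\Cc(L_1\wedge L_2)$ together with Grassmann's formula and the modularity of $\lVert\cdot\rVert$. One small wording issue: the target of the embedding in L4 is more naturally written blockwise as $\mathrm{Hom}\bigl(\F_q^{m_i}, V_i^{(2)}/V_i^{(1)}\bigr)$ (reduce each row modulo $V_i^{(1)}$), which has exactly dimension $m_i\bigl(\dim V_i^{(2)}-\dim V_i^{(1)}\bigr)$; your phrasing of the Hom space is slightly off, but the dimension count and the conclusion are correct.
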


The next proposition makes the relation between the $\mathbb{Q}$-latroids of a sequence of rank metric codes and the $\Z$-latroid of their direct product explicit.

\begin{proposition}
    Let $\Cc\subseteq \prod_{i=1}^{\ell}\F_q^{m_i\times n_i}$ be a sum rank-metric code. If $\Cc=\prod_{i=1}^{\ell}\Cc_i$, then
    \begin{equation*}
        (\rho_{\Cc},\lVert\cdot\rVert,\Ll)=\bigoplus_{i=1}^{\ell}m_i(\rho_{\Cc_i},\dim,\mathcal{M}(\F_q^{n_i})),
    \end{equation*}
    where $m(\rho,\lVert\cdot\rVert,\Ll)$ denotes the direct dum of the latroid $(\rho,\lVert\cdot\rVert,\Ll)$ with itself $m$ times.
\end{proposition}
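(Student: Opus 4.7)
The plan is to verify that the two latroids in the statement carry the same data: lattice, length function, and rank function. First I would unpack the right-hand side. Using the diagonal identification $V\mapsto (V,\dots,V)$, the $m_i$-fold direct sum $m_i(\rho_{\Cc_i},\dim,\mathcal{M}(\F_q^{n_i}))$ corresponds, on the diagonal sublattice of $\mathcal{M}(\F_q^{n_i})^{m_i}$, to the scaled latroid $(m_i\rho_{\Cc_i},m_i\dim,\mathcal{M}(\F_q^{n_i}))$. Direct-summing over $i$ then yields a latroid on $\Ll=\prod_{i=1}^{\ell}\mathcal{M}(\F_q^{n_i})$ whose length function sends $(V_1,\dots,V_\ell)$ to $\sum_{i}m_i\dim(V_i)$, agreeing with $\lVert (V_1,\dots,V_\ell)\rVert$ by definition, and whose rank function sends $(V_1,\dots,V_\ell)$ to $\sum_{i}m_i\rho_{\Cc_i}(V_i)=\sum_{i}\bigl(m_i\dim(V_i)-\dim(\Cc_i(V_i))\bigr)$, by Proposition~\ref{proposition:ranklatroid}.

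With the length functions already matched, the content of the proposition reduces to the vector-space identity
\begin{equation*}
\dim(\Cc(L))=\sum_{i=1}^{\ell}\dim(\Cc_i(V_i))
\end{equation*}
for every $L=(V_1,\dots,V_\ell)\in\Ll$. Indeed, substituting this into $\rho_{\Cc}(L)=\lVert L\rVert-\dim(\Cc(L))$ and comparing with the rank function computed above gives exactly the desired equality $\rho_{\Cc}(L)=\sum_i m_i\rho_{\Cc_i}(V_i)$.

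The identity is a direct consequence of the product structure $\Cc=\prod_{i=1}^{\ell}\Cc_i$: a codeword $c=(c_1,\dots,c_\ell)\in\Cc$ has blockwise column space contained in $L$ if and only if each block $c_i$ has column space contained in $V_i$, equivalently $c_i\in\Cc_i(V_i)$ for every $i$. Hence $\Cc(L)=\prod_{i=1}^{\ell}\Cc_i(V_i)$ as $\F_q$-vector spaces, so taking dimensions yields $\dim(\Cc(L))=\sum_i\dim(\Cc_i(V_i))$.

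The main obstacle here is purely notational: making precise what the scalar multiple $m_i(\rho_{\Cc_i},\dim,\mathcal{M}(\F_q^{n_i}))$ denotes and identifying it with the single-lattice latroid $(m_i\rho_{\Cc_i},m_i\dim,\mathcal{M}(\F_q^{n_i}))$ via the diagonal embedding, so that the two sides of the claimed equality live on the same lattice $\Ll$. Once this dictionary is fixed, no further argument is required beyond the product computation and the comparison of rank formulas above.
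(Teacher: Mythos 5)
Your proof is correct and follows essentially the same route as the paper's: both rest on the observation that $\Cc=\prod_i\Cc_i$ forces $\Cc(L)=\prod_i\Cc_i(V_i)$, from which $\rho_{\Cc}=\sum_i m_i\rho_{\Cc_i}$ and $\lVert\cdot\rVert=\sum_i m_i\dim(\cdot)$ follow, and one concludes by the definition of direct sum. Where you add something is in confronting the notational mismatch that the paper glosses over: taken literally, the $m_i$-fold direct sum lives on $\mathcal{M}(\F_q^{n_i})^{m_i}$, not on $\mathcal{M}(\F_q^{n_i})$, so the equality as written compares latroids on different lattices; your diagonal-sublattice identification, which turns the iterated direct sum into the scaled latroid $(m_i\rho_{\Cc_i},m_i\dim,\mathcal{M}(\F_q^{n_i}))$, is exactly the dictionary needed to make the statement literally well posed, and it is consistent with how the paper's own proof silently treats $m_i(\cdot)$ as scalar multiplication of the rank and length functions. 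The rest of your argument matches the paper's step for step.
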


\begin{proof}
    If $\Cc=\prod_{i=1}^{\ell}\Cc_i$, then we have $\Cc(L)=\prod_{i=1}^{\ell}\Cc_i(L_i)$. So, we obtain $\rho_{\Cc}=\sum_{i=1}^{\ell}m_i\rho_{\Cc_i}$ and $\lVert\cdot\rVert=\sum_{i=1}^{\ell}m_i\dim(\cdot)$. We conclude by applying the definition of direct sum.
\end{proof}

The generalized weights of a sum-rank metric code are determined by the associated latroid, but not by the generalized weights of the associated latroid. In fact, given a sum-rank metric code $\Cc\subseteq \prod_{i=1}^{\ell}\F_q^{m_i\times n_i}$, we have
\begin{equation*}
    \begin{split}
        &d_r(\rho_{\Cc},\lVert\cdot\rVert,\Ll)=\{\lVert L\rVert:\dim(\Cc(L))\geq r\}=\\
        &=\{\dim(\mathcal{A}):\mathcal{A}=\mathcal{A}_1\times\dots\times \mathcal{A}_\ell \mbox{ where }\mathcal{A}_i\subseteq\F_q^{m_i\times n_i}\mbox{are o.a. and} \dim(\Cc \cap \mathcal{A})\geq r\},
    \end{split}
\end{equation*}
while following~\cite[Definition VI.1]{ourpaper} we obtain
$$d_r(\Cc)=\{\mathrm{maxsrk}(\mathcal{A}):\mathcal{A}=\mathcal{A}_1\times\dots\times \mathcal{A}_\ell \mbox{ where }\mathcal{A}_i\subseteq\F_q^{m_i\times n_i}\mbox{are o.a. and} \dim(\Cc \cap \mathcal{A})\geq r\}.$$
As often happens for sum-rank metric codes, we can prove an equality in the case where all the $m_i$'s are equal. The proof is analogous to that of Proposition~\ref{proposition:generalizedrankweights}.

\begin{proposition}
    If $m=m_1=\dots=m_{\ell}$ and $m_i>n_i$ for all $i\in[\ell]$, then the generalized sum-rank weights of a sum-rank metric code $\Cc\subseteq \prod_{i=1}^{\ell}\F_q^{m_i\times n_i}$ are equal to the generalized weights of the associated latroid of Proposition~\ref{proposition:sumranklatroid} multiplied by $m$, i.e.,
    \begin{equation*}
        d_r(\rho_{\Cc},\lVert\cdot\rVert,\Ll)=md_r(\Cc).
    \end{equation*}
\end{proposition}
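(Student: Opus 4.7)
The plan is to observe that both quantities are minima over the same collection of objects, so the equality reduces to a pointwise comparison $\dim(\mathcal{A}) = m\cdot\mathrm{maxsrk}(\mathcal{A})$ for every product optimal anticode $\mathcal{A} = \mathcal{A}_1\times\cdots\times\mathcal{A}_\ell$. This, in turn, splits coordinate by coordinate and reduces to the rank-metric fact used in Proposition~\ref{proposition:generalizedrankweights}.

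First I would unfold the two minima as stated in the paragraph preceding the proposition:
\begin{equation*}
d_r(\rho_\Cc,\lVert\cdot\rVert,\Ll) = \min\{\dim(\mathcal{A}) : \mathcal{A}=\mathcal{A}_1\times\cdots\times\mathcal{A}_\ell,\ \mathcal{A}_i\text{ o.a.},\ \dim(\Cc\cap\mathcal{A})\geq r\}
\end{equation*}
and similarly for $d_r(\Cc)$ with $\dim(\mathcal{A})$ replaced by $\mathrm{maxsrk}(\mathcal{A})$. Both minima range over exactly the same family of product optimal anticodes, so it suffices to show that for every such $\mathcal{A}$ we have $\dim(\mathcal{A}) = m\cdot\mathrm{maxsrk}(\mathcal{A})$.

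Next I would use the hypothesis $m_i = m > n_i$ for each $i$. Under this strict inequality, the optimal anticodes of $\F_q^{m\times n_i}$ are precisely the spaces of matrices whose rowspace lies in a fixed subspace $V_i\subseteq\F_q^{n_i}$; moreover any such $\mathcal{A}_i$ satisfies $\dim(\mathcal{A}_i) = m\cdot\dim(V_i) = m\cdot\mathrm{maxrk}(\mathcal{A}_i)$. This is the same fact invoked at the end of the proof of Proposition~\ref{proposition:generalizedrankweights}, and it is the only place where the assumption $m > n_i$ is used. Summing over $i$ and using the definition $\mathrm{maxsrk}(\mathcal{A}) = \sum_{i=1}^{\ell}\mathrm{maxrk}(\mathcal{A}_i)$, one obtains
\begin{equation*}
\dim(\mathcal{A}) = \sum_{i=1}^{\ell}\dim(\mathcal{A}_i) = \sum_{i=1}^{\ell} m\cdot\mathrm{maxrk}(\mathcal{A}_i) = m\cdot\mathrm{maxsrk}(\mathcal{A}),
\end{equation*}
and substituting this into the two expressions above yields $d_r(\rho_\Cc,\lVert\cdot\rVert,\Ll) = m\cdot d_r(\Cc)$.

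There is essentially no obstacle here beyond invoking the correct characterization of optimal anticodes in a single block $\F_q^{m\times n_i}$ when $m > n_i$; all the combinatorial content is packaged in that classical statement, and the sum-rank extension is just an additive decomposition across the $\ell$ factors, compatible with the direct-sum structure of $\Ll$ and of $\lVert\cdot\rVert$. The hypothesis that all the $m_i$ coincide is what allows the common factor $m$ to come out of the sum; without it, one would only get the equality $\dim(\mathcal{A}) = \sum_i m_i\cdot\mathrm{maxrk}(\mathcal{A}_i)$, which in general does not factor as a scalar multiple of $\mathrm{maxsrk}(\mathcal{A})$, explaining why in the general sum-rank case the latroid invariant $d_r(\rho_\Cc,\lVert\cdot\rVert,\Ll)$ need not be proportional to $d_r(\Cc)$.
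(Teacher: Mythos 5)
Your proof is correct and follows essentially the same route the paper intends: the paper's own proof simply says ``see the proof of Proposition~\ref{proposition:generalizedrankweights},'' and your argument is exactly the natural adaptation of that proof to the product setting, reducing both minima to the same family of product optimal anticodes and proving the pointwise identity $\dim(\mathcal{A})=m\,\mathrm{maxsrk}(\mathcal{A})$ factorwise using $m>n_i$.
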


\bibliographystyle{plain}

\end{document}